\title{Instruction Sequence Based \\ Non-uniform Complexity Classes}
\author{J.A. Bergstra \and C.A. Middelburg}
\institute{Informatics Institute, Faculty of Science, University of
           Amsterdam, \\
           Science Park~904, 1098~XH Amsterdam, the Netherlands \\
           \email{J.A.Bergstra@uva.nl,C.A.Middelburg@uva.nl}}
\begin{document}
\maketitle

\begin{abstract}
%% 117 %%
We present an approach to non-uniform complexity in which single-pass 
instruction sequences play a key part, and answer various questions that
arise from this approach.
We introduce several kinds of non-uniform complexity classes.  
One kind includes a counterpart of the well-known non-uniform complexity 
class \PTpoly\ and another kind includes a counterpart of the well-known 
non-uniform complexity class \NPTpoly.
Moreover, we introduce a general notion of completeness for the 
non-uniform complexity classes of the latter kind.
We also formulate a counterpart of the well-known complexity theoretic 
conjecture that $\NPT \not\subseteq \PTpoly$.
We think that the presented approach opens up an additional way of
investigating issues concerning non-uniform complexity.
\begin{keywords} 
non-uniform complexity class, single-pass instruction sequence, 
projective Boolean function family
\end{keywords}%
\begin{classcode}
F.1.1, F.1.3.
\end{classcode}
\end{abstract}

\section{Introduction}
\label{sect-intro}

The aim of this paper is to draw attention to an approach to non-uniform 
complexity which is based on the simple idea that each Boolean function 
can be computed by a single-pass instruction sequence that contains only 
instructions to read and write the contents of Boolean registers, 
forward jump instructions, and a termination instruction.

In the first place, we introduce a kind of non-uniform complexity 
classes which includes a counterpart of the classical non-uniform
complexity class \PTpoly\ and formulate a counterpart of the well-known
complexity theoretic conjecture that $\NPT \not\subseteq \PTpoly$.
Some evidence for this conjecture is the Karp-Lipton 
theorem~\cite{KL80a}.
\begin{comment}
which says that the polynomial time hierarchy collapses to the second 
level if $\NPT \subseteq \PTpoly$
If it is right, then the conjecture that $\PT \neq \NPT$ is right as 
well.
\end{comment}
The counterpart of the conjecture formulated in this paper is called the 
non-uniform super-polynomial complexity conjecture.
The counterpart of \PTpoly\ is denoted by \PLIS.

Over and above that, we introduce a kind of non-uniform complexity 
classes which includes a counterpart of the non-uniform complexity class 
\NPTpoly\ and introduce a general notion of completeness for the 
complexity classes of this kind. 
This general notion of completeness is defined using reducibility 
relations that can be regarded as non-uniform variants of the 
\pagebreak[2]
reducibility relation in terms of which \NPT-completeness is usually 
defined. 
The counterpart of \NPTpoly\ is denoted by \NPLIS.

We show among other things that the complexity classes \PTpoly\ and 
\NPTpoly\ coincide with the complexity classes \PLIS\ and \NPLIS, 
respectively, and that a problem closely related to \iiiSAT, and used to 
formulate the counterpart of the conjecture that 
$\NPT \not\subseteq \PTpoly$, is \NPT-complete and \NPLIS-complete.

In computer science, the meaning of programs usually plays a prominent
part in the explanation of many issues concerning programs.
Moreover, what is taken for the meaning of programs is mathematical by
nature.
Yet, it is customary that practitioners do not fall back on the 
mathematical meaning of programs in case explanation of issues 
concerning programs is needed.
They phrase their explanations from an empirical perspective.
An empirical perspective that we consider appealing is the perspective
that a program is in essence an instruction sequence and an instruction
sequence under execution produces a behaviour that is controlled by its
execution environment in the sense that each step of the produced
behaviour actuates the processing of an instruction by the execution 
environment and a reply returned at completion of the processing 
determines how the behaviour proceeds.

An attempt to approach the semantics of programming languages from
the perspective mentioned above is made in~\cite{BL02a}.
The groundwork for the approach is an algebraic theory of single-pass
instruction sequences, called program algebra, and an algebraic theory
of mathematical objects that represent the behaviours produced by 
instruction sequences under execution, called basic thread algebra.
The main advantages of the approach are that it does not require a lot
of mathematical background and that it is more appealing to
practitioners than the main approaches to programming language
semantics.

As a continuation of the work on the above-mentioned approach to
programming language semantics, the notion of an instruction sequence
was subjected to systematic and precise analysis using the groundwork
laid earlier.
This led among other things to expressiveness results about the
instruction sequences considered and variations of the instruction
sequences considered (see e.g.~\cite{BM08h,BP09a}). \linebreak[2]
As another continuation of the work on the above-mentioned approach to
programming language semantics, selected issues relating to well-known
subjects from the theory of computation and the area of computer
architecture were rigorously investigated thinking in terms of
instruction sequences (see e.g.~\cite{BM09i,BM09k}).
The general aim of the work in both continuations mentioned is to bring 
instruction sequences as a theme in computer science better into the 
picture. 
The work presented in this paper forms a part of the last mentioned 
continuation.

The starting-point of program algebra is the perception of a program as
a single-pass instruction sequence, i.e.\ a finite or infinite sequence
of instructions of which each instruction is executed at most once and
can be dropped after it has been executed or jumped over.
This perception is simple, appealing, and links up with practice.
The concepts underlying the primitives of program algebra are common in
programming, but the particular form of the primitives is not common.
The predominant concern in the design of program algebra has been to
achieve simple syntax and semantics, while maintaining the expressive
power of arbitrary finite control.

The objects considered in basic thread algebra represent in a direct way 
the behaviours produced by instruction sequences under execution: upon 
each action performed by such an object, a reply from an execution 
environment, which takes the action as an instruction to be processed, 
determines how it proceeds.
The objects concerned are called threads.
A thread may make use of services, i.e.\ components of the execution
environment.
Once introduced into threads and services, it is rather obvious that
each Turing machine can be simulated by means of a thread that makes use
of a service.
The thread and service correspond to the finite control and tape of the
Turing machine.
\begin{comment}
Such simulation is also possible for other kinds of machines that have 
been proposed as a computational model, such as register machines or 
multi-stack machines.
\end{comment}

The approach to complexity followed in this paper is not suited to
uniform complexity.
This is not considered a great drawback.
Non-uniform complexity is the relevant notion of complexity when
studying what looks to be the major complexity issue in practice: the
scale-dependence of what is an efficient solution for a computational
problem.

This paper is organized as follows.
First, we survey program algebra and basic thread algebra
(Section~\ref{sect-PGA-and-BTA}).
Next, we survey an extension of basic thread algebra concerning the 
interaction of threads with services and give a description of Boolean 
register services 
(Sections~\ref{sect-TSI} and~\ref{sect-Boolean-register}).
Then, we introduce the kind of complexity classes that includes \PLIS\ 
and formulate the non-uniform super-polynomial complexity conjecture
(Sections~\ref{sect-complclass-PLIS}, \ref{sect-boolform-and-boolcirc} 
and~\ref{sect-conjecture}).
After that, we introduce the kind of complexity classes that includes 
\NPLIS\ and the notion of completeness for the non-uniform complexity 
classes of this kind
(Sections~\ref{sect-complclass-NPLIS} and~\ref{sect-NPLIS-compl}).
We also introduce two additional kinds of complexity classes suggested 
by the problem closely related to \iiiSAT\ that is used earlier
(Section~\ref{sect-projective}).
Finally, we make some concluding remarks (Section~\ref{sect-concl}).

Some familiarity with classical computational complexity is assumed.
The relevant notions are explained in many textbooks, 
including~\cite{AB09a,BDG88a,Gol08a}.
Their precise definitions in different publications differ slightly.
The definitions of classical notions on which some results in this paper
are based are the ones from Chapters~1, 2 and~6 of~\cite{AB09a}.

This paper supersedes~\cite{BM08g} and Section~5.2 of~\cite{BM12b} in 
several respects.
Generalization of the definitions of the complexity classes \PLIS\ and
\NPLIS\ has put these complexity classes into a broader context, and a 
major technical change has made it possible to simplify the material 
that is concerned with the complexity class \NPLIS.
Moreover, two additional kinds of complexity classes are introduced, and 
various additional results are given.

\section{Program Algebra and Basic Thread Algebra}
\label{sect-PGA-and-BTA}

In this section, we survey \PGA\ (ProGram Algebra) and \BTA\ (Basic 
Thread Algebra) and make precise in the setting of \BTA\ which 
behaviours are produced on execution by the instruction sequences 
considered in \PGA.

In \PGA, it is assumed that there is a fixed but arbitrary set $\BInstr$
of \emph{basic instructions}.
The intuition is that the execution of a basic instruction may modify a 
state and produces a reply at its completion.
The possible replies are $\True$ and $\False$.
The actual reply is generally state-dependent.
Therefore, successive executions of the same basic instruction may
produce different replies.
The set $\BInstr$ is the basis for the set of instructions that may 
occur in the instruction sequences considered in \PGA.
The elements of the latter set are called \emph{primitive instructions}.

\PGA\ has the following primitive instructions:
\begin{itemize}
\item
for each $a \in \BInstr$, a \emph{plain basic instruction} $a$;
\item
for each $a \in \BInstr$, a \emph{positive test instruction} $\ptst{a}$;
\item
for each $a \in \BInstr$, a \emph{negative test instruction} $\ntst{a}$;
\item
for each $l \in \Nat$, a \emph{forward jump instruction} $\fjmp{l}$;
\item
a \emph{termination instruction} $\halt$.
\end{itemize}
We write $\PInstr$ for the set of all primitive instructions.

On execution of an instruction sequence, these primitive instructions
have the following effects:
\begin{itemize}
\item
the effect of a positive test instruction $\ptst{a}$ is that basic
instruction $a$ is executed and execution proceeds with the next
primitive instruction if $\True$ is produced and otherwise the next
primitive instruction is skipped and execution proceeds with the
primitive instruction following the skipped one --- if there is no
primitive instruction to proceed with,
inaction occurs;
\item
the effect of a negative test instruction $\ntst{a}$ is the same as
the effect of $\ptst{a}$, but with the role of the value produced
reversed;
\item
the effect of a plain basic instruction $a$ is the same as the effect
of $\ptst{a}$, \linebreak[2] but execution always proceeds as if $\True$ 
is produced;
\item
the effect of a forward jump instruction $\fjmp{l}$ is that execution
proceeds with the $l$th next primitive instruction of the instruction
sequence concerned --- if $l$ equals $0$ or there is no primitive
instruction to proceed with, inaction occurs;
\item
the effect of the termination instruction $\halt$ is that execution 
terminates.
\end{itemize}

\PGA\ has one sort: the sort $\InSeq$ of \emph{instruction sequences}. 
We make this sort explicit to anticipate the need for many-sortedness
later on.
To build terms of sort $\InSeq$, \PGA\ has the following constants and 
operators:
\begin{itemize}
\item
for each $u \in \PInstr$, 
the \emph{instruction} constant $\const{u}{\InSeq}$\,;
\item
the binary \emph{concatenation} operator 
$\funct{\ph \conc \ph}{\InSeq \x \InSeq}{\InSeq}$\,;
\item
the unary \emph{repetition} operator 
$\funct{\ph\rep}{\InSeq}{\InSeq}$\,.
\end{itemize}
Terms of sort $\InSeq$ are built as usual.
Throughout the paper, we assume that there are infinitely many variables 
of sort $\InSeq$, including $X,Y,Z$.
We use infix notation for concatenation and postfix notation for
repetition.

A closed \PGA\ term is considered to denote a non-empty, finite or
eventually periodic infinite sequence of primitive instructions.%
\footnote
{An eventually periodic infinite sequence is an infinite sequence with
 only finitely many distinct suffixes.}
The instruction sequence denoted by a closed term of the form
$P \conc Q$ is the instruction sequence denoted by $P$
concatenated with the instruction sequence denoted by $Q$.
The instruction sequence denoted by a closed term of the form $P\rep$
is the instruction sequence denoted by $P$ concatenated infinitely
many times with itself.

Closed \PGA\ terms are considered equal if they represent the same
instruction sequence.
The axioms for instruction sequence equivalence are given in
Table~\ref{axioms-PGA}.%
\begin{table}[!tb]
\caption{Axioms of \PGA}
\label{axioms-PGA}
\begin{eqntbl}
\begin{axcol}
(X \conc Y) \conc Z = X \conc (Y \conc Z)              & \axiom{PGA1} \\
(X^n)\rep = X\rep                                      & \axiom{PGA2} \\
X\rep \conc Y = X\rep                                  & \axiom{PGA3} \\
(X \conc Y)\rep = X \conc (Y \conc X)\rep              & \axiom{PGA4}
\end{axcol}
\end{eqntbl}
\end{table}
In this table, $n$ stands for an arbitrary natural number greater than
$0$.
For each $n > 0$, the term $P^n$, where $P$ is a \PGA\ term, is defined 
by induction on $n$ as follows: $P^1 = P$ and $P^{n+1} = P \conc P^n$.
The \emph{unfolding} equation $X\rep = X \conc X\rep$ is
derivable.
Each closed \PGA\ term is derivably equal to a term in
\emph{canonical form}, i.e.\ a term of the form $P$ or $P \conc Q\rep$,
where $P$ and $Q$ are closed \PGA\ terms in which the repetition
operator does not occur.

A typical model of \PGA\ is the model in which:
\begin{itemize}
\item
the domain is the set of all finite and eventually periodic infinite
sequences over the set $\PInstr$ of primitive instructions;
\item
the operation associated with ${} \conc {}$ is concatenation;
\item
the operation associated with ${}\rep$ is the operation ${}\srep$
defined as follows:

\begin{itemize}
\item
if $U$ is a finite sequence, then $U\srep$ is the unique eventually
periodic infinite sequence $U'$ such that $U$ concatenated $n$ times 
with itself is a proper prefix of $U'$ for each $n \in \Nat$;
\item
if $U$ is an eventually periodic infinite sequence, then $U\srep$ is
$U$.
\end{itemize}
\end{itemize}
To simplify matters, we confine ourselves to this model of \PGA, which
is an initial model of \PGA, for the interpretation of \PGA\ terms.
In the sequel, we use the term \emph{instruction sequence} for the 
elements of the domain of this model, and we denote the interpretations
of the constants and operators in this model by the constants and 
operators themselves.

In the remainder of this paper, we consider instruction sequences that
can be denoted by closed \PGA\ terms in which the repetition operator
does not occur.
Below, we will make precise which behaviours are produced by
instruction sequences that can be denoted by closed \PGA\ terms in
which the repetition operator does not occur.

First, we survey \BTA, an algebraic theory of mathematical objects which 
represent in a direct way the behaviours produced by instruction 
sequences under execution.

In \BTA, it is assumed that a fixed but arbitrary set $\BAct$ of
\emph{basic actions}, with $\Tau \not\in \BAct$, has been given.
Besides, $\Tau$ is a special basic action.
We write $\BActTau$ for $\BAct \union \set{\Tau}$.

The objects considered in \BTA\ are called threads.
A thread represents a behaviour which consists of performing basic 
actions in a sequential fashion.
Upon each basic action performed, a reply from an execution environment
determines how the thread proceeds.
The possible replies are the Boolean values $\True$ and $\False$.
Performing $\Tau$, which is considered performing an internal action,
will always lead to the reply $\True$.

\BTA\ has one sort: the sort $\Thr$ of \emph{threads}. 
We make this sort explicit to anticipate the need for many-sortedness
later on.
To build terms
of sort $\Thr$, \BTA\ has the following constants and operators:
\begin{itemize}
\item
the \emph{inaction} constant $\const{\DeadEnd}{\Thr}$;
\item
the \emph{termination} constant $\const{\Stop}{\Thr}$;
\item
for each $\alpha \in \BActTau$, the binary \emph{postconditional
composition} operator 
$\funct{\pcc{\ph}{\alpha}{\ph}}{\linebreak[2] \Thr \x \Thr}{\Thr}$.
\end{itemize}
Terms of sort $\Thr$ are built as usual. Throughout the paper, we
assume that there are infinitely many variables of sort $\Thr$,
including $x,y,z$.
We use infix notation for postconditional composition. 

We introduce \emph{basic action prefixing} as an abbreviation: 
$\alpha \bapf p$, where $p$ is a \BTA\ term, abbreviates 
$\pcc{p}{\alpha}{p}$.
We identify expressions of the form $\alpha \bapf p$ with the \BTA\
term they stand for.

The thread denoted by a closed term of the form $\pcc{p}{\alpha}{q}$
will first perform $\alpha$, and then proceed as the thread denoted by
$p$ if the reply from the execution environment is $\True$ and proceed
as the thread denoted by $q$ if the reply from the execution
environment is $\False$. The thread denoted by $\DeadEnd$ will become
inactive and the thread denoted by $\Stop$ will terminate.

\BTA\ has only one axiom.
This axiom is given in Table~\ref{axioms-BTA}.%
\begin{table}[!t]
\caption{Axiom of \BTA} \label{axioms-BTA}
\begin{eqntbl}
\begin{axcol}
\pcc{x}{\Tau}{y} = \pcc{x}{\Tau}{x}                      & \axiom{T1}
\end{axcol}
\end{eqntbl}
\end{table}
Using the abbreviation introduced above, axiom T1 can be written as
follows: $\pcc{x}{\Tau}{y} = \Tau \bapf x$.

Each closed \BTA\ term denotes a finite thread, i.e.\ a thread with a
finite upper bound to the number of basic actions that it can perform.
Infinite threads, i.e.\ threads without a finite upper bound to the
number of basic actions that it can perform, can be defined by means of 
a set of recursion equations (see e.g.~\cite{BM07a}).
Regular threads, i.e.\ finite or infinite threads that can only be in a 
finite number of states, can be defined by means of a finite set of 
recursion equations.

The behaviours of the instruction sequences denoted by closed \PGA\
terms are considered to be regular threads, with the basic instructions
taken for basic actions.
All regular threads in which $\Tau$ does not occur represent behaviours 
of instruction sequences that can be denoted by closed \PGA\ terms
(see Proposition~2 in~\cite{PZ06a}).
Closed \PGA\ terms in which the repetition operator does not occur
correspond to finite threads.

Henceforth, we will write \PGAfin\ for \PGA\ without the repetition
operator and axioms PGA2--PGA4, and we will write $\FIS$ for the set of
all instruction sequences that can be denoted by closed \PGAfin\ terms.
Moreover, we will write $\psize(U)$, where $U \in \FIS$, for the length
of $U$.

We combine \PGAfin\ with \BTA\ and extend the combination with
the \emph{thread extraction} operator $\funct{\extr{\ph}}{\InSeq}{\Thr}$ 
and the axioms given in Table~\ref{axioms-thread-extr}.%
\begin{table}[!tb]
\caption{Axioms for the thread extraction operator}
\label{axioms-thread-extr}
\begin{eqntbl}
\begin{eqncol}
\extr{a} = a \bapf \DeadEnd \\
\extr{a \conc X} = a \bapf \extr{X} \\
\extr{\ptst{a}} = a \bapf \DeadEnd \\
\extr{\ptst{a} \conc X} =
\pcc{\extr{X}}{a}{\extr{\fjmp{2} \conc X}} \\
\extr{\ntst{a}} = a \bapf \DeadEnd \\
\extr{\ntst{a} \conc X} =
\pcc{\extr{\fjmp{2} \conc X}}{a}{\extr{X}}
\end{eqncol}
\qquad
\begin{eqncol}
\extr{\fjmp{l}} = \DeadEnd \\
\extr{\fjmp{0} \conc X} = \DeadEnd \\
\extr{\fjmp{1} \conc X} = \extr{X} \\
\extr{\fjmp{l+2} \conc u} = \DeadEnd \\
\extr{\fjmp{l+2} \conc u \conc X} = \extr{\fjmp{l+1} \conc X} \\
\extr{\halt} = \Stop \\
\extr{\halt \conc X} = \Stop
\end{eqncol}
\end{eqntbl}
\end{table}
In this table, $a$ stands for an arbitrary basic instruction from 
$\BInstr$, $u$ stands for an arbitrary primitive instruction from 
$\PInstr$, and $l$ stands for an arbitrary natural number.

For each closed \PGAfin\ term $P$, $\extr{P}$ denotes the behaviour  
produced by the instruction sequence denoted by $P$ under execution.
The use of a closed \PGAfin\ term is sometimes preferable to the use of
the corresponding closed \BTA\ term because thread extraction can give
rise to a combinatorial explosion.
For instance, suppose that $p$ is a closed \BTA\ term such that
\begin{ldispl}
p = 
\extr{\overbrace{\ptst{a} \conc \ptst{b} \conc \ldots \conc
                 \ptst{a} \conc \ptst{b}}^{k\; \x} {} \conc
                 c \conc \halt}\;.
\end{ldispl}%
Then the size of $p$ is greater than $2^{k/2}$.

\section{Interaction of Threads with Services}
\label{sect-TSI}

A thread may perform a basic action for the purpose of requesting a
named service provided by an execution environment to process a method
and to return a reply to the thread at completion of the processing of
the method.
In this section, we survey the extension of \BTA\ with services and
operators that are concerned with this kind of interaction between
threads and services.

It is assumed that a fixed but arbitrary set $\Foci$ of \emph{foci} has 
been given.
Foci play the role of names of the services provided by an execution
environment.
It is also assumed that a fixed but arbitrary set $\Meth$ of 
\emph{methods} has been given.
For the set $\BAct$ of basic actions, we take the set
$\set{f.m \where f \in \Foci, m \in \Meth}$.
Performing a basic action $f.m$ is taken as making a request to the
service named $f$ to process method $m$.

A service is able to process certain methods.
The processing of a method may involve a change of the service.
The reply value produced by the service at completion of the processing
of a method is either $\True$, $\False$ or $\Blocked$.
The special reply $\Blocked$, standing for blocked, is used to deal with
the situation that a service is requested to process a method that it is
not able to process.

The following is assumed with respect to services:
\begin{itemize}
\item
a many-sorted signature $\Sig{\Services}$ has been given that includes 
the following sorts:
\begin{itemize}
\item
the sort $\Serv$ of \emph{services};
\item
the sort $\Repl$ of \emph{replies};
\end{itemize}
\pagebreak[2]
and the following constants and operators:
\begin{itemize}
\item
the
\emph{empty service} constant $\const{\emptyserv}{\Serv}$;
\item
the \emph{reply} constants $\const{\True,\False,\Blocked}{\Repl}$;
\item
for each $m \in \Meth$, the
\emph{derived service} operator $\funct{\derive{m}}{\Serv}{\Serv}$;
\item
for each $m \in \Meth$, the
\emph{service reply} operator $\funct{\sreply{m}}{\Serv}{\Repl}$;
\end{itemize}
\item
a minimal $\Sig{\Services}$-algebra $\ServAlg$ has been given in which
$\True$, $\False$, and $\Blocked$ are mutually different, and
\begin{itemize}
\item
$\LAND{m \in \Meth}
  \derive{m}(z) = z \Land \sreply{m}(z) = \Blocked \Limpl
  z = \emptyserv$
holds;
\item
for each $m \in \Meth$,
$\derive{m}(z) = \emptyserv \Liff \sreply{m}(z) = \Blocked$ holds.
\end{itemize}
\end{itemize}

The intuition concerning $\derive{m}$ and $\sreply{m}$ is that on a
request to service $S$ to process method $m$:
\begin{itemize}
\item
if $\sreply{m}(S) \neq \Blocked$, $S$ processes $m$, produces the reply
$\sreply{m}(S)$, and then proceeds as $\derive{m}(S)$;
\item
if $\sreply{m}(S) = \Blocked$, $S$ is not able to process method $m$ and
proceeds as $\emptyserv$.
\end{itemize}
The empty service $\emptyserv$ itself is unable to process any method.

We introduce the following additional operators:
\begin{itemize}
\item
for each $f \in \Foci$, the binary \emph{use} operator
$\funct{\use{\ph}{f}{\ph}}{\Thr \x \Serv}{\Thr}$;
\item
for each $f \in \Foci$, the binary \emph{apply} operator
$\funct{\apply{\ph}{f}{\ph}}{\Thr \x \Serv}{\Serv}$.
\end{itemize}
We use infix notation for the use and apply operators.

The thread denoted by a closed term of the form $\use{p}{f}{S}$ and the
service denoted by a closed term of the form $\apply{p}{f}{S}$ are the
thread and service, respectively, that result from processing the
method of each basic action of the form $f.m$ that the thread denoted
by $p$ performs by the service denoted by $S$.
When the method of a basic action of the form $f.m$ performed by a 
thread is processed by a service, the service changes in accordance with 
the method concerned and affects the thread as follows: the basic action 
turns into the internal action $\Tau$ and the two ways to proceed reduce 
to one on the basis of the reply value produced by the service.

The axioms for the use operators are given in Table~\ref{axioms-use}%
\begin{table}[!t]
\caption{Axioms for the use operators}
\label{axioms-use}
\begin{eqntbl}
\begin{saxcol}
\use{\Stop}{f}{S} = \Stop                            & & \axiom{U1} \\
\use{\DeadEnd}{f}{S} = \DeadEnd                      & & \axiom{U2} \\
\use{(\Tau \bapf x)}{f}{S} =
                          \Tau \bapf (\use{x}{f}{S}) & & \axiom{U3} \\
\use{(\pcc{x}{g.m}{y})}{f}{S} =
\pcc{(\use{x}{f}{S})}{g.m}{(\use{y}{f}{S})}
 & \mif f \neq g                                       & \axiom{U4} \\
\use{(\pcc{x}{f.m}{y})}{f}{S} =
\Tau \bapf (\use{x}{f}{\derive{m}(S)})
                       & \mif \sreply{m}(S) = \True    & \axiom{U5} \\
\use{(\pcc{x}{f.m}{y})}{f}{S} =
\Tau \bapf (\use{y}{f}{\derive{m}(S)})
                       & \mif \sreply{m}(S) = \False   & \axiom{U6} \\
\use{(\pcc{x}{f.m}{y})}{f}{S} = \Tau \bapf \DeadEnd
                       & \mif \sreply{m}(S) = \Blocked & \axiom{U7}
\end{saxcol}
\end{eqntbl}
\end{table}
and the axioms for the apply operators are given in
Table~\ref{axioms-apply}.%
\begin{table}[!t]
\caption{Axioms for the apply operators}
\label{axioms-apply}
% !!
\begin{eqntbl}
\begin{saxcol}
\apply{\Stop}{f}{S} = S                              & & \axiom{A1} \\
\apply{\DeadEnd}{f}{S} = \emptyserv                  & & \axiom{A2} \\
\apply{(\Tau \bapf x)}{f}{S} = \apply{x}{f}{S}       & & \axiom{A3} \\
\apply{(\pcc{x}{g.m}{y})}{f}{S} = \emptyserv \hsp{10.2}
                                       & \mif f \neq g & \axiom{A4} \\
\apply{(\pcc{x}{f.m}{y})}{f}{S} = \apply{x}{f}{\derive{m}S}
                       & \mif \sreply{m}(S) = \True    & \axiom{A5} \\
\apply{(\pcc{x}{f.m}{y})}{f}{S} = \apply{y}{f}{\derive{m}S}
                       & \mif \sreply{m}(S) = \False   & \axiom{A6} \\
\apply{(\pcc{x}{f.m}{y})}{f}{S} = \emptyserv
                       & \mif \sreply{m}(S) = \Blocked & \axiom{A7}
\end{saxcol}
\end{eqntbl}
\end{table}
In these tables, $f$ and $g$ stand for arbitrary foci from $\Foci$,
$m$ stands for an arbitrary method from $\Meth$, and $S$ stands for an
arbitrary term of sort $\Serv$.
The axioms simply formalize the informal explanation given above and in
addition stipulate what is the result of use and apply if inappropriate
foci or methods are involved.

The extension of \BTA\ described in this section is a simple version of 
the extension of \BTA\ presented in~\cite{BM09k}.
We have chosen to use the former extension because it is adequate to the
purpose of this paper and it allows a terser survey.

\section{Instruction Sequences Acting on Boolean Registers}
\label{sect-Boolean-register}

In our approach to computational complexity, instruction sequences that 
act on Boolean registers play a key part.
Preceding the presentation of this approach, we describe in this section 
services that make up Boolean registers, introduce special foci that 
serve as names of Boolean registers, and describe the instruction 
sequences that matter to the kinds of complexity classes introduced in 
this paper.

First, we describe services that make up Boolean registers.
The Boolean register services are able to process the following methods:
\begin{itemize}
\item
the \emph{set to true method} $\setbr{\True}$;
\item
the \emph{set to false method} $\setbr{\False}$;
\item
the \emph{get method} $\getbr$.
\end{itemize}
We write $\Methbr$ for the set
$\set{\setbr{\True},\setbr{\False},\getbr}$.
It is assumed that $\Methbr \subseteq \Meth$.

The methods that Boolean register services are able to process can be
explained as follows:
\begin{itemize}
\item
$\setbr{\True}$\,:
the contents of the Boolean register becomes $\True$ and the reply is
$\True$;
\item
$\setbr{\False}$\,:
the contents of the Boolean register becomes $\False$ and the reply is
$\False$;
\item
$\getbr$\,:
nothing changes and the reply is the contents of the Boolean register.
\end{itemize}

For $\Sig{\Services}$, we take the signature that consists of the sorts,
constants and operators that are mentioned in the assumptions with
respect to services made in Section~\ref{sect-TSI} and a constant 
$\BR_b$ of sort $\Serv$ for each $b \in \Bool$.

For $\ServAlg$, we take a minimal $\Sig{\Services}$-algebra that
satisfies the conditions that are mentioned in the assumptions with
respect to services made in Section~\ref{sect-TSI} and the following
conditions for each $b \in \Bool$:
\pagebreak[2]
\begin{ldispl}
\begin{geqns}
\derive{\setbr{\True}}(\BR_{b})  = \BR_{\True}\;,
\\[.5ex]
\derive{\setbr{\False}}(\BR_{b}) = \BR_{\False}\;,
\eqnsep % \\ {} \\
\sreply{\setbr{\True}}(\BR_{b})  = \True\;,
\\
\sreply{\setbr{\False}}(\BR_{b}) = \False\;,
\end{geqns}
\qquad
\begin{gceqns}
\derive{\getbr}(\BR_{b}) = \BR_{b}\;,
\\[.5ex]
\derive{m}(\BR_{b}) = \emptyserv
 & \mif m \notin \set{\setbr{\True},\setbr{\False},\getbr}\;,
\eqnsep % \\ {} \\
\sreply{\getbr}(\BR_{b}) = b\;,
\\
\sreply{m}(\BR_{b}) = \Blocked
 & \mif m \notin \set{\setbr{\True},\setbr{\False},\getbr}\;.
\end{gceqns}
\end{ldispl}%

In the instruction sequences which concern us in the remainder of this
paper, a number of Boolean registers is used as input registers, a
number of Boolean registers is used as auxiliary registers, and one
Boolean register is used as output register.

It is assumed that $\inbr{1},\inbr{2},\ldots \in \Foci$,
$\auxbr{1},\auxbr{2},\ldots \in \Foci$, and $\outbr \in \Foci$.
These foci play special roles:
\begin{itemize}
\item
for each $i \in \Natpos$, $\inbr{i}$ serves as the name of the Boolean
register that is used as $i$th input register in instruction
sequences;
\item
for each $i \in \Natpos$, $\auxbr{i}$ serves as the name of the Boolean
register that is used as $i$th auxiliary register in instruction
sequences;
\item
$\outbr$ serves as the name of the Boolean register that is used as
output register in instruction sequences.
\end{itemize}
Henceforth, we will write
$\Fociin$ for $\set{\inbr{i} \where i \in \Natpos}$ and
$\Fociaux$ for $\set{\auxbr{i} \where i \in \Natpos}$.

$\ISbr$ is the set of all instruction sequences from $\FIS$ in which all 
plain basic instructions, positive test instructions and negative test 
instructions contain only basic instructions from the set
\begin{ldispl}
\set{f.\getbr    \where f \in \Fociin \union \Fociaux} \union
\set{f.\setbr{b} \where f \in \Fociaux \union \set{\outbr} \Land
                        b \in \Bool}\;;
\end{ldispl}%
$\ISbr$ is the set of all instruction sequences from $\FIS$ that matter
to the kinds of complexity classes which will be introduced in this 
paper.

For each $k,l \in \Nat$, we will write $\RISbr{k}{l}$ for the set of all 
$X \in \ISbr$ that satisfy:
\begin{quote}
\begin{itemize}
\item
primitive instructions of the forms $\auxbr{i}.m$, $\ptst{\auxbr{i}.m}$ 
and $\ntst{\auxbr{i}.m}$ with $i > k$ do not occur in $X$;
\item
primitive instructions of the form $\fjmp{l'}$ with $l' > l$ do not
occur in $X$.
\end{itemize}
\end{quote}
Moreover, for each $k \in \Nat$, we will write $\ARISbr{k}$ for the set
$\Union{l \in \Nat} \RISbr{k}{l}$.
Hence, $\ISbrna$ is the set of all instruction sequences from $\ISbr$ in 
which no auxiliary registers are used, and $\RISbr{0}{0}$ is the set of 
all instruction sequences from $\ISbrna$ in which jump instructions do 
not occur.

\section{The Complexity Classes $\nuc{\IS}{\FN}$}
\label{sect-complclass-PLIS}

In this section, we introduce a kind of non-uniform complexity classes
which includes a counterpart of the complexity class \PTpoly\ in the 
setting of single-pass instruction sequences.

The counterpart of \PTpoly\ defined in this section is denoted by \PLIS.
Because it is isomorphic to the complexity class \PTpoly, we could have
decided to loosely denote this complexity class by \PTpoly\ as well.
The reason why we decided not to denote it by \PTpoly\ finds its origin 
in what we want to achieve with this paper: illustrating an approach to 
non-uniform complexity in which single-pass instruction sequences play a 
key part.
We reserve the use of the name \PTpoly\ to where results obtained in
the setting of Turing machines or the setting of Boolean circuits are
involved.

In the field of computational complexity, it is quite common to study
the complexity of computing functions on finite strings over a binary
alphabet.
Since strings over an alphabet of any fixed size can be efficiently
encoded as strings over a binary alphabet, it is sufficient to consider
only a binary alphabet.
We adopt the set $\Bool$ as preferred binary alphabet.

An important special case of functions on finite strings over a binary
alphabet is the case where the value of functions is restricted to
strings of length $1$.
Such a function is often identified with the set of strings of which it
is the characteristic function.
The set in question is usually called a language or a decision problem.
The identification mentioned above allows of looking at the problem of
computing a function $\funct{f}{\seqof{\Bool}}{\Bool}$ as the problem of
deciding membership of the set
$\set{w \in \seqof{\Bool} \where f(w) = \True}$.

With each function $\funct{f}{\seqof{\Bool}}{\Bool}$, we can associate
an infinite sequence $\indfam{f_n}{n \in \Nat}$ of functions, with
$\funct{f_n}{\Bool^n}{\Bool}$ for every $n \in \Nat$, such that $f_n$ is
the restriction of $f$ to $\Bool^n$  for each $n \in \Nat$.
The complexity of computing such sequences of functions, which we call
Boolean function families, by instruction sequences is our concern in 
the remainder of this paper.
One of the classes of Boolean function families with which we concern us 
is \PLIS, the class of all Boolean function families that can be 
computed by polynomial-length instruction sequences from~$\ISbr$.

An \emph{$n$-ary Boolean function} is a function
$\funct{f}{\Bool^n}{\Bool}$, and
a \emph{Boolean function family} is an infinite sequence
$\indfam{f_n}{n \in \Nat}$ of functions, where $f_n$ is an $n$-ary
Boolean function for each $n \in \Nat$.

A Boolean function family $\indfam{f_n}{n \in \Nat}$ can be identified
with the unique function $\funct{f}{\seqof{\Bool}}{\Bool}$ such that for
each $n \in \Nat$, for each $w \in \Bool^n$, $f(w) = f_n(w)$.
Considering sets of Boolean function families as complexity classes
looks to be most natural when studying non-uniform complexity.
We will make the identification mentioned above only where connections
with classical complexity classes such as \PTpoly\ are made.

\newcommand{\sfuse}{\mathbin{/}}
\newcommand{\sfapply}{\mathbin{\bullet}}
\newcommand{\sfcomp}{\oplus}
\newcommand{\Sfcomp}[2]{\mathop{\mathpalette\nsfcomp{}}_{#1}^{#2}}
\newcommand{\nsfcomp}[1]
  {\raisebox{-.325ex}[1.75ex][.55ex]{\Large $#1 \oplus$}}

Let $n \in \Nat$, let $\funct{f}{\Bool^n}{\Bool}$, and 
let $X \in \ISbr$.
Then $X$ \emph{computes} $f$ if there exists an $l \in \Nat$ such that
for all $b_1,\ldots,b_n \in \Bool$:
\pagebreak[2]
\begin{ldispl}
( \ldots
 (( \ldots
   (\extr{X}
     \useop{\auxbr{1}} \BR_\False) \ldots \useop{\auxbr{l}} \BR_\False)
   \useop{\inbr{1}} \BR_{b_1}) \ldots \useop{\inbr{n}} \BR_{b_n})
 \applyop{\outbr} \BR_\False
\\ \quad {} = \BR_{f(b_1,\ldots,b_n)}\;.\footnotemark
\end{ldispl}%
\footnotetext
{In the extension of \BTA\ presented in~\cite{BM09k}, which has a sort 
 of (named) service families and a service family composition operator
 ($\sfcomp$), the left-hand side of this equation can be written as 
 follows:
 $(\extr{X} \sfuse
  ((\Sfcomp{i=1}{n} \inbr{i}.\BR_{b_i}) \sfcomp
   (\Sfcomp{j=1}{l} \auxbr{j}.\BR_{\False}))) \sfapply
  \outbr.\BR_{\False}$.}%
Moreover, let $\IS \subseteq \ISbr$ and 
$\FN \subseteq \set{h \where \funct{h}{\Nat}{\Nat}}$.
Then $\nuc{\IS}{\FN}$ is the class of all Boolean function families
$\indfam{f_n}{n \in \Nat}$ that satisfy:
\begin{quote}
there exists an $h \in \FN$ such that
for all $n \in \Nat$ there exists an $X \in \IS$ such that
$X$ computes $f_n$ and $\psize(X) \leq h(n)$.
\end{quote}

Henceforth, we will write $\poly$ for the set 
$\set{h \where 
      \funct{h}{\Nat}{\Nat} \Land h \mathrm{\,is \,polynomial}}$.
We are primarily interested in the complexity class $\PLIS$,%
\footnote
{In precursors of this paper, the temporary name 
 $\mathrm{P}^*$ is used for the complexity class $\PLIS$ 
 (see e.g.~\cite{BM08g}).}
but we will also pay attention to other instantiations of the general 
definition just given.

The question arises whether all $n$-ary Boolean functions can be
computed by an instruction sequence from $\ISbr$. 
This question can answered in the affirmative.
They can even be computed, without using auxiliary Boolean registers,
by an instruction sequence that contains no other
jump instructions than $\fjmp{2}$.
\begin{theorem}
\label{theorem-comput-boolfunc}
For each $n \in \Nat$, for each $n$-ary Boolean function
$\funct{f}{\Bool^n}{\Bool}$, there exists an $X \in \ISbrna$ in which
no other jump instruction than $\fjmp{2}$ occurs such that $X$
computes $f$ and $\psize(X) = O(2^n)$.%
\footnote
{Theorem~\ref{theorem-comput-boolfunc} sharpens the result found in 
 precursors of this paper (see e.g.~\cite{BM08g}).
 We owe the sharpened result to Inge Bethke from the University of 
 Amsterdam.}
\end{theorem}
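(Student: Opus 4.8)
The plan is to argue by induction on the arity $n$, peeling off the last input bit by a Shannon decomposition of $f$, and to keep the overhead of each inductive step constant so that the length at most doubles per level. It is convenient to strengthen the statement being proved, additionally requiring that the constructed $X$ ends in $\halt$; this invariant is what lets one branch simply fall through to its terminating instruction.

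For the base case $n=0$ the function is a constant. If $f=\False$ I take $X=\halt$: with $l=0$ (no auxiliary registers) and no input registers, $\apply{\extr{X}}{\outbr}{\BR_{\False}}=\apply{\Stop}{\outbr}{\BR_{\False}}=\BR_{\False}$ by axiom A1, and $\psize(X)=1$. If $f=\True$ I take $X=\outbr.\setbr{\True}\conc\halt$; since $\sreply{\setbr{\True}}(\BR_{\False})=\True$ and $\derive{\setbr{\True}}(\BR_{\False})=\BR_{\True}$, axioms A5 and A1 give $\apply{\extr{X}}{\outbr}{\BR_{\False}}=\BR_{\True}$. Neither sequence uses a jump or an auxiliary register.

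For the inductive step I fix an $n$-ary $f$ and form its two restrictions on the last bit, $f_{0}(b_1,\ldots,b_{n-1})=f(b_1,\ldots,b_{n-1},\False)$ and $f_{1}(b_1,\ldots,b_{n-1})=f(b_1,\ldots,b_{n-1},\True)$. By the induction hypothesis there are $X_{0},X_{1}\in\ISbrna$, using only $\fjmp{2}$, ending in $\halt$, each of length $O(2^{n-1})$, that compute $f_{0}$ and $f_{1}$ on the registers $\inbr{1},\ldots,\inbr{n-1}$. I would then build $X$ by prefixing a single test of $\inbr{n}.\getbr$ that selects between $X_{0}$ and $X_{1}$. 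To verify that $X$ computes $f$ I would unfold $\extr{X}$ with the thread-extraction axioms and evaluate the use and apply operators by U1--U7 and A1--A7, splitting on the reply of $\inbr{n}.\getbr$: on reply $\True$ the computation must reduce to that of $X_{1}$, and on $\False$ to that of $X_{0}$, after which the induction hypothesis applies. Because reading is non-destructive ($\derive{\getbr}(\BR_{b})=\BR_{b}$), the test of $\inbr{n}$ does not disturb the inputs $\inbr{1},\ldots,\inbr{n-1}$ seen by the selected sub-sequence. Provided the prefix contributes only a constant number of instructions, the length obeys $T(n)\le 2\,T(n-1)+c$, whence $\psize(X)=O(2^{n})$.

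The hard part will be the selection step under the combined constraints of using no auxiliary register and no jump other than $\fjmp{2}$. The naive layout — test $\inbr{n}$, then write $X_{0}$ followed by $X_{1}$ — forces the $b_n=\True$ branch to jump forward over all of $X_{0}$, a jump of length $\Theta(2^{n-1})$, which is forbidden; and "walking" over $X_{0}$ with a staircase of $\fjmp{2}$ instructions would add $\Theta(2^{n-1})$ to the length, destroying the $O(2^{n})$ bound, since the recurrence would then become $T(n)\le\alpha\,T(n-1)$ with $\alpha>2$. Thus the crux is to route between two half-size sub-sequences with \emph{constant} overhead using only length-$2$ jumps. The natural starting point exploits the built-in one-instruction skip of the test instructions, recorded by $\extr{\ptst{a}\conc X}=\pcc{\extr{X}}{a}{\extr{\fjmp{2}\conc X}}$, together with the $\halt$-termination invariant, so that one reply falls through into the sub-sequence placed directly after the test and runs to its $\halt$, while the other reply is handled through the single skipped slot. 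Making this interface work uniformly for sub-sequences of arbitrary length, so that one $\fjmp{2}$ suffices at \emph{every} level rather than a jump whose length grows with the sub-sequence, is the main obstacle; it almost certainly requires a recursion in which the two sub-sequences are not merely concatenated, and it is precisely the point of the sharpening attributed above to Inge Bethke.
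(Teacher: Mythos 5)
Your induction frame---Shannon decomposition on the last input bit, constant overhead per level, hence $T(n) \le 2T(n-1)+c$ and $\psize(X)=O(2^n)$---is exactly the frame of the paper's proof, and your base cases, though different from the paper's, are correct. But the proposal stops precisely where the theorem lives: you never exhibit the selection mechanism, and you say yourself that routing between two sub-sequences of length $\Theta(2^{n-1})$ using one test and only $\fjmp{2}$ is ``the main obstacle.'' Everything you do supply (base case, recurrence arithmetic, non-destructiveness of $\getbr$) is routine; the missing step is not a detail but the entire content of the theorem, so as it stands this is a genuine gap, not a proof. For comparison, the paper fills that gap as follows: $\inseq_0(f)$ is $\ntst{\outbr.\setbr{\True}} \conc \fjmp{2} \conc \halt$ if $f()=\True$ and $\ptst{\outbr.\setbr{\False}} \conc \fjmp{2} \conc \halt$ if $f()=\False$, and $\inseq_{n+1}(f) = \ntst{\inbr{n+1}.\getbr} \conc \fjmp{2} \conc \inseq_n(f_\True) \conc \inseq_n(f_\False)$; correctness is reduced to the auxiliary claim $\extr{\fjmp{2} \conc \inseq_n(f_\True) \conc X} = \extr{X}$, i.e.\ that a single $\fjmp{2}$ placed directly before such a block tunnels through the whole block, so that reply $\True$ falls into the $f_\True$-block while reply $\False$ is carried by that one jump into the $f_\False$-block. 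In other words, the paper's device is exactly the plain concatenation you suspected could not work, underpinned by a pass-through lemma.

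You should also know that your suspicion is well founded: the pass-through claim holds for $n=0$ but fails for every $n \ge 1$, so the proof as printed does not close the gap either. Indeed $\psize(\inseq_0(f))=3$ and $\psize(\inseq_{n+1}(f)) = 2\,\psize(\inseq_n(f))+2$, so every block of level at least $1$ has even length, whereas a chain of $\fjmp{2}$'s starting immediately before a block visits only every second position and therefore can never land on the first instruction after an even-length block; instead the chain is absorbed by the second level-$0$ leaf. Concretely, take $f(b_1,b_2)=b_1$: in $\inseq_2(f)$ the instructions at positions $2$, $4$, $6$ are $\fjmp{2}$'s and position $8$ is $\ptst{\outbr.\setbr{\False}}$ (the first instruction of $\inseq_0((f_\True)_\False)$, since $(f_\True)_\False()=f(\False,\True)=\False$), so on input $(\True,\False)$ execution runs through positions $1 \to 2 \to 4 \to 6 \to 8$ and halts with output $\False$, although $f(\True,\False)=\True$. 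So the step you left open really is the crux of the theorem, and it is not settled by naive concatenation---the paper's own argument founders on exactly this point; a correct proof of the Bethke sharpening requires a construction materially different from both your sketch and the one printed here.
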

\begin{proof}
Let $\inseq_n$ be the function from the set of all $n$-ary Boolean
function $\funct{f}{\Bool^n}{\Bool}$ to $\ISbrna$ defined by induction
on $n$ as follows:
\begin{ldispl}
\begin{aeqns}
\inseq_0(f) & = &
\left\{
\begin{array}[c]{@{}l@{\quad}l@{}}
\ntst{\outbr.\setbr{\True}}  \conc \fjmp{2} \conc \halt
 & \mif f() = \True \\
\ptst{\outbr.\setbr{\False}} \conc \fjmp{2} \conc \halt
 & \mif f() = \False\;,
\end{array}
\right.
\eqnsep
\inseq_{n+1}(f) & = &
\ntst{\inbr{n{+}1}.\getbr} \conc \fjmp{2} \conc
\inseq_n(f_\True) \conc \inseq_n(f_\False)\;,
\end{aeqns}
\end{ldispl}%
where for each $\funct{f}{\Bool^{n+1}}{\Bool}$ and $b \in \Bool$,
$\funct{f_b}{\Bool^n}{\Bool}$ is defined as follows:
\begin{ldispl}
f_b(b_1,\ldots,b_n) = f(b_1,\ldots,b_n,b)\;.
\end{ldispl}%
It is easy to prove by induction on $n$ that
$\extr{\fjmp{2} \conc \inseq_n(f_\True) \conc X} = \extr{X}$.
Using this fact, it is easy to prove by induction on $n$ that
$\inseq_n(f)$ computes $f$.
Moreover, it is easy to see that $\psize(\inseq_n(f)) = O(2^n)$.
\qed
\end{proof}

Henceforth, we will use the notation $\nuc{\IS}{O(f(n))}$ for the 
complexity class
$\nuc{\IS}{\set{h \where \funct{h}{\Nat}{\Nat} \Land h(n) = O(f(n))}}$.
This notation is among other things used in the following corollary of
Theorem~\ref{theorem-comput-boolfunc}.
\begin{corollary}
\label{corollary-comput-boolfunc-1}
All Boolean function families belong to $\nuc{\ISbrna}{O(2^n)}$.
\end{corollary}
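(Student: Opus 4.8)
The plan is to derive the corollary directly from Theorem~\ref{theorem-comput-boolfunc} by paying attention to the quantifier structure in the definition of $\nuc{\ISbrna}{O(2^n)}$. Unwinding that definition, I must exhibit a \emph{single} function $h$ with $h(n) = O(2^n)$ such that for every $n$ and every member $f_n$ of a given family there is an $X \in \ISbrna$ computing $f_n$ with $\psize(X) \le h(n)$. The subtle point is that this $h$ must be common to all members of the family and must bound the instruction sequence length uniformly over all $n$-ary Boolean functions, whereas Theorem~\ref{theorem-comput-boolfunc} at first glance only supplies, for each individual $f$, some instruction sequence of length $O(2^n)$.

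The key observation that closes this apparent gap is that the construction $\inseq_n$ used in the proof of Theorem~\ref{theorem-comput-boolfunc} yields a length that depends only on the arity $n$ and not on the particular function $f$. First I would read off from the defining recursion that $\psize(\inseq_{n+1}(f)) = \psize(\inseq_n(f_\True)) + \psize(\inseq_n(f_\False)) + 2$, with $\psize(\inseq_0(f))$ a fixed constant, so that by a trivial induction on $n$ the quantity $\psize(\inseq_n(f))$ takes one and the same value for all $n$-ary $f$. I would then define $h(n)$ to be this common value; by the size estimate already established in the theorem, $h(n) = O(2^n)$, so $h$ lies in the index set defining $\nuc{\ISbrna}{O(2^n)}$.

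With $h$ in hand the conclusion is immediate: given an arbitrary Boolean function family $\indfam{f_n}{n \in \Nat}$, for each $n$ I would take $X = \inseq_n(f_n)$, which lies in $\ISbrna$, computes $f_n$ (again by the theorem), and satisfies $\psize(X) = h(n) \le h(n)$. Hence the family belongs to $\nuc{\ISbrna}{O(2^n)}$, and since the family was arbitrary, every Boolean function family does. The only genuine work is the uniformity check of the second paragraph; everything else is bookkeeping against the definition, and I expect that uniformity check—confirming that the length of $\inseq_n(f)$ is independent of $f$—to be the one place where the bare $O(2^n)$ statement of the theorem is not quite enough and the explicit construction must be consulted.
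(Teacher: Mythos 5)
Your proposal is correct and takes essentially the same route as the paper, which presents this corollary as an immediate consequence of Theorem~\ref{theorem-comput-boolfunc}: the construction $\inseq_n$ from that theorem's proof directly supplies the required instruction sequences. Your uniformity check---that $\psize(\inseq_n(f))$ depends only on $n$ and not on $f$ (indeed $\psize(\inseq_n(f)) = 5 \cdot 2^n - 2$, since the length satisfies $L(0)=3$ and $L(n+1)=2L(n)+2$)---is exactly the observation that lets the theorem's $O(2^n)$ bound serve as the single $h$ demanded by the definition of $\nuc{\ISbrna}{O(2^n)}$.
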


In the proof of Theorem~\ref{theorem-comput-boolfunc}, the instruction
sequences yielded by the function $\inseq_n$ contain the jump
instruction $\fjmp{2}$.
Each occurrence of $\fjmp{2}$ belongs to a jump chain ending in
the instruction sequence
$\ntst{\outbr.\setbr{\True}} \conc \fjmp{2} \conc \halt$ or
the instruction sequence
$\ptst{\outbr.\setbr{\False}} \conc \fjmp{2} \conc \halt$.
Therefore, each occurrence of $\fjmp{2}$ can safely be replaced by the
instruction $\ptst{\outbr.\setbr{\False}}$, which like $\fjmp{2}$ skips
the next instruction.
This leads to the following corollary.
\begin{corollary}
\label{corollary-comput-boolfunc-2}
$\nuc{\RISbr{0}{0}}{O(2^n)} = \nuc{\ISbrna}{O(2^n)} = \nuc{\ISbr}{O(2^n)}$.
\end{corollary}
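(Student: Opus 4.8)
The plan is to obtain both equalities from a chain of elementary inclusions together with a sharpening of Corollary~\ref{corollary-comput-boolfunc-1}. First I would record that $\RISbr{0}{0} \subseteq \ISbrna \subseteq \ISbr$ and that the operator $\nuc{\IS}{\FN}$ is monotonic in its first argument: a witnessing instruction sequence $X \in \IS$ with $\psize(X) \leq h(n)$ remains a witness when $\IS$ is enlarged, and the same $h$ still serves. Hence $\nuc{\RISbr{0}{0}}{O(2^n)} \subseteq \nuc{\ISbrna}{O(2^n)} \subseteq \nuc{\ISbr}{O(2^n)}$, and since each of these classes is by definition a class of Boolean function families, all three are contained in the class of all Boolean function families. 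It therefore suffices to prove the converse containment, namely that every Boolean function family already lies in the smallest class $\nuc{\RISbr{0}{0}}{O(2^n)}$; this closes the loop and forces the three classes to coincide.

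The key step is thus to strengthen Corollary~\ref{corollary-comput-boolfunc-1} from $\ISbrna$ to $\RISbr{0}{0}$. I would take the instruction sequences $\inseq_n(f)$ constructed in the proof of Theorem~\ref{theorem-comput-boolfunc} and, exactly as anticipated in the paragraph preceding the corollary, replace every occurrence of $\fjmp{2}$ by $\ptst{\outbr.\setbr{\False}}$. This is a one-for-one substitution, so the length is untouched and the $O(2^n)$ bound is preserved; the result contains no jump instruction and uses no auxiliary register, hence lies in $\RISbr{0}{0}$. Provided the substitution also preserves the computed function, $\inseq_n(f)$ is thereby turned into an $\RISbr{0}{0}$-sequence of size $O(2^n)$ that computes $f$, and every Boolean function family belongs to $\nuc{\RISbr{0}{0}}{O(2^n)}$.

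The main obstacle is precisely this last proviso: that $\ptst{\outbr.\setbr{\False}}$ may stand in for $\fjmp{2}$ without altering the final contents of the output register. Two observations drive the verification. First, $\ptst{\outbr.\setbr{\False}}$ skips exactly one instruction, just as $\fjmp{2}$ does, so the control flow is unchanged; in particular the jump chains by which $\fjmp{2}$ effectively leaps over a whole block $\inseq_m(g)$ are left intact. Second, the only extra effect of $\ptst{\outbr.\setbr{\False}}$ is to write $\False$ to the output register, and on any input each position that held a $\fjmp{2}$ and is actually executed is passed while navigating towards the leaf selected by that input; the output register receives its last write only at that leaf, through $\ntst{\outbr.\setbr{\True}}$ or $\ptst{\outbr.\setbr{\False}}$, after all the replaced jumps on the path have fired, and the subsequent jump at the leaf itself is skipped rather than executed. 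Hence the spurious writes of $\False$ are overwritten and the applied output is unaffected. I expect to make this precise by an induction on $n$ mirroring the one in the proof of Theorem~\ref{theorem-comput-boolfunc}, arguing at the level of the value delivered by the apply operator rather than at the level of thread equality, and I anticipate that tracking these output-register writes is the only genuinely technical part of the argument.
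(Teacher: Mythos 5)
Your proposal is correct and takes essentially the same route as the paper: the paper's own justification is precisely the remark preceding the corollary, namely that in the instruction sequences $\inseq_n(f)$ from Theorem~\ref{theorem-comput-boolfunc} every occurrence of $\fjmp{2}$ can be replaced by $\ptst{\outbr.\setbr{\False}}$, which skips the next instruction just as $\fjmp{2}$ does, combined with the evident inclusions $\RISbr{0}{0} \subseteq \ISbrna \subseteq \ISbr$. Your additional verification that the spurious writes of $\False$ are harmless because the output register is last written at the selected leaf, after which only $\halt$ is executed, is exactly the content of the paper's terser claim that the replacement is ``safe'' in view of the jump-chain structure.
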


We consider the proof of Theorem~\ref{theorem-comput-boolfunc} once
again.
Because the content of the Boolean register concerned is initially
$\False$, the question arises whether $\outbr.\setbr{\False}$ can be
dispensed with in instruction sequences computing Boolean functions.
This question can be answered in the affirmative if we permit the use of
auxiliary Boolean registers.
\begin{theorem}
\label{theorem-comput-output}
Let $n \in \Nat$, let $\funct{f}{\Bool^n}{\Bool}$, and
let $X \in \ISbr$ be such that $X$ computes $f$.
Then there exists an $Y \in \ISbr$ in which the basic instruction
$\outbr.\setbr{\False}$ does not occur such that $Y$ computes $f$ and
$\psize(Y)$ is linear in $\psize(X)$.
\end{theorem}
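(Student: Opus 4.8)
The plan is to keep the actual output register untouched until the very end of the computation, letting a fresh auxiliary register carry the value that $X$ would store in $\outbr$. Since the output register is initialized to $\BR_\False$ and $Y$ will only ever write $\True$ to it, the instruction $\outbr.\setbr{\False}$ becomes unnecessary. Concretely, $X$ uses only finitely many auxiliary registers, so I would pick a focus $r = \auxbr{k+1}$ not occurring in $X$ and reroute every $\outbr$ instruction of $X$ to $r$. All occurrences of $\outbr$ in $X$ are of the form $\outbr.\setbr{\True}$ or $\outbr.\setbr{\False}$ (the output register is write-only in $\ISbr$), so after rerouting they become legitimate auxiliary-register instructions $r.\setbr{\True}$ and $r.\setbr{\False}$. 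The register $r$ is initialized to $\BR_\False$, exactly as $\outbr$ was, so this rerouting preserves the register behaviour while freeing $\outbr$ from being set to $\False$.

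The obstacle is that I must also insert, at each point where $X$ terminates, a small block that copies $r$ into $\outbr$, and inserting instructions in place would corrupt every forward jump of $X$. To avoid this bookkeeping I would use a uniform block expansion: writing $X = u_1 \conc \cdots \conc u_m$, I map each primitive instruction $u_i$ to a block of exactly three instructions and set $Y$ to be the concatenation of these blocks, so that jump targets scale by a factor of $3$. The map sends $\fjmp{l}$ to $\fjmp{3l} \conc \fjmp{1} \conc \fjmp{1}$, a test $\ptst{a}$ to $\ptst{a'} \conc \fjmp{2} \conc \fjmp{4}$ (and $\ntst{a}$ symmetrically), and a plain instruction $a$ to $a' \conc \fjmp{2} \conc \fjmp{1}$, where $a'$ is $a$ with $\outbr$ replaced by $r$; here the local jumps $\fjmp{2}$ and $\fjmp{4}$ translate the built-in skip-by-one of a test instruction into a skip to the next or next-but-one block. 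Crucially, I map the termination instruction to the transfer block
\begin{ldispl}
\halt \;\longmapsto\; \ptst{r.\getbr} \conc \outbr.\setbr{\True} \conc \halt\;,
\end{ldispl}
which has length $3$ as well: if $r$ holds $\True$ it sets $\outbr$ to $\True$ and halts, and if $r$ holds $\False$ it skips the write and halts, leaving $\outbr$ at its initial $\False$. The only $\outbr$ instruction introduced is $\outbr.\setbr{\True}$, so $\outbr.\setbr{\False}$ does not occur in $Y$, and $\psize(Y) = 3\,\psize(X)$ is linear in $\psize(X)$.

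Correctness would follow from a simulation invariant proved by induction on the number of execution steps: whenever the run of $\extr{X}$ is about to execute $u_i$ in a given register state, the run of $\extr{Y}$ is about to execute block $B_i$ at its first instruction in the corresponding state, where the contents of $r$ in $Y$ equals the contents of $\outbr$ in $X$ and the real $\outbr$ in $Y$ is still $\False$. The main thing to check is that each block reproduces the effect of $u_i$ on the registers and hands control to the first instruction of the block corresponding to the successor of $u_i$; this is where the offset arithmetic (the factor $3l$, and the $\fjmp{2}/\fjmp{4}$ adjustments for tests) must be verified against the thread-extraction axioms, including the boundary cases in which $u_i$ would cause inaction. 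Once $X$ reaches a $\halt$ with $\outbr = \BR_{f(b_1,\ldots,b_n)}$, the invariant gives $r = \BR_{f(b_1,\ldots,b_n)}$ in $Y$, and since $Y$ reaches the corresponding transfer block first (every transfer block terminates), the guarded write leaves $\outbr = \BR_{f(b_1,\ldots,b_n)}$, so $Y$ computes $f$. The main obstacle is thus the jump-offset bookkeeping rather than any conceptual difficulty.
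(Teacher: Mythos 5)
Your proposal is correct, and its core idea is exactly that of the paper's proof: reroute every $\outbr$ instruction of $X$ to a fresh auxiliary register $r$ (legitimate since in $\ISbr$ the focus $\outbr$ only ever occurs in $\setbr{\True}$/$\setbr{\False}$ instructions, whose replies are unaffected by the rerouting), and at each termination point install the guarded transfer block $\ptst{r.\getbr} \conc \outbr.\setbr{\True} \conc \halt$, so that the only write to $\outbr$ is a write of $\True$ and the initial $\False$ serves as the other output value. Where you genuinely differ is the jump bookkeeping. The paper keeps $X$ in place, replaces one unprocessed $\halt$ at a time by the three-instruction transfer block, stretches by $2$ exactly those jumps $\fjmp{l}$ whose span strictly crosses the insertion point, and iterates until every $\halt$ is preceded by $\outbr.\setbr{\True}$; correctness is then an induction over these localized rewrites. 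You instead expand \emph{every} instruction into a block of exactly three, scaling all jump lengths by $3$ and using the local $\fjmp{2}$/$\fjmp{4}$ glue to re-implement the skip-by-one semantics of test instructions. Your uniform expansion buys a cleaner simulation invariant (instruction $u_i$ of $X$ corresponds to position $3i-2$ of $Y$, with the boundary/inaction cases falling out automatically, and no case analysis over which jumps cross which insertion points), at the cost of always producing $\psize(Y) = 3\,\psize(X)$, whereas the paper's transformation grows the sequence only at halt instructions and so stays strictly below $3\,\psize(X)$; both bounds are linear, so both establish the theorem.
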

\begin{proof}
Let $o \in \Natpos$ be such that the basic instructions
$\auxbr{o}.\setbr{\True}$, $\auxbr{o}.\setbr{\False}$, and
$\auxbr{o}.\getbr$ do not occur in $X$.
Let $X'$ be obtained from $X$ by replacing each occurrence of the focus
$\outbr$ by $\auxbr{o}$.
Suppose that $X' = u_1 \conc \ldots \conc u_k$.
Let $Y$ be obtained from $u_1 \conc \ldots \conc u_k$ as follows:
\begin{enumerate}
\item
stop if $u_1 \equiv \halt$;
\item
stop if there exists no $j \in [2,k]$ such that
$u_{j-1} \not\equiv \outbr.\setbr{\True}$ and $u_j \equiv \halt$;
\item
find the least $j \in [2,k]$ such that
$u_{j-1} \not\equiv \outbr.\setbr{\True}$ and $u_j \equiv \halt$;
\item
replace $u_j$ by
$\ptst{\auxbr{o}.\getbr} \conc \outbr.\setbr{\True} \conc \halt$;
\item
for each $i \in [1,k]$, replace $u_i$ by $\fjmp{l{+}2}$ if
$u_i \equiv \fjmp{l}$ and $i < j < i + l$;
\item
repeat the preceding steps for the resulting instruction sequence.
\end{enumerate}
It is easy to prove by induction on $k$ that the Boolean function
computed by $X$ and the Boolean function computed by $Y$ are the
same.
Moreover, it is easy to see that $\psize(Y) < 3 \mul \psize(X)$.
Hence, $\psize(Y)$ is linear in $\psize(X)$.
\qed
\end{proof}

The following proposition gives an upper bound for the number of 
instruction sequences from $\RISbr{k-1}{k-1}$ of length $k$ that compute 
an $n$-ary Boolean function.
From each instruction sequence from $\ISbr$ of length $k$ that computes 
an $n$-ary Boolean function, we can obtain an instruction sequence from 
$\RISbr{k-1}{k-1}$ of length $k$ that computes the same $n$-ary Boolean 
function by replacement of the primitive instructions that are not 
permitted in $\RISbr{k-1}{k-1}$.
Moreover, each $n$-ary Boolean function that can be computed by an 
instruction sequence from $\ISbr$ of length less than $k$, can also be 
computed by an instruction sequence from $\ISbr$ of length~$k$.
\begin{proposition}
\label{prop-inseq-number}
% !!
For each $k \in \Natpos$ and $n \in \Nat$, the number of instruction 
sequences from $\RISbr{k-1}{k-1}$ of length $k$ that compute an $n$-ary 
Boolean function is not greater than $(3n + 10k - 2)^k$.
\end{proposition}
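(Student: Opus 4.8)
The plan is to prove this by a direct counting argument, reading the bound off its own shape: $(3n + 10k - 2)^k$ is of the form $N^k$, which is exactly the number of strings of length $k$ over an alphabet of size $N$. So the whole proof reduces to showing that every primitive instruction occurring in a length-$k$ instruction sequence from $\RISbr{k-1}{k-1}$ that computes an $n$-ary Boolean function is drawn from a fixed set of exactly $N = 3n + 10k - 2$ primitive instructions. Since such a sequence is then a string of $k$ primitive instructions from this set, there are at most $(3n + 10k - 2)^k$ of them, and \emph{a fortiori} at most that many compute an $n$-ary Boolean function.

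The core of the argument is therefore the enumeration of this alphabet. First I would list the available basic instructions. The get-instructions are $\inbr{i}.\getbr$ for $i \in \{1,\ldots,n\}$ and $\auxbr{i}.\getbr$ for $i \in \{1,\ldots,k-1\}$ (auxiliary registers of index exceeding $k-1$ being excluded by the definition of $\RISbr{k-1}{k-1}$), contributing $n + (k-1)$ instructions; the set-instructions are $\auxbr{i}.\setbr{b}$ for $i \in \{1,\ldots,k-1\}$ and $b \in \Bool$ together with $\outbr.\setbr{\True}$ and $\outbr.\setbr{\False}$, contributing $2(k-1) + 2$ instructions. This gives $n + 3(k-1) + 2 = n + 3k - 1$ basic instructions in total. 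Each basic instruction $a$ occurs in three primitive-instruction forms, the plain instruction $a$, the positive test $\ptst{a}$, and the negative test $\ntst{a}$, which accounts for $3(n + 3k - 1) = 3n + 9k - 3$ primitive instructions. To these I add the forward jump instructions $\fjmp{l'}$ permitted in $\RISbr{k-1}{k-1}$, namely those with $0 \leq l' \leq k-1$, giving $k$ further instructions, and the single termination instruction $\halt$. Summing, the alphabet has $(3n + 9k - 3) + k + 1 = 3n + 10k - 2$ primitive instructions, as required.

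The main obstacle is the one step where the hypothesis ``computes an $n$-ary Boolean function'' is actually used: justifying that the input registers may be restricted to $\inbr{1},\ldots,\inbr{n}$, since the definition of $\ISbr$ itself permits input registers of arbitrary index. Here I would argue via the defining equation for computation. Suppose some reachable instruction of $X$ contained the basic instruction $\inbr{m}.\getbr$ with $m > n$. Then for a suitable choice of inputs the extracted thread $\extr{X}$ performs the basic action $\inbr{m}.\getbr$; since $\inbr{m}$ is not among the foci $\auxbr{1},\ldots,\auxbr{l},\inbr{1},\ldots,\inbr{n}$ to which use operators are applied, this action survives all of those use operations (by~U4), so the final application of the apply operator for $\outbr$ meets a postconditional composition on a focus different from $\outbr$ and yields $\emptyserv$ by~A4. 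As $\emptyserv \neq \BR_{b}$ for every $b \in \Bool$, this contradicts that the left-hand side equals $\BR_{f(b_1,\ldots,b_n)}$. Hence no reachable occurrence of a higher-indexed input register is possible, the relevant input-register alphabet consists precisely of the $n$ get-instructions on $\inbr{1},\ldots,\inbr{n}$, and counting the length-$k$ strings over the alphabet enumerated above yields the stated bound $(3n + 10k - 2)^k$.
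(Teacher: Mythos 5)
Your alphabet count is exactly the paper's proof. The paper likewise enumerates $n + 3(k-1) + 2$ admissible basic instructions (the $n$ instructions $\inbr{i}.\getbr$, the $3(k-1)$ auxiliary-register instructions, and the two $\outbr$ set instructions), multiplies by $3$ for the plain/positive-test/negative-test forms, and adds the $k$ jump instructions and $\halt$, getting $3(n + 3(k-1) + 2) + k + 1 = 3n + 10k - 2$ and hence the bound $(3n + 10k - 2)^k$; up to that point the two arguments coincide, except that the paper simply asserts this alphabet without comment.

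The difference is your final paragraph, and it contains a genuine gap --- one that the paper's proof does not close either, but silently steps over. Your U4/A4 argument is sound as far as it goes: an \emph{executed} occurrence of $\inbr{m}.\getbr$ with $m > n$ would survive every use operator and force the final apply to yield $\emptyserv$, contradicting the defining equation. But the proposition counts instruction sequences, not executions, and a sequence may contain occurrences of such instructions that are never executed. For instance, for every $m \in \Natpos$ the sequence $\halt \conc \inbr{m}.\getbr$ belongs to $\RISbr{1}{1}$, has length $2$, and computes the $0$-ary constant-$\False$ function, since $\extr{\halt \conc \inbr{m}.\getbr} = \Stop$ and the defining equation then holds by U1 and A1. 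These are infinitely many distinct sequences, so no alphabet-counting argument can bound them: the step from ``no reachable occurrence is possible'' to ``the input-register alphabet consists precisely of $\inbr{1}.\getbr,\ldots,\inbr{n}.\getbr$'' is a non sequitur, and, read literally as a count of sequences, the proposition is in fact false. What your argument (and the paper's) actually establishes is the bound on the number of length-$k$ sequences built only from the restricted alphabet --- equivalently, on the number of $n$-ary Boolean functions computable by length-$k$ sequences from $\RISbr{k-1}{k-1}$ --- and that weaker statement is all that is used later, in the proof of Theorem~\ref{theorem-comput-large-inseq}. So your instinct that something needed justifying here was right, and your reachability lemma is the correct ingredient; it just cannot be upgraded from ``not executed'' to ``does not occur'', which is what the literal statement would require.
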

\begin{proof}
The set of basic instructions from which the plain basic instructions, 
positive test instructions and negative test instructions occurring in 
the instruction sequences concerned are built consists of
$n$ basic instructions of the form $\inbr{i}.\getbr$, 
$k - 1$ basic instructions of each of the forms 
$\auxbr{i}.\setbr{\True}$, $\auxbr{i}.\setbr{\False}$ and 
$\auxbr{i}.\getbr$, and
the two basic instructions $\outbr.\setbr{\True}$ and 
$\outbr.\setbr{\False}$.
Moreover, there are $k$ different jump instructions that may occur and 
one termination instruction.
This means that there are $3(n + 3(k - 1) + 2) + k + 1 = 3n + 10k - 2$ 
different primitive instructions that may occur in these instruction 
sequences.
Hence, the number of instruction sequences concerned is not greater than 
$(3n + 10k - 2)^k$.
\qed
\end{proof}

Theorem~\ref{theorem-comput-boolfunc} states that all $n$-ary Boolean 
functions can be computed by an instruction sequence from $\ISbr$ whose 
length is exponential in $n$.
The following theorem shows that, for large enough $n$, not all $n$-ary 
Boolean functions can be computed by an instruction sequence from 
$\ISbr$ whose length is polynomial in $n$.
\begin{theorem}
\label{theorem-comput-large-inseq}
For each $n \in \Nat$ with $n > 11$, there exists a $n$-ary Boolean 
function $\funct{f}{\Bool^n}{\Bool}$ such that, for each $X \in \ISbr$ 
that computes $f$, $\psize(X) > \lfloor 2^n / n \rfloor$.
\end{theorem}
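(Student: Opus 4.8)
The plan is to use a standard counting (pigeonhole) argument: there are $2^{2^n}$ distinct $n$-ary Boolean functions, and I will bound the number of them that can be computed by a short instruction sequence and show that this bound is strictly smaller than $2^{2^n}$.

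First I would set $k = \lfloor 2^n / n \rfloor$ and argue that the number of $n$-ary Boolean functions computable by some $X \in \ISbr$ with $\psize(X) \leq k$ is at most $(3n + 10k - 2)^k$. This follows by combining the two observations stated just before Proposition~\ref{prop-inseq-number}: any function computed by an instruction sequence from $\ISbr$ of length less than $k$ is also computed by one of length exactly $k$, and any function computed by an instruction sequence from $\ISbr$ of length $k$ is computed by one from $\RISbr{k-1}{k-1}$ of length $k$. Hence every function computable with $\psize \leq k$ is computed by some instruction sequence in $\RISbr{k-1}{k-1}$ of length $k$, and Proposition~\ref{prop-inseq-number} bounds the number of these sequences by $(3n + 10k - 2)^k$. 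Since each instruction sequence computes at most one $n$-ary Boolean function, the number of functions computable with $\psize \leq k$ is at most $(3n + 10k - 2)^k$.

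Next I would prove the arithmetic inequality $(3n + 10k - 2)^k < 2^{2^n}$ for $n > 11$, bounding the base and the exponent separately. Since $k \leq 2^n/n$, one has $3n + 10k - 2 \leq 3n - 2 + 10 \mul 2^n/n$, and for $n > 11$ this quantity is strictly less than $2^n$: the slack $2^n - 10 \mul 2^n/n$ equals $2^n \mul (n-10)/n$, which dominates the linear term $3n - 2$. Thus $3n + 10k - 2 < 2^n$, so $(3n + 10k - 2)^k < (2^n)^k = 2^{n \mul k}$. Finally $n \mul k = n \mul \lfloor 2^n/n \rfloor \leq 2^n$, whence $2^{n \mul k} \leq 2^{2^n}$, giving $(3n + 10k - 2)^k < 2^{2^n}$.

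Combining the two steps, the number of $n$-ary Boolean functions computable with $\psize \leq k$ is strictly less than the total number $2^{2^n}$ of $n$-ary Boolean functions. By pigeonhole there is an $n$-ary Boolean function $\funct{f}{\Bool^n}{\Bool}$ that is computed by no $X \in \ISbr$ with $\psize(X) \leq k$; equivalently, every $X \in \ISbr$ computing $f$ satisfies $\psize(X) > \lfloor 2^n/n \rfloor$, as required. I expect the only real obstacle to be the arithmetic estimate $3n + 10k - 2 < 2^n$ and confirming that it already holds at the stated threshold $n > 11$; everything else is the reduction bookkeeping already packaged in Proposition~\ref{prop-inseq-number} and the two remarks preceding it.
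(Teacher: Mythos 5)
Your proof is correct and follows essentially the same route as the paper's: both use Proposition~\ref{prop-inseq-number} together with the remarks preceding it to bound by $(3n+10k-2)^k$ the number of $n$-ary Boolean functions computable by instruction sequences of length at most $k = \lfloor 2^n/n \rfloor$, and then conclude by pigeonhole against the $2^{(2^n)}$ functions in total. The only difference is bookkeeping in the arithmetic estimate --- you bound the base by $2^n$ and use $nk \leq 2^n$, whereas the paper bounds the base by $11 \cdot 2^n/n$ and absorbs the factor $(11/n)^{2^n/n} < 1$ --- and both verifications are sound for $n > 11$.
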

\begin{proof}
Let $n \in \Nat$ be such that $n > 11$.
By Proposition~\ref{prop-inseq-number} and the remarks im\-mediately 
preceding Proposition~\ref{prop-inseq-number}, the number of $n$-ary 
Boolean functions that can be computed by instruction 
sequences from $\ISbr$ of length less than or equal to $k$ is not 
greater than $(3n + 10k - 2)^k$.
For $k = \lfloor 2^n / n \rfloor$, this number is not greater than 
$(3n + 10\lfloor 2^n / n \rfloor - 2)^{\lfloor 2^n / n \rfloor}$.
We have that
$(3n + 10\lfloor 2^n / n \rfloor - 2)^{\lfloor 2^n / n \rfloor} \leq 
 (3n + 10(2^n / n) - 2)^{2^n / n} < (11(2^n / n))^{2^n / n} = 
 (11/n)^{2^n / n} \cdot (2^n)^{2^n / n} = 
 (11/n)^{2^n / n} \cdot 2^{(2^n)} < 2^{(2^n)}$.
Here, we have used the given that $n > 11$ in the second step and the 
last step.
So there exist less than $2^{(2^n)}$ $n$-ary Boolean functions that can 
be computed by instruction sequences from $\ISbr$ of length less than or 
equal to $\lfloor 2^n / n \rfloor$, whereas there exist $2^{(2^n)}$ 
$n$-ary Boolean functions.
Hence, there exists an $n$-ary Boolean function that cannot be computed 
by an instruction sequence from $\ISbr$ of length less than or equal to
$\lfloor 2^n / n \rfloor$.
\qed
\end{proof}

Theorem~\ref{theorem-comput-large-inseq} gives rise to the following
corollary concerning \PLIS.
\begin{corollary}
\label{corollary-comput-large-inseq}
$\PLIS \subset \nuc{\ISbr}{O(2^n)}$.
\end{corollary}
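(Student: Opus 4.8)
The plan is to establish the two halves of the strict inclusion separately: first the containment $\PLIS \subseteq \nuc{\ISbr}{O(2^n)}$, and then its strictness, i.e.\ the existence of a Boolean function family in $\nuc{\ISbr}{O(2^n)}$ that lies outside $\PLIS$.

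For the containment I would argue that it is essentially free. By definition $\PLIS = \nuc{\ISbr}{\poly}$, and every polynomial $h$ satisfies $h(n) = O(2^n)$; hence any family computable within a polynomial length bound is a fortiori computable within an $O(2^n)$ length bound, giving $\PLIS \subseteq \nuc{\ISbr}{O(2^n)}$. An even more direct route is available: by Corollary~\ref{corollary-comput-boolfunc-1} together with $\ISbrna \subseteq \ISbr$ (indeed already by the equalities in Corollary~\ref{corollary-comput-boolfunc-2}), \emph{every} Boolean function family belongs to $\nuc{\ISbr}{O(2^n)}$, so since $\PLIS$ consists of Boolean function families the containment is automatic.

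For strictness the idea is to promote the pointwise hardness statement of Theorem~\ref{theorem-comput-large-inseq} into a single hard family. I would define a Boolean function family $\indfam{f_n}{n \in \Nat}$ by taking, for each $n > 11$, the $n$-ary Boolean function $f_n$ guaranteed by that theorem --- one for which every $X \in \ISbr$ computing $f_n$ has $\psize(X) > \lfloor 2^n / n \rfloor$ --- and choosing $f_n$ arbitrarily for $n \leq 11$. By Corollary~\ref{corollary-comput-boolfunc-1} this family lies in $\nuc{\ISbr}{O(2^n)}$. It remains to show it is not in $\PLIS$: were it so, there would be a polynomial $h$ such that for every $n$ some $X \in \ISbr$ computes $f_n$ with $\psize(X) \leq h(n)$, forcing $h(n) > \lfloor 2^n / n \rfloor$ for all $n > 11$; but $\lfloor 2^n / n \rfloor$ grows faster than any polynomial, so no such $h$ exists, a contradiction. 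Combining the two halves yields $\PLIS \subset \nuc{\ISbr}{O(2^n)}$.

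I expect the only point requiring a little care --- the main obstacle, though it is a mild one --- to be the step that turns the existential statement ``for each $n > 11$ there is a hard function'' into one fixed family and then rules out a \emph{uniform} polynomial bound. The subtlety is that membership in $\PLIS$ demands a single polynomial $h$ working for all $n$ simultaneously, so the contradiction rests on the fact that $\lfloor 2^n / n \rfloor$ eventually exceeds $h(n)$ for every fixed polynomial $h$. This is a routine comparison of growth rates and presents no genuine difficulty.
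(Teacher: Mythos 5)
Your proposal is correct and follows exactly the route the paper intends: the containment is immediate since every polynomial is $O(2^n)$ (or, as you note, since by Corollary~\ref{corollary-comput-boolfunc-1} \emph{all} families lie in $\nuc{\ISbr}{O(2^n)}$), and strictness comes from assembling the pointwise-hard functions of Theorem~\ref{theorem-comput-large-inseq} into a single family whose required length $\lfloor 2^n / n \rfloor$ eventually exceeds any fixed polynomial --- the same promotion-to-a-family device the paper itself uses in the proof of Theorem~\ref{theorem-hierarchy-PLIS}. Nothing further is needed.
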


Theorem~\ref{theorem-comput-large-inseq} will be used in the proof of 
the following hierarchy theorem for \PLIS.
\begin{theorem}
\label{theorem-hierarchy-PLIS}
For each $k \in \Nat$, 
$\nuc{\ISbr}{O(n^k)} \subset \nuc{\ISbr}{O(n^{k+1})}$.
\end{theorem}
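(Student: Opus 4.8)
The plan is to prove the two halves of the strict inclusion separately. The inclusion $\nuc{\ISbr}{O(n^k)} \subseteq \nuc{\ISbr}{O(n^{k+1})}$ is immediate, since every $h$ with $h(n) = O(n^k)$ also satisfies $h(n) = O(n^{k+1})$, so any Boolean function family witnessed in the smaller class is witnessed in the larger one. All the work lies in showing the inclusion is proper, for which I would exhibit a single Boolean function family that lies in $\nuc{\ISbr}{O(n^{k+1})}$ but not in $\nuc{\ISbr}{O(n^k)}$. The idea is a padding argument: place the hard Boolean functions supplied by Theorem~\ref{theorem-comput-large-inseq} on a suitably small number of the input bits and ignore the remaining bits, so that the complexity of the resulting $n$-ary family lands strictly between $n^k$ and $n^{k+1}$.

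Concretely, I would set $m(n) = \lfloor (k+1)\log_2 n \rfloor$, so that $2^{m(n)} \leq n^{k+1}$ while $2^{m(n)} > n^{k+1}/2$. For each $n$ large enough that $m(n) > 11$, Theorem~\ref{theorem-comput-large-inseq} supplies an $m(n)$-ary Boolean function $f^{(m(n))}$ that no instruction sequence from $\ISbr$ of length at most $\lfloor 2^{m(n)}/m(n) \rfloor$ computes; for the remaining finitely many small $n$ I would take a constant function. I then define $\indfam{g_n}{n \in \Nat}$ by $g_n(b_1,\ldots,b_n) = f^{(m(n))}(b_1,\ldots,b_{m(n)})$, so that $g_n$ depends only on its first $m(n)$ arguments. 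For the upper bound, Theorem~\ref{theorem-comput-boolfunc} provides an instruction sequence of length $O(2^{m(n)})$ computing $f^{(m(n))}$ that reads only the input registers $\inbr{1},\ldots,\inbr{m(n)}$; since it ignores the extra inputs it also computes the $n$-ary $g_n$, and $O(2^{m(n)}) = O(n^{k+1})$. Hence $\indfam{g_n}{n \in \Nat} \in \nuc{\ISbr}{O(n^{k+1})}$.

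For the lower bound I would transfer the hardness of $f^{(m(n))}$ to $g_n$. Given any $X \in \ISbr$ computing $g_n$, fix the irrelevant inputs to $\False$ and eliminate the reads of $\inbr{j}$ for $j > m(n)$: since input foci occur in $\ISbr$ only through the method $\getbr$, each such occurrence sits in a plain basic instruction, a positive test, or a negative test, and under the value $\False$ these behave respectively like $\fjmp{1}$, $\fjmp{2}$, and $\fjmp{1}$; replacing them accordingly yields an $X' \in \ISbr$ with $\psize(X') = \psize(X)$ that references no input register beyond $\inbr{m(n)}$ and computes the $m(n)$-ary function $f^{(m(n))}$. By the choice of $f^{(m(n))}$ this forces $\psize(X) > \lfloor 2^{m(n)}/m(n) \rfloor$, and since $2^{m(n)}/m(n) > (n^{k+1}/2)/((k+1)\log_2 n)$ grows faster than every constant multiple of $n^k$, no $h$ with $h(n) = O(n^k)$ can bound the length of instruction sequences computing $g_n$ for all large $n$. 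Therefore $\indfam{g_n}{n \in \Nat} \notin \nuc{\ISbr}{O(n^k)}$, which completes the separation.

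I expect the main obstacle to be this lower-bound transfer, namely verifying that fixing the padding inputs to $\False$ and rewriting their reads as jumps really produces a no-longer instruction sequence for $f^{(m(n))}$. The bookkeeping of which jump simulates which test, together with the observation that in $\ISbr$ input registers are only ever read and never set, is exactly what makes the reduction length-preserving and hence lets Theorem~\ref{theorem-comput-large-inseq} apply verbatim. The remaining asymptotic estimates, comparing $2^{m(n)}/m(n)$ with $n^k$ and $n^{k+1}$, are routine.
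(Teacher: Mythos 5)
Your proposal is correct and takes essentially the same route as the paper's own proof: a padding argument that places the hard functions supplied by Theorem~\ref{theorem-comput-large-inseq} on roughly $(k+1)\log_2 n$ of the $n$ inputs, uses Theorem~\ref{theorem-comput-boolfunc} for the $O(n^{k+1})$ upper bound, and derives a lower bound of order $n^{k+1}/\log n$, which is not $O(n^k)$. You are in fact more explicit than the paper on one point: the paper simply asserts that any instruction sequence computing the padded function $g_n$ inherits the lower bound of the embedded hard function, whereas you justify this transfer by rewriting reads of the padding registers (fixed to $\mathsf{false}$) as length-preserving jumps.
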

\begin{proof}
Let $\indfam{f_n}{n \in \Nat}$ be a Boolean function family such that,
for each $n > 11$, for each $X \in \ISbr$ that computes $f_n$, 
$\psize(X) > \lfloor 2^n / n \rfloor$.
Such a Boolean function family exists by 
Theorem~\ref{theorem-comput-large-inseq}.
Let $k \in \Nat$, and let $\indfam{g_n}{n \in \Nat}$ be the Boolean 
function family such that, for each $n \in \Nat$, 
$g_n(b_1,\ldots,b_n) = f_n(b_1,\ldots,b_n)$ if $n < 2^{k+3}$ and
$g_n(b_1,\ldots,b_n) = 
 f_{\lceil \log((k+2)n^{k+1}) \rceil}
 (b_1,\ldots,b_{\lceil \log((k+2)n^{k+1}) \rceil})$ if $n \geq 2^{k+3}$.
Then $\indfam{g_n}{n \in \Nat} \in \nuc{\ISbr}{O(n^{k+1})}$ by 
Theorem~\ref{theorem-comput-boolfunc}.
\sloppy
Moreover, for each $n \geq 2^{k+3}$, for each $Y \in \ISbr$ that 
computes $g_n$, 
$\psize(Y) >
 \lfloor 2^{\lceil \log((k+2)n^{k+1}) \rceil} / 
 \lceil \log((k+2)n^{k+1}) \rceil \rfloor \geq 
 \lfloor 2^{\log((k+2)n^{k+1})} / (\log((k+2)n^{k+1}) + 1) \rfloor \geq 
 \lfloor (k+2)n^{k+1} / (k+2)\log(n) \rfloor = 
 \lfloor n^{k+1} / \log(n) \rfloor$.
Here, we have used the given that $n \geq 2^{k+3}$ in the last step but 
one.
From the fact that, for all $m \in \Nat$, there exists an $n \in \Nat$ 
such that $n > m \log(n)$, it follows that not 
$\lfloor n^{k+1} / \log(n) \rfloor = O(n^k)$.
Hence, $\indfam{g_n}{n \in \Nat} \notin \nuc{\ISbr}{O(n^k)}$.
\qed
\end{proof}

As a corollary of the fact that
$\poly =
 \Union{k \in \Nat}
  \set{h \where \funct{h}{\Nat}{\Nat} \Land h(n) = O(n^k)}$,
the general definition of the non-uniform complexity classes 
$\nuc{\IS}{\FN}$, and Theorem~\ref{theorem-hierarchy-PLIS}, we have the
following result.
\begin{corollary}
\label{corollary-hierarchy-PLIS}
For each $k \in \Nat$, $\PLIS \not\subseteq \nuc{\ISbr}{O(n^k)}$.
\end{corollary}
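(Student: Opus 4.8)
The plan is to prove the statement by contradiction, leaning on the strict inclusion supplied by Theorem~\ref{theorem-hierarchy-PLIS} together with the observation that $\PLIS$ is nothing but the union of all the classes $\nuc{\ISbr}{O(n^k)}$. First I would record the decomposition $\PLIS = \Union{k \in \Nat} \nuc{\ISbr}{O(n^k)}$, and then derive the non-inclusion from it in one step.

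The key fact needed to establish the decomposition is that the operator $\nuc{\ISbr}{\ph}$ commutes with unions of sets of functions: for any family $\indfam{\FN_i}{i \in I}$ of subsets of $\set{h \where \funct{h}{\Nat}{\Nat}}$, one has $\nuc{\ISbr}{\Union{i \in I} \FN_i} = \Union{i \in I} \nuc{\ISbr}{\FN_i}$. This is immediate from the definition of $\nuc{\IS}{\FN}$: membership of $\indfam{f_n}{n \in \Nat}$ asks for the existence of a single bounding function $h$ in the given set of functions, and an existential quantifier ranging over a union ranges exactly over the union of the existential quantifiers over the parts. Combining this with the stated identity $\poly = \Union{k \in \Nat} \set{h \where \funct{h}{\Nat}{\Nat} \Land h(n) = O(n^k)}$ and the definition $\PLIS = \nuc{\ISbr}{\poly}$ yields the decomposition.

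With the decomposition in hand, suppose toward a contradiction that $\PLIS \subseteq \nuc{\ISbr}{O(n^k)}$ for some fixed $k \in \Nat$. Since $O(n^{k+1}) \subseteq \poly$, monotonicity of $\nuc{\ISbr}{\ph}$ in its second argument gives $\nuc{\ISbr}{O(n^{k+1})} \subseteq \PLIS$, hence $\nuc{\ISbr}{O(n^{k+1})} \subseteq \nuc{\ISbr}{O(n^k)}$ under the assumption. But Theorem~\ref{theorem-hierarchy-PLIS} asserts the strict inclusion $\nuc{\ISbr}{O(n^k)} \subset \nuc{\ISbr}{O(n^{k+1})}$, so the two inclusions together force $\nuc{\ISbr}{O(n^k)} = \nuc{\ISbr}{O(n^{k+1})}$, contradicting strictness. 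Therefore $\PLIS \not\subseteq \nuc{\ISbr}{O(n^k)}$ for every $k$.

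I do not expect any genuine obstacle here: the whole argument is essentially a one-line consequence of Theorem~\ref{theorem-hierarchy-PLIS} once the union decomposition of $\PLIS$ is made explicit. The only point that deserves care is the distribution of $\nuc{\ISbr}{\ph}$ over unions, and in particular the observation that the bounding function $h$ in the definition need not be uniform across different levels $k$; it is precisely this non-uniformity that makes $\PLIS$ strictly larger than each individual level $\nuc{\ISbr}{O(n^k)}$.
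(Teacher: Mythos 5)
Your proposal is correct and follows essentially the same route as the paper, which derives the corollary from the identity $\poly = \Union{k \in \Nat} \set{h \where \funct{h}{\Nat}{\Nat} \Land h(n) = O(n^k)}$, the definition of $\nuc{\IS}{\FN}$, and Theorem~\ref{theorem-hierarchy-PLIS}. Your write-up merely makes explicit the union decomposition $\PLIS = \Union{k \in \Nat} \nuc{\ISbr}{O(n^k)}$ and the one-step contradiction that the paper leaves implicit.
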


\section{Instruction Sequences, Boolean Formulas and Circuits}
\label{sect-boolform-and-boolcirc}

In this section, we investigate connections of single-pass instruction 
sequences with Boolean formulas and Boolean circuits which are relevant 
to non-uniform complexity and show that \PLIS\ coincides with \PTpoly.
The definitions of Boolean circuits, \PTpoly\ and related notions on 
which some results in this section and the coming ones are based are the 
definitions from Chapter~6 of~\cite{AB09a}.

First, we dwell on obtaining instruction sequences that compute the
Boolean functions induced by Boolean formulas from the Boolean formulas
concerned.

Hereafter, we will write $\phi(b_1,\ldots,b_n)$, where $\phi$ is a 
Boolean formula containing the variables $v_1,\ldots,v_n$ and
$b_1,\ldots,b_n \in \Bool$, to indicate that $\phi$ is satisfied by
the assignment $\sigma$ to the variables $v_1,\ldots,v_n$ defined by
$\sigma(v_1) = b_1$, \ldots, $\sigma(v_n) = b_n$.

Let $\phi$ be a Boolean formula containing the variables
$v_1,\ldots,v_n$.
Then the Boolean function \emph{induced} by $\phi$ is the $n$-ary 
Boolean function $f$ defined by $f(b_1,\ldots,b_n) = \True$ iff 
$\phi(b_1,\ldots,b_n)$.

The Boolean function induced by a \CNF-formula can be computed, without
using auxiliary Boolean registers, by an instruction sequence that 
contains no other jump instructions than $\fjmp{2}$ and whose length is 
linear in the size of the \CNF-formula.
\begin{proposition}
\label{prop-comput-cnfform}
For each \CNF-formula $\phi$, there exists an $X \in \ISbrna$ in
which no other jump instruction than $\fjmp{2}$ occurs such that $X$
computes the Boolean function induced by $\phi$ and $\psize(X)$ is
linear in the size of $\phi$.
\end{proposition}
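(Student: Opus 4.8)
The plan is to construct $X$ clause by clause and to thread the flow of control so that, as soon as a clause is found to be satisfied, control jumps to the block for the next clause, whereas a clause all of whose literals are false causes immediate termination. Since the output register is initialized to $\BR_\False$ in the definition of ``computes'', a failed clause need not do anything at all except halt: the default value $\False$ is already correct for an assignment that does not satisfy $\phi$. Only when every clause has been passed does control reach a final \emph{accept} block $\outbr.\setbr{\True} \conc \halt$ that sets the output to $\True$. Concretely, writing $C_j = \ell_{j,1} \lor \cdots \lor \ell_{j,k_j}$ for the $j$th clause of $\phi$, I would take $X = B_1 \conc \cdots \conc B_m \conc \outbr.\setbr{\True} \conc \halt$, where each clause block has the form $B_j = T_{j,1} \conc \fjmp{2} \conc T_{j,2} \conc \fjmp{2} \conc \cdots \conc T_{j,k_j} \conc \fjmp{2} \conc \halt$.

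Here $T_{j,i}$ is the single test instruction that reads the relevant input register: $\ptst{\inbr{c}.\getbr}$ when $\ell_{j,i}$ is the positive literal $v_c$, and $\ntst{\inbr{c}.\getbr}$ when it is the negative literal $\lnot v_c$. In either case $T_{j,i}$ is so chosen that a \emph{true} literal makes execution proceed to the immediately following $\fjmp{2}$, while a \emph{false} literal makes execution skip that $\fjmp{2}$ and proceed directly to the next test $T_{j,i+1}$. The device that makes the restriction to $\fjmp{2}$ sufficient is that these interleaved jumps form a jump chain, exactly as in the discussion following Theorem~\ref{theorem-comput-boolfunc}: a $\fjmp{2}$ reached after a true literal skips the following test and lands on the next $\fjmp{2}$, which fires again, and so on, so that the chain carries control past all the remaining tests of the clause and past its terminating $\halt$ onto the first instruction of $B_{j+1}$ (or onto the accept block when $j = m$). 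No long jump is ever required, because the only control transfer of non-trivial span---skipping the tail of a satisfied clause---is realized locally by this chain, and a failed clause simply runs into its own $\halt$.

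For correctness I would argue, by a case analysis backed by the axioms for thread extraction and for the use and apply operators, that within $B_j$ the ``continue'' path (each $T_{j,i}$ taking its false branch) reaches the trailing $\halt$ exactly when all literals of $C_j$ are false, and that any true literal launches the $\fjmp{2}$ chain exiting $B_j$ to the next block; an induction on the number of clauses then shows that $X$ reaches the accept block iff every clause is satisfied, i.e.\ that $X$ computes the function induced by $\phi$ (the case of an empty clause is handled automatically, its block being just $\halt$). The size bound is immediate: $B_j$ contributes $2k_j + 1$ primitive instructions, so $\psize(X) = 2\sum_{j=1}^{m} k_j + m + 2$, which is linear in the number of literal occurrences of $\phi$ and hence in the size of $\phi$; moreover $X \in \ISbrna$ and uses no jump other than $\fjmp{2}$. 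The one genuinely delicate point, and the step I would check most carefully, is the chaining behaviour: one must verify that landing on a $\fjmp{2}$ re-triggers it, so that consecutive jumps compose into a single long skip, and that the counting of positions makes the final $\fjmp{2}$ of $B_j$ clear the trailing $\halt$ and land precisely on the start of $B_{j+1}$.
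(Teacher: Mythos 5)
Your construction is correct and takes essentially the same approach as the paper: the paper's proof likewise builds one block per clause out of alternating literal tests and $\fjmp{2}$ instructions, relies on exactly the same jump-chain mechanism to carry control out of a satisfied clause into the next block, and appends a final accept block, proving correctness by induction on the number of clauses (and on literals within a clause). The only difference is the failed-clause exit: the paper ends each block with $\ptst{\outbr.\setbr{\False}} \conc \fjmp{2} \conc \halt$, explicitly writing $\False$, whereas you halt immediately and rely on the output register's initial value $\BR_\False$ --- a harmless, slightly leaner variant of the same idea, which the paper itself uses in the proof of Proposition~\ref{prop-comput-boolform}.
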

\begin{proof}
Let $\inseqcnf$ be the function from the set of all \CNF-formulas
containing the variables $v_1,\ldots,v_n$ to $\ISbrna$ defined as 
follows:
\begin{ldispl}
\inseqcnf
 \bigl(\LAND{i \in [1,m]} \LOR{j \in [1,n_i]} \xi_{ij}\bigr) =
 {} \\ \quad
\inseqcnf'(\xi_{11}) \conc \ldots \conc \inseqcnf'(\xi_{1n_1}) \conc
\ptst{\outbr.\setbr{\False}} \conc \fjmp{2} \conc \halt \conc
\\ \qquad \vdots \\ \quad
\inseqcnf'(\xi_{m1}) \conc \ldots \conc \inseqcnf'(\xi_{mn_m}) \conc
\ptst{\outbr.\setbr{\False}} \conc \fjmp{2} \conc \halt \conc
\ptst{\outbr.\setbr{\True}} \conc \halt\;,
\end{ldispl}%
where
\begin{ldispl}
\begin{aeqns}
\inseqcnf'(v_k)     & = & \ptst{\inbr{k}.\getbr} \conc \fjmp{2}\;,
\\
\inseqcnf'(\Lnot v_k)& = & \ntst{\inbr{k}.\getbr} \conc \fjmp{2}\;.
\end{aeqns}
\end{ldispl}%
It is easy to see that no other jump instruction than $\fjmp{2}$ occurs
in $\inseqcnf(\phi)$.
Recall that a disjunction is satisfied if one of its disjuncts is
satisfied and a conjunction is satisfied if each of its conjuncts is
satisfied.
Using these facts, it is easy to prove by induction on the number of
clauses in a \CNF-formula, and in the basis step by induction on the
number of literals in a clause, that $\inseqcnf(\phi)$ computes the
Boolean function induced by $\phi$.
Moreover, it is easy to see that $\psize(\inseqcnf(\phi))$ is linear
in the size of $\phi$.
\qed
\end{proof}

In the proof of Proposition~\ref{prop-comput-cnfform}, it is shown that
the Boolean function induced by a \CNF-formula can be computed, without
using auxiliary Boolean registers, by an instruction sequence that 
contains no other jump instructions than $\fjmp{2}$.
However, the instruction sequence concerned contains the termination 
instruction more than once and both $\outbr.\setbr{\True}$ and
$\outbr.\setbr{\False}$.
This raises the question whether further restrictions are possible.
We have a negative result.
\begin{proposition}
\label{prop-comput-negative}
Let $\phi$ be the Boolean formula $v_1 \Land v_2 \Land v_3$.
Then there does not exist an $X \in \RISbr{0}{0}$ in which the 
termination instruction does not occur more than once and the basic 
instruction $\outbr.\setbr{\False}$ does not occur such that $X$ 
computes the Boolean function induced by $\phi$.
\end{proposition}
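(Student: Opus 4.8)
The plan is to assume, for contradiction, that such an $X \in \RISbr{0}{0}$ exists and to analyse the shape of its execution on the eight inputs in $\Bool^3$. First I would record three structural facts. Since $\outbr.\setbr{\False}$ is forbidden and the output register is initialised to $\BR_\False$, the only method that can alter it is $\setbr{\True}$, so the output register, once set to $\BR_\True$, stays $\BR_\True$; hence the value computed on an input is $\BR_\True$ exactly when the execution processes at least one $\outbr.\setbr{\True}$ (I will call such a primitive instruction a \emph{setting instruction}), and $\BR_\False$ otherwise. Next, execution must terminate on every input: running off the end of $X$ yields $\DeadEnd$, and by axiom A2 the apply operator then returns $\emptyserv \neq \BR_b$, so $X$ would fail to compute $f$; thus every execution must reach the unique occurrence of $\halt$, and by axiom A1 the output register is read off there. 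Finally, since $X$ contains no jump instruction, execution is a strictly forward scan in which each test advances either one or two positions (a skip); and, crucially, the reply of every test depends only on the input bits -- a test on $\inbr{i}.\getbr$ replies $b_i$, a test on $\outbr.\setbr{\True}$ always replies $\True$ -- so for a fixed input the whole path, and in particular whether it meets a setting instruction before $\halt$, is a fixed function of the input bits alone, independent of the current output value.

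Writing $f$ for the function induced by $v_1 \Land v_2 \Land v_3$, the requirement is that the execution of $X$ meets a setting instruction before $\halt$ precisely on the input $(\True,\True,\True)$. I would examine the path $\pi$ on $(\True,\True,\True)$; since $f(\True,\True,\True)=\True$, $\pi$ reaches a first setting instruction, at position $p$, before $\halt$. Let $R \subseteq \set{1,2,3}$ collect the indices $i$ of the input registers $\inbr{i}$ that are tested on $\pi$ strictly before $p$. If $|R| < 3$, I would choose $i \notin R$ and flip $b_i$ to $\False$: before reaching $p$ the path on this new input meets no test of $\inbr{i}$ and the other input bits are unchanged, while tests on setting instructions are input-independent, so the path is identical up to $p$ and again executes the setting instruction there. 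Hence the output is $\BR_\True$, yet $f$ is $\False$ on this input (or the continuation fails to terminate, which is equally fatal) -- a contradiction.

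The remaining, and principal, obstacle is the case $|R| = 3$, where all three input registers are tested before the first setting instruction on the all-ones path. Here flipping a single bit perturbs $\pi$ at the first test of that bit, and the two successor positions differ by exactly one (the skipped instruction). My plan is to show that this single-step divergence reconverges before any setting instruction is passed, so that the perturbed input still meets a setting instruction. The key lemma is that, under one fixed input, two forward trajectories starting at adjacent positions and governed by the same step function (each step $+1$ or $+2$) must eventually occupy a common position: otherwise every position beyond the branch would have to be a skipping test, contradicting the presence of the setting instruction at $p$. The technical difficulty, and where the argument must be carried out with care, is that the all-ones input and its neighbour use step functions that differ at every test of the flipped register, so when that register is tested more than once before $p$ the two executions become entangled; resolving this requires tracking the first point of divergence together with the reconvergence guaranteed by the lemma, and using the fact that passing a setting instruction changes no reply, to conclude that the neighbouring input also meets a setting instruction before $\halt$ -- again contradicting $f = \False$ there.
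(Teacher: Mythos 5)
Your preliminary facts and your treatment of the case $|R|<3$ are correct, but the case $|R|=3$ --- which you yourself call the principal obstacle --- is not proved, and the claim your plan rests on is false. The claim is that the divergence caused by flipping one bit ``reconverges before any setting instruction is passed'', so that the perturbed input still executes a setting instruction. A test of the flipped register placed immediately before the first setting instruction refutes this: it deflects the perturbed path so that it steps over the setting instruction, and reconvergence occurs only beyond it. Concretely, consider
\[
X \;=\; \ptst{\inbr{1}.\getbr} \conc \ptst{\inbr{2}.\getbr} \conc
        \ptst{\inbr{3}.\getbr} \conc \ptst{\inbr{1}.\getbr} \conc
        \outbr.\setbr{\True} \conc \halt\;.
\]
This $X$ lies in $\RISbr{0}{0}$, contains one $\halt$ and no $\outbr.\setbr{\False}$, and terminates on every input. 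On $(\True,\True,\True)$ the execution visits positions $1,2,3,4,5,6$, so the first setting instruction is at $p=5$ and all three registers are tested before it, i.e.\ $|R|=3$. On $(\False,\True,\True)$ the execution visits $1,3,4,6$: it diverges at $1$, reconverges at $3$, diverges again at $4$, skips over the setting instruction at $5$, and halts with output $\False$. So for this flip your intended conclusion fails outright, and no amount of care about ``entangled'' step functions can establish it; the contradiction for this $X$ can only be found by flipping $b_2$ or $b_3$. (A smaller defect in the same place: your justification of the key lemma --- that non-reconvergence forces every later position to be a skipping test, ``contradicting the presence of the setting instruction at $p$'' --- overlooks that $\ntst{\outbr.\setbr{\True}}$ is simultaneously a skipping test and a setting instruction.)

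What is true, and what the paper's proof isolates, is a statement about the instruction $u_{l-1}$ immediately preceding the first setting instruction $u_l$ (your position $p$), not about path dynamics. Since there are no jumps, and since the unique $\halt$ must lie beyond $u_l$ (the all-ones execution halts only after setting the output), the only way an input can avoid executing $u_l$ is to execute $u_{l-1}$, with $u_{l-1}$ a test that fails on that input; hence \emph{all} inputs that avoid $u_l$ agree on the value of the single register tested by $u_{l-1}$. The paper expresses this by showing that the execution condition $\psi_{l-1}$ of $u_l$ satisfies $\psi_{l-1} \Liff (\psi_{l-2} \Limpl \theta)$ with $\theta \in \set{\True,\, v_j,\, \Lnot v_j}$, so that $\theta \Limpl \psi_{l-1}$ is a tautology while $\theta \Limpl \phi$ is not; consequently some input falsifying $\phi$ executes $u_l$, which together with your monotonicity and termination facts is the desired contradiction. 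In your framework this reads: only inputs sharing one fixed bit value can be deflected around $u_l$, and since the three single-bit flips of $(\True,\True,\True)$ have no bit value in common, at least one of them executes $u_l$. Replacing your reconvergence lemma by this observation about $u_{l-1}$ closes your case analysis; as it stands, the central case remains open.
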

\begin{proof}
Suppose that $X = u_1 \conc \ldots \conc u_k$ is an instruction
sequence from $\RISbr{0}{0}$ satisfying the restrictions and computing 
the Boolean function induced by $\phi$.
Consider the smallest $l \in [1,k]$ such that $u_l$ is either
$\outbr.\setbr{\True}$, $\ptst{\outbr.\setbr{\True}}$ or
$\ntst{\outbr.\setbr{\True}}$ (there must be such an $l$).
Because $\phi$ is not satisfied by all assignments to the variables
$v_1,v_2,v_3$, it cannot be the case that $l = 1$.
In the case where $l > 1$, for each $i \in [1,l-1]$, $u_i$ is either
$\inbr{j}.\getbr$, $\ptst{\inbr{j}.\getbr}$ or $\ntst{\inbr{j}.\getbr}$
for some $j \in \set{1,2,3}$.
This implies that, for each $i \in [0,l-1]$, there exists a basic
Boolean formula $\psi_i$ over the variables $v_1,v_2,v_3$ that is unique
up to logical equivalence such that, for each $b_1,b_2,b_3 \in \Bool$,
if the initial states of the Boolean registers named $\inbr{1}$,
$\inbr{2}$ and $\inbr{3}$ are $b_1$, $b_2$ and $b_3$, respectively, then
$u_{i+1}$ will be executed iff $\psi_i(b_1,b_2,b_3)$.
We have that $\psi_0 \Liff \True$ and,
for each $i \in [1,l-1]$,
$\psi_i \Liff (\psi_{i-1} \Limpl \True)$ if
 $u_i \equiv \inbr{j}.\getbr$,
$\psi_i \Liff (\psi_{i-1} \Limpl v_j)$ if
 $u_i \equiv \ptst{\inbr{j}.\getbr}$, and
$\psi_i \Liff (\psi_{i-1} \Limpl \Lnot v_j)$ if
 $u_i \equiv \ntst{\inbr{j}.\getbr}$.
Hence, for each $i \in [0,l-1]$, $\psi_i \Limpl \phi$ implies
$\True \Limpl \phi$ or $v_j \Limpl \phi$ or $\Lnot v_j \Limpl \phi$
for some $j \in \set{1,2,3}$.
Because the latter three Boolean formulas are no tautologies,
$\psi_i \Limpl \phi$ is no tautology either.
This means that, for each $i \in [1,l-1]$,
$\psi_i \Limpl \phi$ is not satisfied by all assignments to the
variables $v_1,v_2,v_3$.
Hence, $X$ cannot exist.
\qed
\end{proof}

According to Proposition~\ref{prop-comput-cnfform}, the Boolean function
induced by a \CNF-formula can be computed, without using auxiliary
Boolean registers, by an instruction sequence that contains no other 
jump instructions than $\fjmp{2}$ and whose length is linear in the size 
of the formula.
If we permit arbitrary jump instructions, this result generalizes from
\CNF-formulas to arbitrary basic Boolean formulas, i.e.\ Boolean
formulas in which no other connectives than $\Lnot$, $\Lor$ and $\Land$
occur.
\begin{proposition}
\label{prop-comput-boolform}
For each basic Boolean formula $\phi$, there exists an $X \in \ISbrna$
in which the basic instruction $\outbr.\setbr{\False}$ does not occur
such that $X$ computes the Boolean function induced by $\phi$ and
$\psize(X)$ is linear in the size of $\phi$.
\end{proposition}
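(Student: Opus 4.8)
The plan is to compile $\phi$ into an instruction sequence by structural induction, wrap it as $X = c(\phi) \conc \outbr.\setbr{\True} \conc \halt$, and exploit that the output register is initially $\False$: since $\outbr.\setbr{\False}$ is forbidden, the only way to affect the output is to execute $\outbr.\setbr{\True}$, so it suffices to arrange that $\outbr.\setbr{\True}$ is executed exactly when $\phi$ is satisfied. Concretely, I would first rewrite $\phi$ into negation normal form using the De Morgan laws and double-negation elimination; this preserves the induced Boolean function and keeps the size linear, so that $\Lnot$ occurs only at variables. I then design $c$ so that it maintains the following invariant: when $\extr{c(\phi)}$ is executed from its first instruction on input registers holding $b_1,\ldots,b_n$, control leaves $c(\phi)$ at the instruction immediately following it if $\phi(b_1,\ldots,b_n)$ holds (the true exit), and one position further on if it does not (the false exit). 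With this invariant, in $X$ the true exit is the instruction $\outbr.\setbr{\True}$ and the false exit is $\halt$, which is exactly what is wanted.

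For the base cases I take $c(v_k) = \ptst{\inbr{k}.\getbr}$ and $c(\Lnot v_k) = \ntst{\inbr{k}.\getbr}$; by the effect of a positive (respectively negative) test instruction these fall through precisely when the literal is satisfied (a truth constant, if allowed, compiles to $\fjmp{1}$ or $\fjmp{2}$, which always falls through or always skips). For the connectives I use forward jumps as trampolines. Writing $L = \psize(c(\psi))$, I set $c(\phi \Lor \psi) = c(\phi) \conc \fjmp{L+1} \conc c(\psi)$: when $\phi$ is true its true exit lands on $\fjmp{L+1}$, which leaps over all of $c(\psi)$ to the common true exit, and when $\phi$ is false its false exit lands on the first instruction of $c(\psi)$, whose own exits then coincide with those of the disjunction. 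Dually I set $c(\phi \Land \psi) = c(\phi) \conc \fjmp{2} \conc \fjmp{L+2} \conc c(\psi)$: the true exit of $\phi$ lands on $\fjmp{2}$ and is routed into $c(\psi)$, while the false exit of $\phi$ lands on $\fjmp{L+2}$ and leaps over $c(\psi)$ to the common false exit. (Alternatively, negation could be kept in place by defining $c$ together with a dual compilation having the opposite convention, $\Lnot$ merely switching between the two; but the normal-form route keeps the induction to four cases.)

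Correctness is then a routine induction on the structure of the normal form, verifying the exit invariant in each case exactly as sketched above, and the top-level wrapper converts the invariant into the required equation defining ``$X$ computes $f$'' (no auxiliary registers are used, so one may take $l = 0$, and both exits reach $\halt$, so no inaction arises). For the size bound, each literal contributes one instruction, each disjunction one extra instruction, and each conjunction two, so $\psize(c(\phi))$ is at most twice the size of the normal form; adding the two-instruction wrapper and using that the normal form is linear in the size of $\phi$ gives $\psize(X)$ linear in the size of $\phi$. The main thing to get right---and the reason arbitrary jump instructions are needed here whereas $\fjmp{2}$ sufficed for \CNF-formulas---is the bookkeeping of the trampoline offsets, in particular arranging that the false exit of the left operand of a conjunction (and the true exit of the left operand of a disjunction) jumps a distance that depends on the length of the already-compiled right operand so as to reach the shared exit; everything else is a direct verification.
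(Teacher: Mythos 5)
Your proposal is correct and is essentially the paper's own proof: the same fall-through-on-true, skip-one-on-false exit invariant, the identical trampoline encodings $c(\phi) \conc \fjmp{L+1} \conc c(\psi)$ for disjunction and $c(\phi) \conc \fjmp{2} \conc \fjmp{L+2} \conc c(\psi)$ for conjunction, and the same two-instruction wrapper. The only (inessential) difference is negation: the paper handles $\Lnot \phi$ for arbitrary subformulas in place by appending a single $\fjmp{2}$ to the compiled operand, which swaps its two exits, whereas you preprocess to negation normal form and compile negated variables with negative test instructions.
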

\begin{proof}
Let $\inseqf$ be the function from the set of all basic Boolean formulas
containing the variables $v_1,\ldots,v_n$ to $\ISbrna$ defined as 
follows:
\begin{ldispl}
\inseqf(\phi) =
\inseqf'(\phi) \conc \ptst{\outbr.\setbr{\True}} \conc \halt\;,
\end{ldispl}%
where
\begin{ldispl}
\begin{aeqns}
\inseqf'(v_k) & = & \ptst{\inbr{k}.\getbr}\;,
\\
\inseqf'(\Lnot \phi) & = & \inseqf'(\phi) \conc \fjmp{2}\;,
\\
\inseqf'(\phi \Lor \psi) & = &
\inseqf'(\phi) \conc
\fjmp{\psize(\inseqf'(\psi)){+}1} \conc \inseqf'(\psi)\;,
\\
\inseqf'(\phi \Land \psi) & = &
\inseqf'(\phi) \conc \fjmp{2} \conc
\fjmp{\psize(\inseqf'(\psi)){+}2} \conc \inseqf'(\psi)\;.
\end{aeqns}
\end{ldispl}%
Using the same facts about disjunctions and conjunctions as in the proof
of Proposition~\ref{prop-comput-cnfform}, it is easy to prove by
induction on the structure of $\phi$ that $\inseqf(\phi)$ computes the
Boolean function induced by $\phi$. \sloppy
Moreover, it is easy to see that $\psize(\inseqf(\phi))$ is linear in the
size of $\phi$.
\qed
\end{proof}

In the next proposition, we consider Boolean circuits instead of Boolean 
formulas.

Let $C$ be a Boolean circuit with $n$ input nodes and a single output 
node.
Then the Boolean function \emph{induced} by $C$ is the $n$-ary Boolean 
function $f$ defined by $f(b_1,\ldots,b_n) = C(b_1,\ldots,b_n)$, where
$C(b_1,\ldots,b_n)$ denotes the output of $C$ on input 
$(b_1,\ldots,b_n)$.

Because Boolean formulas can be looked upon as Boolean circuits with a 
single output node in which all gates have out-degree~$1$, the question 
arises whether Proposition~\ref{prop-comput-boolform} generalizes from 
Boolean formulas to Boolean circuits with a single output node.
This question can be answered in the affirmative if we permit the use of
auxiliary Boolean registers.
\begin{proposition}
\label{prop-comput-boolcirc}
For each Boolean circuit $C$ with a single output node that contains no 
other gates than $\Lnot$-gates, $\Lor$-gates and $\Land$-gates, there 
exists an $X \in \ISbr$ in which the basic instruction 
$\outbr.\setbr{\False}$ does not occur such that $X$ computes the 
Boolean function induced by $C$ and $\psize(X)$ is linear in the size of 
$C$.
\end{proposition}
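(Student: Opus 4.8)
The plan is to mimic the structure of the proof of Proposition~\ref{prop-comput-boolform}, but to exploit auxiliary Boolean registers to store the intermediate values computed at the gates of the circuit, since gates of a circuit (unlike the subformulas of a formula) may have out-degree greater than~$1$ and so their value may be reused. First I would fix a topological ordering $g_1,\ldots,g_s$ of the gates of $C$, where the input nodes receive the first $n$ indices and $g_s$ is the output gate. The idea is to associate with each gate $g_i$ a dedicated auxiliary register $\auxbr{i}$ (the input nodes being handled by the input registers $\inbr{1},\ldots,\inbr{n}$), and to generate, gate by gate in topological order, a short block of instructions that reads the registers holding the values of the predecessor gates, computes the gate's Boolean operation, and writes the result into $\auxbr{i}$. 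Because the gates are processed in topological order, whenever the block for $g_i$ is reached, the registers for all of its inputs already hold the correct values.

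The key steps, in order, are as follows. For each gate type I would give a fixed-size instruction-sequence gadget. For a $\Lnot$-gate with input held in register $f$, a gadget of the shape
\begin{ldispl}
\ntst{f.\getbr} \conc \fjmp{2} \conc \auxbr{i}.\setbr{\True}
\end{ldispl}
writes $\True$ into $\auxbr{i}$ exactly when the input is $\False$; and similarly for $\Lor$- and $\Land$-gates using a chain of positive/negative test instructions on the two input registers together with a jump and a single $\auxbr{i}.\setbr{\True}$, in the spirit of the gadgets $\inseqf'(\phi\Lor\psi)$ and $\inseqf'(\phi\Land\psi)$ in the previous proof. Since each auxiliary register is initialised to $\False$ (by the definition of \emph{computes}), I only ever need to write $\True$, so $\auxbr{i}.\setbr{\False}$ and in particular $\outbr.\setbr{\False}$ are never used. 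After the block for the output gate $g_s$, I would append a final gadget of the form $\ptst{\auxbr{s}.\getbr} \conc \outbr.\setbr{\True} \conc \halt$ to copy the output-gate value into $\outbr$ and terminate. Concatenating these blocks in topological order yields the instruction sequence $X$.

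I would then verify correctness by induction on the topological ordering, proving that after the block for $g_i$ has been executed the register $\auxbr{i}$ holds the value that gate $g_i$ outputs on the given input; the output-gate case plus the final copy-and-halt gadget then gives that $X$ computes the induced function. Finally, since each gate contributes a gadget of bounded size (a constant number of primitive instructions per gate, with jump offsets bounded by these constants), the total length is $O(s)$, i.e.\ linear in the size of~$C$, and the construction uses only permitted basic instructions and avoids $\outbr.\setbr{\False}$. The main obstacle I anticipate is purely bookkeeping rather than conceptual: writing the $\Lor$- and $\Land$-gadgets so that, after reading the two predecessor registers and setting $\auxbr{i}$ (or leaving it at its initial $\False$), control always falls through to the \emph{next} gate's block regardless of which branch was taken. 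This requires the jump offsets inside each gadget to be chosen correctly so that the gadget behaves as a self-contained ``compute and fall through'' unit, exactly the kind of careful offset management already illustrated by the $\fjmp{\psize(\inseqf'(\psi))+1}$ and $\fjmp{\psize(\inseqf'(\psi))+2}$ constructs in the proof of Proposition~\ref{prop-comput-boolform}.
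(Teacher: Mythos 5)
Your overall plan is exactly the paper's own proof: fix a topological ordering of the gates, dedicate the auxiliary register $\auxbr{i}$ to gate $g_i$, emit for each gate a constant-size fall-through gadget that reads the predecessors' registers and conditionally performs $\auxbr{i}.\setbr{\True}$ (never $\setbr{\False}$, since all registers are initialized to $\False$), append a final gadget copying the output gate's register to $\outbr$, and conclude by induction along the ordering together with the observation that each gate contributes $O(1)$ instructions. Even the offset bookkeeping you flag as the main obstacle is simpler than in Proposition~\ref{prop-comput-boolform}: in the circuit construction the jumps never have to skip over recursively generated subsequences, so constant offsets ($\fjmp{2}$ for $\Lnot$- and $\Lor$-gates, $\fjmp{2}$ and $\fjmp{3}$ for $\Land$-gates) suffice.

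However, the one gadget you wrote out explicitly is wrong as stated. In $\ntst{f.\getbr} \conc \fjmp{2} \conc \auxbr{i}.\setbr{\True}$, the negative test proceeds to the next instruction when the reply is $\False$ and skips it when the reply is $\True$. So when the input register holds $\False$, the jump $\fjmp{2}$ is executed and the set instruction is skipped, leaving $\auxbr{i}$ at $\False$; when it holds $\True$, the jump is skipped and $\auxbr{i}.\setbr{\True}$ is executed. Your gadget therefore sets $\auxbr{i}$ exactly when the input is $\True$, i.e.\ it computes the identity function, the opposite of the negation you claim. The repair is immediate: use a positive test, $\ptst{f.\getbr} \conc \fjmp{2} \conc \auxbr{i}.\setbr{\True}$ (which is the paper's gadget), or equivalently drop the jump and write $\ntst{f.\getbr} \conc \auxbr{i}.\setbr{\True}$. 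With that polarity fixed, and with the $\Lor$- and $\Land$-gadgets spelled out analogously, your proposal coincides with the paper's proof.
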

\begin{proof}
Let $\inseqc$ be the function from the set of all Boolean circuits with
input nodes $\inode{1},\ldots,\inode{n}$, gates 
$\gate{1},\ldots,\gate{m}$ and a single output node $\onode$ to 
$\ISbrna$ defined as follows:
\begin{ldispl}
\inseqc(C) =
\inseqc'(\gate{1}) \conc \ldots \conc \inseqc'(\gate{m}) \conc
\ptst{\auxbr{m}.\getbr} \conc
\ptst{\outbr.\setbr{\True}} \conc \halt\;,
\end{ldispl}%
where
\begin{ldispl}
\inseqc'(\gate{k}) = {} \\ \quad
\inseqc''(\pnode) \conc \fjmp{2} \conc \ptst{\auxbr{k}.\setbr{\True}}
\\ \quad
\mbox
 {if $\gate{k}$ is a $\Lnot$-gate with direct preceding node $\pnode$}\;,
\\
\inseqc'(\gate{k}) = {} \\ \quad
\inseqc''(\pnode) \conc \fjmp{2} \conc
\inseqc''(\pnode') \conc \ptst{\auxbr{k}.\setbr{\True}}
\\ \quad
\mbox
 {if $\gate{k}$ is a $\Lor$-gate with direct preceding nodes $\pnode$
  and $\pnode'$}\;,
\\
\inseqc'(\gate{k}) = {} \\ \quad
\inseqc''(\pnode) \conc \fjmp{2} \conc \fjmp{3} \conc
\inseqc''(\pnode') \conc \ptst{\auxbr{k}.\setbr{\True}}
\\ \quad
\mbox
 {if $\gate{k}$ is a $\Land$-gate with direct preceding nodes $\pnode$
  and $\pnode'$}\;,
\end{ldispl}%
and
\begin{ldispl}
\begin{aceqns}
\inseqc''(\inode{k}) & = & \ptst{\inbr{k}.\getbr}\;,
\\
\inseqc''(\gate{k}) & = & \ptst{\auxbr{k}.\getbr}\;.
\end{aceqns}
\end{ldispl}%
Using the same facts about disjunctions and conjunctions as in the
proofs of Propositions~\ref{prop-comput-cnfform}
and~\ref{prop-comput-boolform}, it is easy to prove by induction on
the depth of $C$ that $\inseqc(C)$ computes the Boolean function induced
by $C$ if $\gate{1},\ldots,\gate{m}$ is a topological sorting of the
gates of $C$. \sloppy
Moreover, it is easy to see that $\psize(\inseqc(C))$ is linear in the
size of $C$.
\qed
\end{proof}

\PLIS\ includes Boolean function families that correspond to
uncomput\-able functions from $\seqof{\Bool}$ to $\Bool$.
Take an undecidable set $N \subseteq \Nat$ and consider the Boolean
function family $\indfam{f_n}{n \in \Nat}$ with, for each $n \in \Nat$,
$\funct{f_n}{\Bool^n}{\Bool}$ defined by
\begin{ldispl}
\begin{gceqns}
f_n(b_1,\ldots,b_n) = \True  & \mif n \in N\;,
\\
f_n(b_1,\ldots,b_n) = \False & \mif n \notin N\;.
\end{gceqns}
\end{ldispl}%
For each $n \in N$, $f_n$ is computed by the instruction sequence
$\outbr.\setbr{\True} \conc \halt$.
For each $n \notin N$, $f_n$ is computed by the instruction sequence
$\outbr.\setbr{\False} \conc \halt$.
The length of these instruction sequences is constant in $n$.
Hence, $\indfam{f_n}{n \in \Nat}$ is in \PLIS.
However, the corresponding function $\funct{f}{\seqof{\Bool}}{\Bool}$ is
clearly uncomputable.
This reminds of the fact that \PTpoly\ includes uncomputable functions
from $\seqof{\Bool}$ to $\Bool$.

It happens that \PLIS\ and \PTpoly\ coincide, provided that we identify
each Boolean function family $\indfam{f_n}{n \in \Nat}$ with the unique
function $\funct{f}{\seqof{\Bool}}{\Bool}$ such that for each
$n \in \Nat$, for each $w \in \Bool^n$, $f(w) = f_n(w)$.
\begin{theorem}
\label{theorem-PLIS-is-PTpoly}
$\PLIS = \PTpoly$.
\end{theorem}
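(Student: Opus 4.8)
The plan is to prove $\PLIS = \PTpoly$ by establishing the two inclusions separately, using the results already built up in Section~\ref{sect-boolform-and-boolcirc} together with the standard characterization of $\PTpoly$ in terms of polynomial-size Boolean circuit families (from Chapter~6 of~\cite{AB09a}, on whose definitions this section relies). Recall that $\PTpoly$ is the class of functions $\funct{f}{\seqof{\Bool}}{\Bool}$ decided by a family $\indfam{C_n}{n \in \Nat}$ of Boolean circuits of polynomial size, where $C_n$ has $n$ inputs and one output. Under the identification of a Boolean function family $\indfam{f_n}{n \in \Nat}$ with the function $f$ such that $f(w) = f_{|w|}(w)$, this places $\PLIS$ and $\PTpoly$ on the same collection of objects, so comparing them is meaningful.

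For the inclusion $\PTpoly \subseteq \PLIS$, I would start from a polynomial-size circuit family $\indfam{C_n}{n \in \Nat}$ computing $f$. Since $\PTpoly$ can be taken over circuits with only $\Lnot$-, $\Lor$- and $\Land$-gates (negation-normal or standard basis), Proposition~\ref{prop-comput-boolcirc} applies directly to each $C_n$: for every $n$ there is an $X_n \in \ISbr$ that computes the $n$-ary Boolean function induced by $C_n$, namely $f_n$, with $\psize(X_n)$ linear in the size of $C_n$. Because the circuit sizes are bounded by a polynomial in $n$, the instruction-sequence lengths $\psize(X_n)$ are likewise bounded by a polynomial in $n$, so a single $h \in \poly$ witnesses the required bound for all $n$. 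Hence $\indfam{f_n}{n \in \Nat} \in \PLIS$.

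For the reverse inclusion $\PLIS \subseteq \PTpoly$, I would take $\indfam{f_n}{n \in \Nat} \in \PLIS$, so there is an $h \in \poly$ and, for each $n$, an $X_n \in \ISbr$ computing $f_n$ with $\psize(X_n) \leq h(n)$. The task is to convert each $X_n$ into a Boolean circuit $C_n$ of size polynomial in $n$ computing $f_n$. The natural route is to simulate the execution of $\extr{X_n}$ on the Boolean registers by a circuit: each of the at most $h(n)$ primitive instructions can be executed at most once along the single-pass behaviour, and the state of the finitely many Boolean registers (the $n$ input registers, the output register, and the at most $O(h(n))$ auxiliary registers actually named in $X_n$) after processing each instruction is a Boolean function of the input, computable from the previous register states and the control flow by a constant-size gadget of $\Lnot$-, $\Lor$- and $\Land$-gates. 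Laying these gadgets out along the instruction sequence yields a circuit whose size is polynomial in $\psize(X_n)$ and hence in $n$, and whose output node reads off the final contents of $\outbr$, which by definition of ``$X_n$ computes $f_n$'' equals $f_n(b_1,\ldots,b_n)$.

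The main obstacle I expect is the careful handling of control flow in the simulation for the second inclusion: unlike the clean structural recursions in Propositions~\ref{prop-comput-cnfform}--\ref{prop-comput-boolcirc}, here the instruction sequence $X_n$ is arbitrary in $\ISbr$, so forward jumps and test instructions must be encoded by circuitry tracking, for each instruction position, a ``reached-and-with-reply'' predicate that is itself a Boolean function of the inputs. Getting these predicates to compose correctly, so that register reads and writes are gated by the right reachability conditions while keeping the gate count linear in $\psize(X_n)$, is the delicate part; everything else reduces to routine bookkeeping and the observation that polynomial length in $n$ yields polynomial circuit size in $n$.
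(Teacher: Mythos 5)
Your proposal is correct, and its first half coincides with the paper's: both prove $\PTpoly \subseteq \PLIS$ by applying Proposition~\ref{prop-comput-boolcirc} to each circuit $C_n$ of a polynomial-size family and noting that linear in the size of $C_n$ implies polynomial in $n$. For the converse inclusion $\PLIS \subseteq \PTpoly$, however, you take a genuinely different route. The paper does not compile instruction sequences into circuits at all; it invokes the characterization of $\PTpoly$ in terms of Turing machines that take advice: the machine receives a binary description of $X_n$ as its advice string and simply simulates the execution of $X_n$. The only technical care in the paper's argument is that, without normalization, an instruction sequence of length $k$ could mention auxiliary registers with huge indices or huge jump lengths, making its description long; this is handled by assuming without loss of generality that $X_n \in \RISbr{k-1}{k-1}$ with $k = \psize(X_n)$ (justified by the remarks preceding Proposition~\ref{prop-inseq-number}), so that both the advice length and the simulation time are polynomial in $n$. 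Your route instead builds, per instruction position, a reachability predicate and register-update gadgets, yielding a circuit of size linear in $\psize(X_n)$; this works (single-pass execution with forward-only jumps makes the per-position register contents and reachability well-defined Boolean functions of the input, and the total number of control-flow edges is linear in $\psize(X_n)$, with only registers actually named in $X_n$ needing to be tracked), and it has the merit of staying entirely within the circuit model, making the whole theorem self-contained relative to the circuit definition of $\PTpoly$. What the paper's approach buys is brevity: all the simulation bookkeeping you correctly identify as the delicate part is absorbed into the standard Turing-machine-with-advice characterization, at the cost of needing the $\RISbr{k-1}{k-1}$ normalization, which your circuit construction largely sidesteps.
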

\begin{proof}
We will prove the inclusion $\PTpoly \subseteq \PLIS$ using the
definition of $\PTpoly$ in terms of Boolean circuits and we will prove
the inclusion $\PLIS \subseteq \PTpoly$ using the characterization of
$\PTpoly$ in terms of Turing machines that take advice 
(see e.g.\ Chapter~6 of~\cite{AB09a}).

$\PTpoly \subseteq \PLIS$:
Suppose that $\indfam{f_n}{n \in \Nat}$ in $\PTpoly$.
Then, for all $n \in \Nat$, there exists a Boolean circuit $C$ such that
$C$ computes $f_n$ and the size of $C$ is polynomial in $n$.
For each $n \in \Nat$, let $C_n$ be such a $C$.
From Proposition~\ref{prop-comput-boolcirc} and the fact that linear in
the size of $C_n$ implies polynomial in $n$, it follows that each
Boolean function family in $\PTpoly$ is also in $\PLIS$.

$\PLIS \subseteq \PTpoly$:
Suppose that $\indfam{f_n}{n \in \Nat}$ in $\PLIS$.
Then, for all $n \in \Nat$, there exists an $X \in \ISbr$ such that $X$
computes $f_n$ and $\psize(X)$ is polynomial in $n$.
For each $n \in \Nat$, let $X_n$ be such an $X$.
Then $f$ can be computed by a Turing machine that, on an input of size
$n$, takes a binary description of $X_n$ as advice and then just
simulates the execution of $X_n$.
It is easy to see that under the assumption that 
$X_n \in \RISbr{k-1}{k-1}$, where $k = \psize(X_n)$, the size of the 
description of $X_n$ and the number of steps that it takes to simulate 
the execution of $X_n$ are both polynomial in $n$.
We can make this assumption without loss of generality (see the remarks
immediately preceding Proposition~\ref{prop-inseq-number}).
Hence, each Boolean function family in $\PLIS$ is also in $\PTpoly$.
\qed
\end{proof}

It is unknown to us whether $\nuc{\ARISbr{k}}{\poly}$ is different from 
\PLIS\ for all $k \in \Nat$ (see also Section~\ref{sect-concl}).

\section{Non-uniform Super-polynomial Complexity Conjecture}
\label{sect-conjecture}

In this section, we formulate a complexity conjecture which is a
counterpart of the well-known complexity theoretic conjecture that
$\NPT \not\subseteq \PTpoly$ in the current setting.
The definitions of \NPT, \NPT-hardness, \NPT-completeness and related 
notions on which some results in this section and the coming ones are 
based are the ones from Chapter~2 of~\cite{AB09a}.

The counterpart of the conjecture that $\NPT \not\subseteq \PTpoly$ 
formulated in this section corresponds to the conjecture that
$\iiiSAT \notin \PTpoly$.
By the \NPT-completeness of \iiiSAT, $\iiiSAT \notin \PTpoly$ is 
equivalent to $\NPT \not\subseteq \PTpoly$.
If the conjecture that $\NPT \not\subseteq \PTpoly$ is right, then the
conjecture that $\NPT \neq \PT$ is right as well.

To formulate the conjecture, we need a Boolean function family
$\indfam{\iiiSATC_n}{n \in \Nat}$ that corresponds to \iiiSAT.
We obtain this Boolean function family by encoding \iiiCNF-formulas as
sequences of Boolean values.

We write $\ndisj(k)$ for
$\binom{2k}{1} + \binom{2k}{2} + \binom{2k}{3}$.%
\footnote
{As usual, we write $\binom{k}{l}$ for the number of $l$-element
 subsets of a $k$-element set.
}
$\ndisj(k)$ is the number of combinations of at most $3$
elements from a set with $2k$ elements.
Notice that $\ndisj(k) = (4k^3 + 5k) / 3$.

It is assumed that a countably infinite set $\set{v_1,v_2,\ldots}$ of
propositional variables has been given.
Moreover, it is assumed that a family of bijections
\begin{ldispl}
\indfam
 {\funct{\alpha_k}{[1,\ndisj(k)]}
        {\set{L \subseteq \set{v_1,\Lnot v_1,\ldots,v_k,\Lnot v_k} \where
              1 \leq \card(L) \leq 3}}}
 {k \in \Nat}
\end{ldispl}%
has been given that satisfies the following two conditions:
\begin{ldispl}
\Forall{i \in \Nat}
 {\Forall{j \in [1,\ndisj(i)]}{{\alpha_i}^{-1}(\alpha_{i+1}(j)) = j}}\;,
\\
\alpha\; \mbox{is polynomial-time computable}\;,
\end{ldispl}%
\par
where
$\funct{\alpha}{\Natpos}
       {\set{L \subseteq \set{v_1,\Lnot v_1,v_2,\Lnot v_2,\ldots} \where
             1 \leq \card(L) \leq 3}}$
is defined by
\begin{ldispl}
\mbox{} \hspace*{\leftmargin}
\alpha(i) = \alpha_{\min \set{j \where i \in [1,\ndisj(j)]}}(i)\;.
\end{ldispl}%
The function $\alpha$ is well-defined owing to the first condition
on $\indfam{\alpha_k}{k \in \Nat}$.
The second condition is satisfiable, but it is not satisfied by all
$\indfam{\alpha_k}{k \in \Nat}$ satisfying the first condition.

The basic idea underlying the encoding of \iiiCNF-formulas as sequences
of Boolean values is as follows:
\begin{itemize}
\item
if $n = \ndisj(k)$ for some $k \in \Nat$, then the input of $\iiiSATC_n$
consists of one Boolean value for each disjunction of at most three
literals from the set $\set{v_1,\Lnot v_1,\ldots,v_k,\Lnot v_k}$;
\item
each Boolean value indicates whether the corresponding disjunction
occurs in the encoded \iiiCNF-formula;
\item
if $\ndisj(k) < n < \ndisj(k+1)$ for some $k \in \Nat$, then only the
first $\ndisj(k)$ Boolean values form part of the encoding.
\end{itemize}

For each $n \in \Nat$, $\funct{\iiiSATC_n}{\Bool^n}{\Bool}$ is
defined as follows:
\begin{itemize}
\item
if $n = \ndisj(k)$ for some $k \in \Nat$:
\begin{ldispl}
\iiiSATC_n(b_1,\ldots,b_n) = \True
\;\;\;\mathrm{iff}\;\;\;
\displaystyle
\LAND{i \in [1,n] \;\mathrm{s.t.}\; b_i = \True}
 \LOR{} \alpha_k(i)
\;\;\mathrm{is\;satisfiable}\;,
\end{ldispl}%
where $k$ is such that $n = \ndisj(k)$;
\item
if $\ndisj(k) < n < \ndisj(k+1)$ for some $k \in \Nat$:
\begin{ldispl}
\iiiSATC_n(b_1,\ldots,b_n) =
\iiiSATC_{\ndisj(k)}(b_1,\ldots,b_{\ndisj(k)})\;,
\end{ldispl}%
where $k$ is such that $\ndisj(k) < n < \ndisj(k+1)$.
\end{itemize}

Because $\indfam{\alpha_k}{k \in \Nat}$ satisfies the condition that
${\alpha_i}^{-1}(\alpha_{i+1}(j)) = j$ for all $i \in \Nat$ and
$j \in [1,\ndisj(i)]$, we have for each $n \in \Nat$, for all
$b_1,\ldots,b_n \in \Bool$:
\begin{ldispl}
\iiiSATC_n(b_1,\ldots,b_n) =
\iiiSATC_{n+1}(b_1,\ldots,b_n,\False)\;.
\end{ldispl}%
In other words, for each $n \in \Nat$, $\iiiSATC_{n+1}$ can in essence
handle all inputs that $\iiiSATC_n$ can handle.
We will come back to this phenomenon in Section~\ref{sect-projective}.

\iiiSATC\ is meant to correspond to \iiiSAT.
Therefore, the following theorem does not come as a surprise.
We identify in this theorem the Boolean function family
$\iiiSATC = \indfam{\iiiSATC_n}{n \in \Nat}$ with the unique function
$\funct{\iiiSATC}{\seqof{\Bool}}{\Bool}$ such that for each
$n \in \Nat$, for each $w \in \Bool^n$, $\iiiSATC(w) = \iiiSATC_n(w)$.
\begin{theorem}
\label{theorem-3SATC-is-NP-compl}
$\iiiSATC$ is \NPT-complete.
\end{theorem}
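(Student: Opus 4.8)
The plan is to show $\iiiSATC$ is \NPT-complete by establishing the two defining conditions: membership in \NPT\ and \NPT-hardness. Both will be handled by relating $\iiiSATC$ to the standard problem \iiiSAT, which is known to be \NPT-complete, via polynomial-time many-one reductions in both directions. The key resource driving everything is the assumed polynomial-time computability of the family of bijections $\indfam{\alpha_k}{k \in \Nat}$, i.e.\ of the function $\alpha$, which lets us translate efficiently between an encoded input $(b_1,\ldots,b_n) \in \Bool^n$ and the actual \iiiCNF-formula it represents.

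First I would prove $\iiiSATC \in \NPT$. Given an input $w \in \Bool^n$, I would locate the largest $k$ with $\ndisj(k) \leq n$ (computable in polynomial time since $\ndisj(k) = (4k^3 + 5k)/3$ grows polynomially, so $k$ is logarithmic-ish in $n$ and certainly polynomially bounded). Using the polynomial-time computability of $\alpha_k$, I would decode $w$ into the \iiiCNF-formula $\LAND{i \in [1,\ndisj(k)]\;\mathrm{s.t.}\; b_i = \True} \LOR{} \alpha_k(i)$ whose size is polynomial in $n$, and then invoke the standard nondeterministic polynomial-time satisfiability check (guess an assignment to $v_1,\ldots,v_k$, verify each clause). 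The definition of $\iiiSATC_n$ in the range $\ndisj(k) < n < \ndisj(k+1)$ simply ignores the extra bits, so the decoding using $\alpha_k$ at the largest valid $k$ is exactly the right formula; correctness is immediate from the definition.

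Next I would prove \NPT-hardness by reducing \iiiSAT\ to $\iiiSATC$. Given a \iiiCNF-formula $\phi$ over variables among $\set{v_1,\ldots,v_k}$, I would set $n = \ndisj(k)$ and construct $w = (b_1,\ldots,b_n) \in \Bool^n$ by setting $b_i = \True$ exactly when the clause $\LOR{}\alpha_k(i)$ occurs in $\phi$; since $\alpha_k$ is a bijection onto all clauses of at most three literals over $\set{v_1,\Lnot v_1,\ldots,v_k,\Lnot v_k}$, every clause of $\phi$ has a unique index, and this map is computable in polynomial time in the size of $\phi$ by the polynomial-time computability of $\alpha$. By construction, $\iiiSATC_n(w) = \True$ iff $\phi$ is satisfiable, so this is a valid many-one reduction. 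Combining hardness with membership yields \NPT-completeness.

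The main obstacle I anticipate is the bookkeeping around the encoding rather than any deep mathematics. Two points need care: (i) confirming that the index $k$ and hence the decoded formula have size polynomial in the input length $n$ — this rests on inverting the cubic relation $\ndisj(k) = (4k^3+5k)/3$ and on the stipulated polynomial-time computability of $\alpha$, so I would lean explicitly on that assumption; and (ii) handling the identification of the Boolean function family $\iiiSATC$ with a single function $\funct{\iiiSATC}{\seqof{\Bool}}{\Bool}$, making sure the reductions respect the string-length conventions and the padding behaviour captured by the identity $\iiiSATC_n(b_1,\ldots,b_n) = \iiiSATC_{n+1}(b_1,\ldots,b_n,\False)$. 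Once these encoding details are pinned down, both directions are routine polynomial-time reductions, and no delicate combinatorial argument beyond the bijection property of $\alpha_k$ is required.
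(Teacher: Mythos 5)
Your proof is correct and takes essentially the same approach as the paper: both directions rest on translating between bit-strings and \iiiCNF-formulas via the polynomial-time computable $\alpha$, and your hardness reduction (set $n = \ndisj(k)$ and let $b_i$ record whether the clause indexed by $\alpha_k(i)$ occurs in $\phi$) is exactly the paper's reduction of \iiiSAT\ to \iiiSATC. The only cosmetic difference is in the membership half, where you give a direct decode-then-guess-and-verify nondeterministic algorithm, while the paper packages the same decoding step as a polynomial-time Karp reduction from \iiiSATC\ to \iiiSAT\ and appeals to the \NPT-completeness of \iiiSAT.
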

\begin{proof}
$\iiiSATC$ is \NPT-complete iff \iiiSATC\ is in \NPT\ and \iiiSATC\ is
\NPT-hard.
Because \iiiSAT\ is \NPT-complete, it is sufficient to prove that
\iiiSATC\ is polynomial-time Karp reducible to \iiiSAT\ and
\iiiSAT\ is polynomial-time Karp reducible to \iiiSATC, respectively.
In the rest of the proof, $\alpha$ is defined as above.

\iiiSATC\ is polynomial-time Karp reducible to \iiiSAT:
Take the function $f$ from $\seqof{\Bool}$ to the set of all
\iiiCNF-formulas containing the variables $v_1,\ldots,v_k$ for some
$k \in \Nat$ that is defined by
$f(b_1,\ldots,b_n) =
 \LAND{i \in [1,\max \set{\ndisj(k) \where \ndisj(k) \leq n}]
   \;\mathrm{s.t.}\; b_i = \True}{}
  \LOR{}{} \alpha(i)$.
Then we have that
$\iiiSATC(b_1,\ldots,b_n) = \iiiSAT(f(b_1,\ldots,b_n))$.
It remains to show that $f$ is polyno\-mial-time computable.
To compute $f(b_1,\ldots,b_n)$, $\alpha$ has to be computed for a number
of times that is not greater than $n$ and $\alpha$ is computable in time
polynomial in $n$.
Hence, $f$ is polynomial-time computable.

\iiiSAT\ is polynomial-time Karp reducible to \iiiSATC:
Take the unique function $g$ from the set of all \iiiCNF-formulas
containing the variables $v_1,\ldots,v_k$ for some $k \in \Nat$ to
$\seqof{\Bool}$ such that for all \iiiCNF-formulas $\phi$ containing the
variables $v_1,\ldots,v_k$ for some $k \in \Nat$, $f(g(\phi)) = \phi$
and there exists no $w \in \seqof{\Bool}$ shorter than $g(\phi)$ such
that $f(w) = \phi$.
We have that $\iiiSAT(\phi) = \iiiSATC(g(\phi))$.
It remains to show that $g$ is polynomial-time computable.
Let $l$ be the size of $\phi$.
To compute $g(\phi)$, $\alpha$ has to be computed for each clause a
number of times that is not greater than $\ndisj(l)$ and $\alpha$ is
computable in time polynomial in $\ndisj(l)$.
Moreover, $\phi$ contains at most $l$ clauses.
Hence, $g$ is polynomial-time computable.
\qed
\end{proof}

Before we turn to the non-uniform super-polynomial complexity
conjecture, we touch lightly on the choice of the family of bijections
in the definition of $\iiiSATC$.
It is easy to see that the choice is not essential.
\pagebreak[2]
Let $\iiiSATC'$ be the same as \iiiSATC, but based on another family of
bijections, say $\indfam{\alpha'_n}{n \in \Nat}$, and let, for each
$i \in \Nat$, for each $j \in [1,\ndisj(i)]$,
$b'_j = b_{{\alpha_i}^{-1}(\alpha'_i(j))}$.
Then:
\begin{itemize}
\item
if $n = \ndisj(k)$ for some $k \in \Nat$:
\begin{ldispl}
\iiiSATC_n(b_1,\ldots,b_n) =
\iiiSATC'_n(b'_1,\ldots,b'_n)\;;
\end{ldispl}%
\item
if $\ndisj(k) < n < \ndisj(k+1)$ for some $k \in \Nat$:
\begin{ldispl}
\iiiSATC_n(b_1,\ldots,b_n) =
\iiiSATC'_n(b'_1,\ldots,b'_{\ndisj(k)},
                 b_{\ndisj(k)+1},\ldots,b_n)\;,
\end{ldispl}%
where $k$ is such that $\ndisj(k) < n < \ndisj(k+1)$.
\end{itemize}
This means that the only effect of another family of bijections is
another order of the relevant arguments.

The
\emph{non-uniform super-polynomial complexity conjecture} is the
following conjecture:
\begin{conjecture}
\label{conjecture-basic}
$\iiiSATC \notin \PLIS$.
\end{conjecture}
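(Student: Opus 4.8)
The plan is first to translate the conjecture into a statement about classical complexity classes, using the bridge results already proved. By Theorem~\ref{theorem-PLIS-is-PTpoly} we have $\PLIS = \PTpoly$, so $\iiiSATC \notin \PLIS$ is equivalent to $\iiiSATC \notin \PTpoly$. By Theorem~\ref{theorem-3SATC-is-NP-compl}, $\iiiSATC$ is \NPT-complete and therefore polynomial-time Karp equivalent to \iiiSAT; since $\PTpoly$ is closed under polynomial-time Karp reductions, $\iiiSATC \notin \PTpoly$ holds if and only if $\iiiSAT \notin \PTpoly$, which by the \NPT-completeness of \iiiSAT\ is precisely the assertion $\NPT \not\subseteq \PTpoly$. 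So, modulo the results of this paper, what one is asked to establish is exactly the classical conjecture, which is why the statement is offered here as a conjecture rather than a theorem.

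Granting this, the only direct line of attack is a super-polynomial lower bound on instruction-sequence length: one would try to show that, for infinitely many $n$, every $X \in \ISbr$ computing $\iiiSATC_n$ has $\psize(X)$ larger than every polynomial in $n$. A first idea is to reuse the counting machinery of Theorem~\ref{theorem-comput-large-inseq}, whose engine is the bound of Proposition~\ref{prop-inseq-number} on the number of short instruction sequences. The obstruction is immediate and structural: a counting argument only certifies the \emph{existence} of a hard Boolean function family, whereas $\iiiSATC$ is a single, fixed, highly regular family, and such arguments are by design blind to the identity of a particular function. Hence counting alone cannot deliver the conjecture.

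The remaining route is to pass through $\PLIS = \PTpoly$ and attack the equivalent circuit problem, namely to prove that \iiiSAT\ admits no polynomial-size Boolean circuits, and then to transport any such bound back along Proposition~\ref{prop-comput-boolcirc} and the Turing-machine-with-advice simulation used in the proof of Theorem~\ref{theorem-PLIS-is-PTpoly}. This is where the genuine difficulty resides, and it is the main obstacle: a super-polynomial circuit lower bound for an \NPT-complete problem is exactly the unresolved separation $\NPT \not\subseteq \PTpoly$, and, as noted in the introduction, establishing it would in particular force $\PT \neq \NPT$. The obstacle is therefore not a gap in some routine calculation but the fact that, once the bridge theorems are in place, the conjecture coincides with one of the central open problems of complexity theory; the well-known barriers to lower-bound techniques (relativization, and the natural-proofs barrier for the combinatorial methods one would first reach for) indicate that no elementary argument will close it.
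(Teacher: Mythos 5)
Your analysis is correct and matches the paper's own treatment: the statement is offered as a conjecture precisely because, via Theorems~\ref{theorem-PLIS-is-PTpoly} and~\ref{theorem-3SATC-is-NP-compl} and the \NPT-completeness of \iiiSAT, it is equivalent to $\iiiSAT \notin \PTpoly$, i.e.\ to $\NPT \not\subseteq \PTpoly$, which is exactly the equivalence the paper records as Theorem~\ref{theorem-equiv-conjectures}. The paper gives no proof of the conjecture itself, and your explanation of why neither the counting machinery of Proposition~\ref{prop-inseq-number} nor any elementary circuit argument can close it is consistent with the paper's presentation.
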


$\iiiSATC \notin \PLIS$ expresses in short that there does not exist a
polynomial function $\funct{h}{\Nat}{\Nat}$ such that for all
$n \in \Nat$ there exists an $X \in \ISbr$ such that $X$ computes
$\iiiSATC_n$ and $\psize(X) \leq h(n)$.
This corresponds with the following informal formulation of the
non-uniform super-polynomial complexity conjecture:
\begin{quote}
% !!
the lengths of the shortest instruction sequences that compute the \\
Boolean functions $\iiiSATC_n$ are not bounded by a polynomial in $n$.
\end{quote}

The statement that Conjecture~\ref{conjecture-basic} is a counterpart of
the conjecture that $\iiiSAT \notin \PTpoly$ is made rigorous in the
following theorem.
\begin{theorem}
\label{theorem-equiv-conjectures}
$\iiiSATC \notin \PLIS$ iff\, $\iiiSAT \notin \PTpoly$.
\end{theorem}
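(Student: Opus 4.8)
The plan is to reduce the claim to two facts already established in the excerpt: that $\PLIS = \PTpoly$ (Theorem~\ref{theorem-PLIS-is-PTpoly}) and that $\iiiSATC$ and $\iiiSAT$ are polynomial-time Karp reducible to one another (shown inside the proof of Theorem~\ref{theorem-3SATC-is-NP-compl}). Since the statement is an equivalence between two non-membership assertions, I would prove the logically equivalent contrapositive form on each side, namely $\iiiSATC \in \PLIS$ iff $\iiiSAT \in \PTpoly$, and then negate.

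First I would dispose of the left-hand side using Theorem~\ref{theorem-PLIS-is-PTpoly}. Under the identification of the Boolean function family $\iiiSATC$ with the corresponding function on $\seqof{\Bool}$ that is used there, $\PLIS = \PTpoly$ gives immediately that $\iiiSATC \in \PLIS$ holds iff $\iiiSATC \in \PTpoly$. So the whole equivalence reduces to showing $\iiiSATC \in \PTpoly$ iff $\iiiSAT \in \PTpoly$. Next I would bring in the two reductions extracted from the proof of Theorem~\ref{theorem-3SATC-is-NP-compl}: $\iiiSATC$ is polynomial-time Karp reducible to $\iiiSAT$, and $\iiiSAT$ is polynomial-time Karp reducible to $\iiiSATC$. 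Invoking the standard closure property that $\PTpoly$ is closed under polynomial-time Karp reductions (Chapter~6 of~\cite{AB09a}), the first reduction yields that $\iiiSAT \in \PTpoly$ implies $\iiiSATC \in \PTpoly$, and the second yields the converse implication. Combining these with the first step closes the argument.

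The only step requiring real care, as opposed to being a direct citation, is the closure of $\PTpoly$ under polynomial-time Karp reductions; everything else is bookkeeping. The point to verify is that a polynomial-time computable reduction $r$ has polynomially bounded output length, so that, given a polynomial-size advice scheme and polynomial-time machine for the target language, one can decide the source language within length $n$ by computing $r$ and then running the target machine, bundling the target advice strings for all output lengths up to the bound on $|r(x)|$ into a single still-polynomial-size advice string. I would state this closure explicitly (or cite it) and check that the reductions $f$ and $g$ exhibited earlier indeed satisfy the length bound, which they do since they are computed by polynomial-time machines. No further obstacle is expected, and the proof is short.
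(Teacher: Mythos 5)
Your proof is correct and follows essentially the same route as the paper's: the paper's (very terse) proof likewise combines Theorem~\ref{theorem-PLIS-is-PTpoly} with the \NPT-completeness of \iiiSATC\ and \iiiSAT, which supplies exactly the mutual Karp reductions you invoke, together with the (implicit) closure of \PTpoly\ under polynomial-time Karp reductions. Your write-up merely makes explicit the closure property and the advice-bundling argument that the paper leaves to the reader.
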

\begin{proof}
This follows immediately from Theorems~\ref{theorem-PLIS-is-PTpoly}
and~\ref{theorem-3SATC-is-NP-compl} and the fact that \iiiSAT\ is
\NPT-complete.
\qed
\end{proof}

\section{The Complexity Classes $\ndnuc{\IS}{\FN}$}
\label{sect-complclass-NPLIS}

In this section, we introduce a kind of non-uniform complexity classes
which includes a counterpart of the complexity class \NPTpoly\ in the 
setting of single-pass instruction sequences and show that this 
counterpart coincides with \NPTpoly.
Some results in this section are based on the definition of \NPT\ in 
terms of \PT, which can for example be found in~\cite{AB09a} (and which 
uses the idea of checking certificates), and the general definition of 
non-uniform complexity classes $\mathcal{C}/\FN$, which can for example 
be found in~\cite{BDG88a} (and which uses the idea of taking advice).

Let $\IS \subseteq \ISbr$ and 
let $\FN \subseteq \set{h \where \funct{h}{\Nat}{\Nat}}$.
\begin{comment}
be such that for all $h \in \FN$ there exists an $h' \in \FN$ such that, 
for all $n \in \Nat$, $h'(n) = h(n) + n$
\end{comment}
Then $\ndnuc{\IS}{\FN}$ is the class of all Boolean function families
$\indfam{f_n}{n \in \Nat}$ that satisfy:
\begin{quote}
there exist a monotonic $h \in \FN$ and a Boolean function family
$\indfam{g_n}{n \in \Nat} \in \nuc{\IS}{\FN}$ such that, 
for all $n \in \Nat$, for all $w \in \Bool^n$:
\begin{ldispl}
f_n(w) = \True \Liff 
\Exists{c \in \Bool^{h(n)}}{g_{n+h(n)}(w\,c) = \True}\;.
\end{ldispl}%
\end{quote}
If a $c \in \seqof{\Bool}$ and a $w \in \Bool^n$ for which 
$f_n(w) = \True$ satisfy $g_{n+h(n)}(w\,c) = \True$, 
then we call $c$ a \emph{certificate} for $w$.
\pagebreak[2]

In the sequel, the monotonicity requirement in the definition given 
above is only used to show that \NPLIS\ coincides with \NPTpoly\ (see 
Theorems~\ref{theorem-char-NPLIS} and~\ref{theorem-NPLIS-is-NPTpoly}).

For each $\IS \subseteq \ISbr$ and 
$\FN \subseteq \set{h \where \funct{h}{\Nat}{\Nat}}$, the connection 
between the complexity classes $\nuc{\IS}{\FN}$ and $\ndnuc{\IS}{\FN}$ 
is like the connection between the complexity classes \PT\ and \NPT\ in 
the sense that it concerns the difference in complexity between finding 
a valid solution and checking whether a given solution is valid.

We are primarily interested in the complexity class $\NPLIS$,%
\footnote
{In precursors of this paper, the temporary name 
 $\mathrm{P}^{**}$ is used for the complexity class $\NPLIS$ 
 (see e.g.~\cite{BM08g}).}
but we will also pay attention to other instantiations of the general 
definition just given.

We have that \PLIS\ is included in \NPLIS.
\begin{theorem}
\label{theorem-PLIS-incl-NPLIS}
$\PLIS \subseteq \NPLIS$.
\end{theorem}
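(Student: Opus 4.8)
The plan is to unfold the definitions of $\PLIS = \nuc{\ISbr}{\poly}$ and $\NPLIS = \ndnuc{\ISbr}{\poly}$ and to observe that every member of $\PLIS$ qualifies for membership in $\NPLIS$ with an \emph{empty} certificate. Concretely, suppose that $\indfam{f_n}{n \in \Nat} \in \PLIS$. To witness $\indfam{f_n}{n \in \Nat} \in \NPLIS$, I would take the monotonic polynomial $h$ to be the constant function defined by $h(n) = 0$ for all $n \in \Nat$, and I would take the auxiliary Boolean function family $\indfam{g_n}{n \in \Nat}$ to be $\indfam{f_n}{n \in \Nat}$ itself.

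Three things then have to be checked, all of them routine. First, $h$ must be a monotonic member of $\poly$: the constant zero function is polynomial and, being constant, monotonic. Second, $\indfam{g_n}{n \in \Nat}$ must lie in $\nuc{\ISbr}{\poly} = \PLIS$; this holds because $\indfam{g_n}{n \in \Nat} = \indfam{f_n}{n \in \Nat}$ and the latter is in $\PLIS$ by assumption. Third, the defining biconditional of $\ndnuc{\ISbr}{\poly}$ must hold for the chosen witnesses: for each $n \in \Nat$ and each $w \in \Bool^n$, since $h(n) = 0$ the set $\Bool^{h(n)}$ is the singleton consisting of the empty sequence, so $\Exists{c \in \Bool^{h(n)}}{g_{n+h(n)}(w\,c) = \True}$ reduces to $g_n(w) = \True$, which is just $f_n(w) = \True$. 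Hence $f_n(w) = \True \Liff \Exists{c \in \Bool^{h(n)}}{g_{n+h(n)}(w\,c) = \True}$ holds trivially.

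There is essentially no obstacle here: the content of the inclusion is simply the observation that the certificate quantifier in the definition of $\ndnuc{\IS}{\FN}$ may range over a single-element set, collapsing the ``nondeterministic'' definition to the ``deterministic'' one. The only points that need a moment's care are the conventions that $\Bool^0$ is a singleton and that appending the empty sequence to $w$ yields $w$, so that $g_{n+0}(w\,c) = g_n(w)$, together with the fact that the zero function counts as a monotonic polynomial; all of these are immediate from the definitions already in place.
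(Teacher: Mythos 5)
Your proof is correct and matches the paper's own argument exactly: the paper likewise takes the monotonic polynomial $h(n) = 0$, uses $\indfam{f_n}{n \in \Nat}$ itself as the witnessing family in $\nuc{\ISbr}{\poly}$, and observes that the empty sequence serves as certificate for every $w$. Your write-up merely spells out the routine verifications (that $\Bool^0$ is a singleton and that the constant zero function is a monotonic polynomial) in more detail than the paper does.
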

\begin{proof}
Suppose that $\indfam{f_n}{n \in \Nat} \in \PLIS$.
Then, for all $n \in \Nat$, for all $w \in \Bool^n$:
\begin{ldispl}
f_n(w) = \True \Liff
\Exists{c \in \Bool^{h(n)}}{f_{n+h(n)}(w\,c) = \True}
\end{ldispl}%
for the monotonic $h \in \poly$ defined by $h(n) = 0$ for all 
$n \in \Nat$, because the empty sequence can be taken as certificate for 
all $w$.
\qed
\end{proof}

Henceforth, we will use the notation $\ndnuc{\IS}{O(f(n))}$ for the 
complexity class
$\ndnuc{\IS}{\set{h \where \funct{h}{\Nat}{\Nat} \Land h(n) = O(f(n))}}$.
This notation is among other things used in the following corollary of 
the proof of Theorem~\ref{theorem-PLIS-incl-NPLIS}.
\begin{corollary}
\label{corollary-PLIS-incl-NPLIS}
For each $k \in \Nat$, 
$\nuc{\ISbr}{O(n^k)} \subseteq \ndnuc{\ISbr}{O(n^k)}$.
\end{corollary}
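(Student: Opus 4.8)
The plan is to re-run the argument from the proof of Theorem~\ref{theorem-PLIS-incl-NPLIS}, but tracking the length bound carefully so that it stays within the class $O(n^k)$ rather than merely within $\poly$. So suppose $\indfam{f_n}{n \in \Nat} \in \nuc{\ISbr}{O(n^k)}$; I must exhibit a monotonic advice-length function together with a witnessing family, both compatible with the $O(n^k)$ bound, that place $\indfam{f_n}{n \in \Nat}$ into $\ndnuc{\ISbr}{O(n^k)}$.

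For the witnesses I would take $\indfam{g_n}{n \in \Nat} = \indfam{f_n}{n \in \Nat}$ and let $h$ be the constant function defined by $h(n) = 0$ for all $n \in \Nat$. Two bookkeeping points then need checking. First, $h$ is monotonic and satisfies $h(n) = O(n^k)$ (the constant-zero function is $O(n^k)$ for every $k$), so $h$ lies in the function class abbreviated by $O(n^k)$. Second, the witnessing family $\indfam{g_n}{n \in \Nat}$ must lie in $\nuc{\ISbr}{O(n^k)}$; but since it is literally $\indfam{f_n}{n \in \Nat}$, this is exactly the hypothesis.

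It then remains only to verify the defining biconditional of $\ndnuc{\ISbr}{O(n^k)}$. With $h(n) = 0$ the quantifier ranges over $\Bool^0$, whose single element is the empty sequence, so for every $n \in \Nat$ and every $w \in \Bool^n$ the right-hand side $\Exists{c \in \Bool^{h(n)}}{g_{n+h(n)}(w\,c) = \True}$ collapses to $g_n(w) = \True$, which is $f_n(w) = \True$ because $g_n = f_n$. Hence $f_n(w) = \True \Liff \Exists{c \in \Bool^{h(n)}}{g_{n+h(n)}(w\,c) = \True}$ holds, with the empty sequence serving as certificate for every $w$ accepted by $f_n$.

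There is no real obstacle here: the statement is essentially the observation, already implicit in the proof of Theorem~\ref{theorem-PLIS-incl-NPLIS}, that the certificate construction used there adds zero advice length and keeps the witnessing family equal to the original one, so it respects any length class that is closed under the constant-zero function --- in particular $O(n^k)$ for each fixed $k$. The only care needed is to confirm that the ambient function class $O(n^k)$ does contain a monotonic function of the required form, which it does.
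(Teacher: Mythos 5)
Your proof is correct and is essentially the paper's own argument: the paper derives this corollary by observing that the proof of Theorem~\ref{theorem-PLIS-incl-NPLIS} (taking $g_n = f_n$, the monotonic constant-zero advice function $h$, and the empty sequence as certificate) goes through verbatim with $O(n^k)$ in place of $\poly$. Your bookkeeping that $h(n)=0$ is monotonic and lies in the class $O(n^k)$, and that the defining biconditional collapses to $f_n(w)=\True \Liff f_n(w)=\True$, is exactly the intended verification.
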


Henceforth, we will use the notation $\ndnuc{\IS}{B(f(n))}$ for the 
complexity class \sloppy
$\ndnuc{\IS}{\set{h \where \funct{h}{\Nat}{\Nat} \Land
                           \Forall{n \in \Natpos}{h(n) \leq f(n)}}}$.
This notation is among other things used in the following theorem about 
the complexity class $\ndnuc{\ISbr}{O(n^k)}$.
\begin{theorem}
\label{theorem-polynomials}
For each $k \in \Nat$, 
$\ndnuc{\ISbr}{O(n^k)} \subseteq
 \Union{a \in \Natpos} \ndnuc{\ISbr}{B(a\,n^k)}$.
\end{theorem}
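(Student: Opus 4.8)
The plan is to unfold the definition of membership in $\ndnuc{\ISbr}{O(n^k)}$ and to observe that each of the two $O(n^k)$-bounds involved becomes a genuine pointwise bound $a\,n^k$ over all positive arguments once the constant $a$ is chosen large enough to absorb the finitely many small arguments as well.

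First I would take an arbitrary $\indfam{f_n}{n \in \Nat} \in \ndnuc{\ISbr}{O(n^k)}$ and expand the definition. This yields a monotonic certificate-length function $h$ with $h(n) = O(n^k)$ together with a family $\indfam{g_n}{n \in \Nat} \in \nuc{\ISbr}{O(n^k)}$ such that, for all $n \in \Nat$ and all $w \in \Bool^n$, $f_n(w) = \True \Liff \Exists{c \in \Bool^{h(n)}}{g_{n+h(n)}(w\,c) = \True}$. Unfolding the inner membership $\indfam{g_n}{n \in \Nat} \in \nuc{\ISbr}{O(n^k)}$ in turn produces an instruction-length bound $h'$ with $h'(n) = O(n^k)$ such that every $g_m$ is computed by some $X \in \ISbr$ with $\psize(X) \leq h'(m)$.

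Next I would turn the two asymptotic bounds into pointwise ones. From $h(n) = O(n^k)$ and $h'(n) = O(n^k)$ there are constants $a_1, a_2 \in \Natpos$ and a common threshold $n_0 \in \Natpos$ with $h(n) \leq a_1\,n^k$ and $h'(n) \leq a_2\,n^k$ for all $n \geq n_0$. For the finitely many arguments $n \in [1,n_0{-}1]$ the values $h(n)$ and $h'(n)$ are finite, and since $n^k \geq 1$ for every $n \in \Natpos$, it suffices to pick $a \in \Natpos$ at least as large as $a_1$, $a_2$, and each of the finitely many numbers $h(n),h'(n)$ on this range; then $h(n) \leq a \leq a\,n^k$ and $h'(n) \leq a\,n^k$ hold for small $n$ as well. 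With this single $a$ we obtain $h(n) \leq a\,n^k$ and $h'(n) \leq a\,n^k$ for every $n \in \Natpos$. Note that $n = 0$ need not be considered, since the index set defining $\ndnuc{\ISbr}{B(a\,n^k)}$ quantifies only over $\Natpos$.

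Finally I would conclude that $h$ lies in the function class $\set{h'' \where \Forall{n \in \Natpos}{h''(n) \leq a\,n^k}}$ and, via $h'$, that $\indfam{g_n}{n \in \Nat} \in \nuc{\ISbr}{B(a\,n^k)}$; since $h$ is still monotonic, the very same witnesses $h$ and $\indfam{g_n}{n \in \Nat}$ establish $\indfam{f_n}{n \in \Nat} \in \ndnuc{\ISbr}{B(a\,n^k)}$, whence the family belongs to $\Union{a \in \Natpos} \ndnuc{\ISbr}{B(a\,n^k)}$. The only step requiring any care — and the closest thing to an obstacle — is the treatment of the finitely many small arguments, where the $O$-bound provides no information and one must enlarge $a$ to dominate the concrete values $h(n)$ and $h'(n)$; everything else is a direct rewriting of the definitions.
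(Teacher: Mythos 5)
Your proof is correct and takes essentially the same route as the paper, whose proof simply asserts the inclusion to be a direct consequence of the definition of $\ndnuc{\ISbr}{O(n^k)}$; your unfolding of the two $O(n^k)$ bounds into a single pointwise bound $a\,n^k$ (enlarging $a$ to absorb the finitely many small arguments, and noting that the $B$-classes only constrain $n \in \Natpos$) is exactly the detail the paper leaves implicit.
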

\begin{proof}
Let $k \in \Nat$.
It is a direct consequence of the definition of $\ndnuc{\ISbr}{O(n^k)}$  
that, for all Boolean function families $\indfam{f_n}{n \in \Nat}$,
$\indfam{f_n}{n \in \Nat} \in \ndnuc{\ISbr}{O(n^k)}$ implies that
there exists an $a \in \Natpos$ such that
$\indfam{f_n}{n \in \Nat} \in \ndnuc{\ISbr}{B(a\,n^k)}$.
Hence 
$\ndnuc{\ISbr}{O(n^k)} \subseteq
 \Union{a \in \Natpos} \ndnuc{\ISbr}{B(a\,n^k)}$.
\qed
\end{proof}

In the proof of the following hierarchy theorem for \NPLIS, we will use 
the notation 
$\ndnucn{\IS}{\FN}{n}$ for  
$\set{\funct{f}{\Bool^n}{\Bool} \where 
      \Exists{\indfam{g_n}{n \in \Nat} \in \ndnuc{\IS}{\FN}}{f = g_n}}$.
\begin{theorem}
\label{theorem-hierarchy-NPLIS}
For each $k \in \Nat$, 
$\ndnuc{\ISbr}{O(n^k)} \subset \ndnuc{\ISbr}{O(n^{k^2+3})}$.
\end{theorem}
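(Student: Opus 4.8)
The plan is to prove the non-strict inclusion by monotonicity and to establish strictness by diagonalizing simultaneously against every instance of the bounded class $\ndnuc{\ISbr}{B(a\,n^k)}$. For the inclusion, I would simply note that $k^2 + 3 \geq k$ for all $k \in \Nat$, so every $h$ with $h(n) = O(n^k)$ also satisfies $h(n) = O(n^{k^2+3})$; since both the certificate-length function and the verifying family in the definition of $\ndnuc{\ISbr}{\cdot}$ are drawn from the same function class, $\ndnuc{\ISbr}{\FN}$ is monotonic in $\FN$, and the inclusion follows. For strictness I want one Boolean function family that lies in $\nuc{\ISbr}{O(n^{k^2+3})}$, hence in $\ndnuc{\ISbr}{O(n^{k^2+3})}$ by Corollary~\ref{corollary-PLIS-incl-NPLIS}, but escapes $\ndnuc{\ISbr}{O(n^k)}$. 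By Theorem~\ref{theorem-polynomials} it suffices to make the family avoid $\ndnuc{\ISbr}{B(a\,n^k)}$ for every $a \in \Natpos$. Passing to the $B$-bounded classes is essential: for a single fixed $n$ the set $\ndnucn{\ISbr}{O(n^k)}{n}$ is all $n$-ary functions (the hidden $O$-constant can be taken large enough to compute anything at that one level via Theorem~\ref{theorem-comput-boolfunc}), whereas $\ndnucn{\ISbr}{B(a\,n^k)}{n}$ is genuinely small once $a$ is fixed.

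The main counting lemma I would establish is the following: if an $n$-ary function $f$ depends only on its first $\ell$ arguments and $f \in \ndnucn{\ISbr}{B(a\,n^k)}{n}$, then the $\ell$-ary function $\tilde f(u_1,\ldots,u_\ell) = f(u_1,\ldots,u_\ell,\False,\ldots,\False)$ lies in a set $S$ of $\ell$-ary functions with $\log_2 \card(S) \leq D_k\, a^{k+1} n^{k^2} \log n$ for a constant $D_k$ depending only on $k$ and all large $n$. The reason is that such an $f$ is given by $f(w) = \Exists{c \in \Bool^{h_0}}{g(w\,c)}$ with $h_0 \leq a\,n^k$ and $g$ an $(n+h_0)$-ary function computed by an instruction sequence of length at most $a(n+h_0)^k$; restricting the $n-\ell$ irrelevant input registers to $\False$ shows $\tilde f$ is determined by the pair $(h_0,g)$, and Proposition~\ref{prop-inseq-number} together with the padding remarks preceding it bounds the number of admissible $g$ by $(3(n+h_0)+10a(n+h_0)^k-2)^{a(n+h_0)^k}$, which with $h_0 \le a\,n^k$ yields the stated bound after summing over the at most $a\,n^k + 1$ values of $h_0$.

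With this lemma in hand I would diagonalize. Put $n_j = 2^j$, $\ell_j = (k^2+2)j$, and $a_j = j$, so $2^{\ell_j} = n_j^{\,k^2+2}$ and the $\ell_j$ are strictly increasing. For all large $j$ the inequality $n_j^{\,k^2+2} > D_k\, a_j^{\,k+1} n_j^{\,k^2} \log n_j$ reduces to $4^j > D_k\, j^{k+2}$, so the bound on $\card(S)$ is strictly below the number $2^{2^{\ell_j}}$ of $\ell_j$-ary functions; I may then pick an $\ell_j$-ary $F_{\ell_j}$ outside the set $S$ for the parameters $(a_j,n_j)$. Define $f_n(w) = F_{\ell_j}(w_1,\ldots,w_{\ell_j})$ when $n = n_j$ with $j$ large, and $f_n \equiv \False$ otherwise. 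Each $F_{\ell_j}$ is $\ell_j$-ary, hence computable by Theorem~\ref{theorem-comput-boolfunc} by an instruction sequence of length $O(2^{\ell_j}) = O(n_j^{\,k^2+2})$, so $\indfam{f_n}{n \in \Nat} \in \nuc{\ISbr}{O(n^{k^2+3})}$. Conversely, for any fixed $a$ I would take $j \geq a$ large; the counting lemma gives $f_{n_j} \notin \ndnucn{\ISbr}{B(a_j n^k)}{n_j}$, and since $a \leq a_j$ implies $\ndnucn{\ISbr}{B(a\,n^k)}{n_j} \subseteq \ndnucn{\ISbr}{B(a_j n^k)}{n_j}$, also $f_{n_j} \notin \ndnucn{\ISbr}{B(a\,n^k)}{n_j}$, whence $\indfam{f_n}{n \in \Nat} \notin \ndnuc{\ISbr}{B(a\,n^k)}$. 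As $a$ was arbitrary, Theorem~\ref{theorem-polynomials} yields $\indfam{f_n}{n \in \Nat} \notin \ndnuc{\ISbr}{O(n^k)}$, completing the separation.

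I expect the crux to be the counting lemma, and in particular pinning down the polynomial-in-$a$ dependence of $\card(S)$ so that the choice $\ell_j = (k^2+2)j$ sits below the computability ceiling $(k^2+3)\log_2 n$ yet above the hardness floor $\approx k^2\log_2 n$, leaving enough room ($\approx 3\log_2 n$ against an $O_a(\log\log n)$ slack) for the diagonalization to succeed for every fixed $a$. The delicate point within it is justifying that fixing the irrelevant input registers to $\False$ neither enlarges the admissible verifier set nor disturbs the bound of Proposition~\ref{prop-inseq-number}.
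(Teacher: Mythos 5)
Your proof is correct, and it shares the paper's overall skeleton: the inclusion by monotonicity of $\ndnuc{\ISbr}{\FN}$ in $\FN$, and strictness by a counting diagonalization that is routed through Theorem~\ref{theorem-polynomials} to the bounded classes $\ndnuc{\ISbr}{B(a\,n^k)}$ and grounded in Proposition~\ref{prop-inseq-number} together with the padding remarks preceding it. Where you genuinely diverge is in the diagonal construction. The paper fixes, for each $a \in \Natpos$, one dedicated length $m_a = 4(a+1)^{2(k+1)}$ and diagonalizes there inside the class of functions that vanish outside a fixed set $\widehat{S_a}$ of ${m_a}^{k^2+2}+1$ inputs; the upper bound then comes from an explicit table look-up instruction sequence of length at most $(n^{k^2+2}+1)(2n+1)+1$, which is exactly where the exponent $k^2+3$ originates. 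You instead diagonalize inside the class of juntas depending on the first $(k^2+2)\log_2 n$ arguments, at lengths $n_j = 2^j$ with a growing parameter $a_j = j$, and dispose of each fixed $a$ by monotonicity in $a$ at the infinitely many levels $j \geq a$; your upper bound comes directly from Theorem~\ref{theorem-comput-boolfunc} and is $O(n^{k^2+2})$, so your argument in fact yields the marginally sharper separation $\ndnuc{\ISbr}{O(n^k)} \subset \ndnuc{\ISbr}{O(n^{k^2+2})}$. It also shows that the input-shrinking device from the paper's own proof of Theorem~\ref{theorem-hierarchy-PLIS} --- which the paper remarks ``does not seem to work'' for \NPLIS\ --- does work in the nondeterministic setting, provided the hard function is obtained by counting against the $B$-classes rather than from Theorem~\ref{theorem-comput-large-inseq}. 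Two small observations: first, the crux you flag is less delicate than you fear, since with $n_j = 2^j$ and $a_j = j$ even the paper's much cruder bound $H_a(n) < 2^{4(a+1)^{2(k+1)} n^{k^2+1}}$ leaves ample room, so neither the polynomial-in-$a$ dependence nor your sharper $\log n$ factor is actually needed (note only that the threshold for ``all large $n$'' in your lemma inevitably depends on $a$, e.g.\ via $\log(a+1) = O(\log n)$, which is harmless here); second, the ``delicate point'' about fixing the irrelevant input registers to $\False$ is a non-issue, because all the lemma requires is that the map sending a pair (certificate length, verifier) to the restricted function $\tilde f$ be well defined, which already bounds $\card(S)$ by the number of such pairs.
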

\begin{proof}
We will prove this theorem by defining a Boolean function family that by 
definition does not belong to $\ndnuc{\ISbr}{O(n^k)}$ and showing that 
it does belong to $\ndnuc{\ISbr}{O(n^{k^2+3})}$.
The definition concerned makes use of a natural number $m_a$ and an 
$m_a$-ary Boolean function $g_a$ for each positive natural number $a$.
These auxiliaries are defined first, using additional auxiliaries.
Let $k \in \Nat$.

For each $a \in \Natpos$, let the function $\funct{H_a}{\Nat}{\Nat}$ be 
defined by $H_a(n) = (3n + 10a(n + a\,n^k)^k - 2)^{a(n + a\,n^k)^k}$.
By Proposition~\ref{prop-inseq-number}, the remarks immediately 
preceding Proposition~\ref{prop-inseq-number} and the definition of 
$\ndnuc{\ISbr}{B(a\,n^k)}$, we have that, for each $a \in \Natpos$, 
the number of $n$-ary Boolean functions that 
belong to $\ndnucn{\ISbr}{B(a\,n^k)}{n}$ is not greater than $H_a(n)$ if
$n > 0$.
So $|\ndnucn{\ISbr}{B(a\,n^k)}{n}| \leq H_a(n)$ if $n > 0$.
By simple arithmetical calculations we find that 
$H_a(n) < (12(a + 1)^{k+1} n^{(k^2)})^{(a+1)^{k+1} n^{(k^2)}} 
        < 2^{4(a+1)^{2(k+1)} n^{k^2+1}}$ if $n > 0$.
Hence $H_a(n) < 2^{n^{k^2+2}}$ if $n \geq 4(a+1)^{2(k+1)}$.
By simple arithmetical calculations we also find that 
$(4(a+1)^{2(k+1)})^{k^2+2} = 2^{(k^2+2)(2(k+1)\log(a+1)+\log(4))}$ and
that
$(k^2 + 2)(2(k + 1)\log(a + 1) + \log(4)) < 2(k^2 + 2)(k + 1)(a + 1) <
 4(a + 1)^{2(k+1)}$.
Hence $n^{k^2+2} < 2^n$ if $n \geq 4(a+1)^{2(k+1)}$.
For each $a \in \Natpos$, let $m_a = 4(a+1)^{2(k+1)}$.
We immediately have that, for all $a \in \Natpos$, $m_a < m_{a+1}$ and,
for all $n \geq m_a$, $H_a(n) < 2^{n^{k^2+2}}$ and 
$n^{k^2+2} + 1 \leq 2^n$.

For each $a \in \Natpos$, 
let $S_a \in (\Bool^{m_a})^{{m_a}^{k^2+2} + 1}$ be such that 
the elements of $S_a$ are mutually different, and 
let $\widehat{S_a}$ be the set of all elements of $S_a$.
Because ${m_a}^{k^2+2} + 1 \leq 2^{m_a}$, we have that 
$\widehat{S_a} \subseteq \Bool^{m_a}$.
The number of functions from $\widehat{S_a}$ to $\Bool$ is
$2^{{m_a}^{k^2+2} + 1}$, and we have that 
$2^{{m_a}^{k^2+2} + 1} >
 H_a(m_a) \geq |\ndnucn{\ISbr}{B(a\,{m_a}^k)}{m_a}|$.
Because each function $\funct{f'}{\widehat{S_a}}{\Bool}$ can be extended 
to a function $\funct{f}{\Bool^{m_a}}{\Bool}$ by defining
$f(w) = f'(w)$ if $w \in \widehat{S_a}$ and $f(w) = \False$ otherwise,
this means that there exists an $m_a$-ary Boolean function that does not
belong to $\ndnucn{\ISbr}{B(a\,{m_a}^k)}{m_a}$.
For each $a \in \Natpos$, let $g_a$ be such an $m_a$-ary Boolean function.

Let $\indfam{g_n}{n \in \Nat}$ be the Boolean function family such that, 
for each $n \in \Nat$, $g_n$ is defined as follows:
\begin{itemize}
\item
if $n = m_a$ for some $a \in \Natpos$: $g_n = g_a$, where $a$ is such 
that $n = m_a$;
\item
if $n \neq m_a$ for all $a \in \Natpos$: $g_n(w) = \False$ for all 
$w \in \Bool^n$.
\end{itemize}
For each $a \in \Natpos$, there exists an $n \in \Natpos$ such that $g_n$ 
does not belong to $\ndnucn{\ISbr}{B(a\,n^k)}{n}$.
Hence, by Theorem~\ref{theorem-polynomials}, 
$\indfam{g_n}{n \in \Nat} \notin \ndnuc{\ISbr}{O(n^k)}$.

For each $n \in \Nat$, we can construct an instruction sequence $X$ that
computes $g_n$ as follows:
\begin{itemize}
\item
if $n = m_a$ for some $a \in \Natpos$: 
$X = X^{w_1,g(w_1)} \conc \ldots \conc
     X^{w_{n^{k^2+2} + 1},g(w_{n^{k^2+2} + 1})} \conc \halt$,
where for each $i \in [1,n^{k^2+2} + 1]$, $w_i$ is the $i$th element of 
$S_a$ (where $a$ is such that $n = m_a$), and
$X^{w,b}$ is an instruction sequence that sets the output register to 
$b$ if all input registers together contain $w$ and jumps to the next 
instruction sequence otherwise;
\item
if $n \neq m_a$ for all $a \in \Natpos$: 
$X = \halt$.
\end{itemize}
In the case where $n = m_a$ for some $a \in \Natpos$, 
$\psize(X) \leq (n^{k^2+2} + 1) \cdot (2 n + 1) + 1$.
Otherwise, $\psize(X) = 1$.
Hence, $\indfam{g_n}{n \in \Nat} \in \nuc{\ISbr}{O(n^{k^2+3})}$.
From this and Corollary~\ref{corollary-PLIS-incl-NPLIS}, it follows that
$\indfam{g_n}{n \in \Nat} \in \ndnuc{\ISbr}{O(n^{k^2+3})}$.
\qed
\end{proof}
The approach followed in the proof of the hierarchy theorem for $\PLIS$
does not seem to work for the proof of the hierarchy theorem for 
$\NPLIS$.

In the general definition of the complexity classes $\ndnuc{\IS}{\FN}$, 
a pair of an input and a certificate is turned into a single sequence by 
simply concatenating the input and the certificate.
In the usual definition of \NPT\ in terms of \PT, on the other hand, a 
pair of an input and a certificate is uniquely encoded by a single 
sequence from which the input and certificate are recoverable.
A function that does so is commonly called a pairing function.
In the case of \NPLIS, a definition in which a pair of an input and a 
certificate is turned into a single sequence in the latter way could 
have been given as well.

Consider the pairing function from $\seqof{\Bool} \x \seqof{\Bool}$ to 
$\seqof{\Bool}$ that converts each two sequences 
$(b_1,\ldots,b_n)$ and $(b'_1,\ldots,b'_{n'})$ into 
$(b_1,b_1,\ldots,b_n,b_n,\True,\False,b'_1,\ldots,b'_{n'})$.
Henceforth, we will write $w \pairing w'$, 
where $w,w' \in \seqof{\Bool}$, to denote the result of applying this 
pairing function to $w$ and $w'$.
Take, for each $m \in \Nat$, the function from 
$\Union{i \geq m} \Bool^i$ to $\seqof{\Bool}$ that converts each
sequence $(b_1,\ldots,b_m,b_{m+1},\ldots,b_n)$ into 
$(b_1,\ldots,b_m) \pairing (b_{m+1},\ldots,b_n)$, and
the two projection functions from the range of\linebreak[2] 
the above-mentioned 
pairing function to $\seqof{\Bool}$ that extract from each sequence 
$(b_1,\ldots,b_m) \pairing (b_{m+1},\ldots,b_{n})$ the sequences
$(b_1,\ldots,b_m)$ and $(b_{m+1},\ldots,b_{n})$.
Then, for all $n \in \Nat$, the restrictions of these functions to
the sequences that belong to $\Bool^n$ are computable by an instruction 
sequence $X \in \ISbrna$ with $\psize(X) = O(n)$.

The following theorem gives an alternative characterization of 
$\NPLIS$.
\begin{theorem}
\label{theorem-char-NPLIS}
Let $\indfam{f_n}{n \in \Nat}$ be a Boolean function family.
Then we have that $\indfam{f_n}{n \in \Nat} \in \NPLIS$ iff
\begin{quote}
there exist an $h \in \poly$ and a Boolean function family
\sloppy
$\indfam{g_n}{n \in \Nat} \in \PLIS$ such that,
for all $n \in \Nat$, for all $w \in \Bool^n$:
\begin{ldispl}
f_n(w) = \True \Liff
\Exists{c \in \seqof{\Bool}}
 {(|c| \leq h(n) \Land g_{|w \pairing c|}(w \pairing c) = \True)}\;.
\end{ldispl}%
\end{quote}
\end{theorem}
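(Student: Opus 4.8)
The plan is to prove the two implications of the biconditional separately; both amount to translating between the two ways of attaching a certificate. In the definition of $\NPLIS = \ndnuc{\ISbr}{\poly}$ the certificate $c$ has the fixed length $h(n)$ and is appended to $w$ by plain concatenation, whereas in the characterization it has a bounded length $|c| \leq h(n)$ and is combined with $w$ through the pairing function $w \pairing c$. The two facts that drive both translations are the instruction sequences of length $O(n)$ for the pairing function and its two projections exhibited just before the theorem, and the observation that for a monotonic bounding function the maps $n \mapsto n + h(n)$ and $n \mapsto 2n + 2 + h(n)$ are strictly increasing, so that a single input length determines a single split into an input part $w$ and a certificate part $c$.

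For the implication from $\NPLIS$ to the characterization, I would take a monotonic $h \in \poly$ and a family $\indfam{g_n}{n \in \Nat} \in \PLIS$ witnessing $\indfam{f_n}{n \in \Nat} \in \NPLIS$, keep $h$ as the required polynomial, and define $\indfam{g'_n}{n \in \Nat}$ by putting $g'_m(s) = \True$ exactly when $s$ is a valid code $w \pairing c$ with $|c| = h(|w|)$ and $g_{|w| + h(|w|)}(w\,c) = \True$, and $g'_m(s) = \False$ otherwise. Since $g'$ answers $\True$ only when $|c| = h(|w|)$, the bounded existential $\Exists{c \in \seqof{\Bool}}{(|c| \leq h(n) \Land g'_{|w \pairing c|}(w \pairing c) = \True)}$ is equivalent to $\Exists{c \in \Bool^{h(n)}}{g_{n+h(n)}(w\,c) = \True}$, which is $f_n(w) = \True$. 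It then remains to check $\indfam{g'_n}{n \in \Nat} \in \PLIS$: for a fixed input length $m$ monotonicity of $h$ yields at most one $n$ with $m = 2n + 2 + h(n)$, and the instruction sequence for $g'_m$ verifies in $O(m)$ primitive instructions that the code splits at this $n$, copies $w\,c$ into auxiliary registers via the projection instruction sequences, and runs a register-renamed copy of the polynomial-length instruction sequence computing $g_{n+h(n)}$, so $\psize$ remains polynomial in $m$.

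For the converse I would start from an $h' \in \poly$ and a family $\indfam{g'_n}{n \in \Nat} \in \PLIS$ witnessing the characterization, first replace $h'$ by the monotonic polynomial $\overline{h'}(n) = \max \set{h'(n') \where n' \leq n}$, and then encode the variable-length certificate self-delimitingly inside a fixed-length one: with $h(n) = \overline{h'}(n) + \lceil \log(\overline{h'}(n) + 1) \rceil + 1$ every $d \in \Bool^{h(n)}$ reads as a length field, a string $c$ of that length, and padding. I would define $g_{n+h(n)}(w\,d)$ to decode $d$ into $c$, test $|c| \leq h'(n)$, form $w \pairing c$, and return $g'_{|w \pairing c|}(w \pairing c)$; the existential over $d \in \Bool^{h(n)}$ then matches the bounded existential over $c$. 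Again $\indfam{g_n}{n \in \Nat} \in \PLIS$, since monotonicity of $h$ pins down a unique $n$ for each length $n + h(n)$, and for that $n$ the instruction sequence decodes $d$, branches over the at most $\overline{h'}(n) + 1$ possible values of $|c|$, builds the corresponding $w \pairing c$, and invokes the polynomial-length instruction sequence for the appropriate member of $\indfam{g'_n}{n \in \Nat}$, all within polynomial length.

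I expect the equivalence of the two existential forms to be routine bookkeeping; the real effort is certifying that the constructed families $\indfam{g'_n}{n \in \Nat}$ and $\indfam{g_n}{n \in \Nat}$ lie in $\PLIS$. The crux is threefold: reusing the $O(n)$ pairing and projection instruction sequences, exploiting monotonicity so that each input length corresponds to a single input/certificate split (without it the instruction sequence would have to branch over every split and one would have to argue separately that this stays polynomial), and feeding the input of the given $\PLIS$ instruction sequences through register renaming and copying without a super-polynomial blow-up. I would flag this last piece of register plumbing, together with the branching over $|c|$ in the converse, as the fiddliest point, even though each individual step is easy to see.
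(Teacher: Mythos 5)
Your proposal is correct and takes essentially the same route as the paper's proof: the forward direction decodes $w \pairing c$ with the $O(n)$ pairing/projection instruction sequences and feeds the result to the given \PLIS\ family, and the converse packs the variable-length certificate into a fixed-length one by a self-delimiting encoding, with monotonicity of $h$ making $n \mapsto n + h(n)$ injective so that the constructed family is well defined and lies in \PLIS. The only difference is cosmetic: where the paper's encoding $\gamma_n$ doubles the certificate bits with a $\True/\False$ terminator and pads with $\False$'s, you use an explicit length field plus payload plus padding, which serves the same purpose.
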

\begin{proof}
The implication from left to right follows directly from the definition 
of \NPLIS\ and the remark made above about the projection functions 
associated with the pairing function used here.

The implication from right to left is proved as follows.
Let $\indfam{g_n}{n \in \Nat} \in \PLIS$ be such that
there exists an $h \in \poly$ such that, 
for all $n \in \Nat$, for all $w \in \Bool^n$,  
$f_n(w) = \True \Liff 
 \Exists{c \in \seqof{\Bool}}
  {(|c| \leq h(n) \Land  
    g_{|w \pairing c|}(w \pairing c) = \True)}$.
For each $w \in \seqof{\Bool}$, let $c_w$ be a certificate for $w$.
Suppose that $\indfam{\gamma_n}{n \in \Nat}$, 
with $\funct{\gamma_n}{\Bool^n}{\Bool^*}$ for every $n \in \Nat$, 
is an infinite sequence of injective functions satisfying: 
(i)~there exists a monotonic $h \in \poly$ such that, for all 
$n \in \Nat$, for all $w \in \Bool^n$, $|\gamma_n(c_w)| = h(n)$,
(ii)~for all $n \in \Nat$, ${\gamma_n}^{-1}$ is computable by an 
instruction sequence $X \in \ISbrna$ with $\psize(X) = O(h(n))$.
Then there exists a monotonic $h \in \poly$ such that, 
for all $n \in \Nat$, for all $w \in \Bool^n$, 
$\gamma_n(c_w) \in \Bool^{h(n)}$ and
there exists a $\indfam{g'_n}{n \in \Nat} \in \PLIS$ such that, 
for all $n \in \Nat$, for all $w \in \Bool^n$, 
$g_{|w \pairing c|}(w \pairing c_w) = \True \Liff
 g'_{n+h(n)}(w\,\gamma_n(c_w)) = \True$.
The existence of a suitable $\indfam{g'_n}{n \in \Nat} \in \PLIS$ is 
not guaranteed for an $h \in \poly$ with the property that there exist 
$n,n' \in \Nat$ such that $n + h(n) = n' + h(n')$ and $n \neq n'$, but
this property is excluded by the required monotonicity of $h$.
It remains to show that the functions $\gamma_n$ supposed above 
exist. 
For $\gamma_n$, we can pick the function that converts each sequence
$(b_1,\ldots,b_n)$ into $(b_1,b'_1,\ldots,b_n,b'_n)$, where 
$b'_i = \True$ if $i \neq n$ and $b'_n = \False$, and adds at the 
end of the converted sequence sufficiently many $\False$'s to obtain 
results of the required length.
\qed
\end{proof}

It happens that \NPLIS\ and \NPTpoly\ coincide, provided that we 
identify each Boolean function family $\indfam{f_n}{n \in \Nat}$ with 
the unique function $\funct{f}{\seqof{\Bool}}{\Bool}$ such that for 
each $n \in \Nat$, for each $w \in \Bool^n$, $f(w) = f_n(w)$.
\begin{theorem}
\label{theorem-NPLIS-is-NPTpoly}
$\NPLIS = \NPTpoly$.
\end{theorem}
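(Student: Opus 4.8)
The plan is to reduce the identity to a comparison of two certificate-based characterizations, one on the instruction-sequence side and one on the Turing-machine side, and then to argue that the two characterizations coincide. On the instruction-sequence side, Theorem~\ref{theorem-char-NPLIS} already expresses membership in $\NPLIS$ in terms of a $\PLIS$ family together with a polynomially bounded certificate, the two being combined by means of the pairing function $\ph \pairing \ph$. On the Turing-machine side, I would use the characterization of $\NPTpoly$ obtained by combining the definition of $\NPT$ in terms of $\PT$ (via checking certificates) with the advice-based definition of the classes $\mathcal{C}/\poly$ recalled at the start of this section: namely, $f \in \NPTpoly$ iff there are a polynomial $p$ and a set $B \in \PTpoly$ such that, for all $w \in \seqof{\Bool}$, $f(w) = \True$ iff $\Exists{c \in \seqof{\Bool}}{(|c| \leq p(|w|) \Land \langle w,c\rangle \in B)}$, where $\langle \ph,\ph\rangle$ is a standard pairing function. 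Establishing this characterization of $\NPTpoly$ is the first step; it is the routine argument that folds the polynomial advice of the nondeterministic machine into the advice of a polynomial-time verifier and reads its nondeterministic choices off the certificate.

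The second step is to pass between $\PLIS$ and $\PTpoly$ using Theorem~\ref{theorem-PLIS-is-PTpoly}, under the identification of a Boolean function family with a function on $\seqof{\Bool}$ that the theorem statement already invokes. Writing $B = \set{u \in \seqof{\Bool} \where g(u) = \True}$ for the family $\indfam{g_n}{n \in \Nat}$ supplied by Theorem~\ref{theorem-char-NPLIS}, the condition $g_{|w \pairing c|}(w \pairing c) = \True$ becomes $w \pairing c \in B$, and $\indfam{g_n}{n \in \Nat} \in \PLIS$ is equivalent to $B \in \PTpoly$. Thus both classes are characterized by exactly the same existential condition with a polynomially bounded certificate, differing only in which pairing function is used to combine $w$ and $c$.

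The final step is to argue that the choice of pairing function is immaterial. Both $\ph \pairing \ph$ and $\langle \ph,\ph\rangle$ are injective, polynomial-time computable, and send a pair $(w,c)$ to a string whose length is polynomial in $|w| + |c|$, and their ranges carry polynomial-time computable conversions in both directions. Hence, given $B \in \PTpoly$ testing $w \pairing c$, the set $B' = \set{\langle w,c\rangle \where w \pairing c \in B}$ also lies in $\PTpoly$: a recognizer for $B'$ first reconstructs $w \pairing c$ from $\langle w,c\rangle$ and then runs the recognizer for $B$, and symmetrically for the reverse direction. Translating the certificate condition through this conversion turns the $\NPLIS$ characterization of Theorem~\ref{theorem-char-NPLIS} into the $\NPTpoly$ characterization above, and conversely, yielding both inclusions $\NPLIS \subseteq \NPTpoly$ and $\NPTpoly \subseteq \NPLIS$.

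The main obstacle I anticipate is not conceptual but lies in the length and monotonicity bookkeeping. One must check that the certificate bound $|c| \leq h(n)$ and the re-pairing together keep every length function polynomial, and that, where Theorem~\ref{theorem-char-NPLIS} insists on a \emph{monotonic} $h \in \poly$, the passage between the two pairing functions does not destroy monotonicity. Verifying the legitimacy of folding the nondeterministic advice into verifier advice in the characterization of $\NPTpoly$ — in particular recovering $|w|$ from the encoded pair so that the correct advice slice can be selected — is the other point requiring care, although it is entirely standard.
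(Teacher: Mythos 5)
Your proposal is correct and takes essentially the same route as the paper's proof: both rest on the certificate characterization of $\NPTpoly$ via a polynomial-time-with-advice verifier (the paper cites Fact~2 of~\cite{Yap83a}), the characterization of $\NPLIS$ in Theorem~\ref{theorem-char-NPLIS}, and the identification $\PLIS = \PTpoly$ from Theorem~\ref{theorem-PLIS-is-PTpoly}. The only difference is that the paper states the $\NPTpoly$ characterization directly in terms of the same pairing function $\pairing$ used in Theorem~\ref{theorem-char-NPLIS}, so your additional step translating between two pairing functions (with its attendant advice-slice bookkeeping) is standard detail that the paper's formulation renders unnecessary.
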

\begin{proof}
It follows by elementary reasoning from the general definition of 
non-uniform complexity classes $\mathcal{C}/\FN$ and the definition of 
$\NPT$ in terms of $\PT$ that $f \in \NPTpoly$ iff 
there exist a polynomial function $\funct{h}{\Nat}{\Nat}$ and a 
$g \in \PTpoly$ such that, for all $w \in \seqof{\Bool}$:
\begin{ldispl}
f(w) = \True \Liff
\Exists{c \in \seqof{\Bool}}
 {(|c| \leq h(|w|) \Land g(w \pairing c) = \True)}
\end{ldispl}%
(cf.\ Fact~2 in~\cite{Yap83a}).
From this characterization of \NPTpoly\ and the characterization of 
\NPLIS\ given in Theorem~\ref{theorem-char-NPLIS}, it follows easily 
that $\NPLIS = \NPTpoly$.
\qed
\end{proof}

In Section~\ref{sect-conjecture}, we have conjectured that 
$\iiiSATC \notin \PLIS$.
The question arises whether $\iiiSATC \in \NPLIS$.
This question can be answered in the affirmative.

\begin{theorem}
\label{theorem-3SATC-in-NPLIS}
$\iiiSATC \in \NPLIS$.
\end{theorem}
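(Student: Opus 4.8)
The plan is to work directly from the definition of $\NPLIS = \ndnuc{\ISbr}{\poly}$, taking the certificate for an input to be an encoding of a satisfying truth assignment for the encoded $\iiiCNF$-formula. For $n \in \Nat$, let $k$ be the largest natural number with $\ndisj(k) \leq n$; then the formula encoded by a $w \in \Bool^n$ uses only the variables $v_1,\ldots,v_k$ and only the first $\ndisj(k)$ bits of $w$. Since $\ndisj(k) \geq k$, the monotonic polynomial $h$ defined by $h(n) = n$ leaves ample room to encode such an assignment in a certificate $c \in \Bool^n$: I would let the first $k$ bits of $c$ carry the truth values of $v_1,\ldots,v_k$ and pad the rest with $\False$.

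First I would define the verifier family $\indfam{g_m}{m \in \Nat} \in \PLIS$. For $m = 2n$, on input $(b_1,\ldots,b_n,c_1,\ldots,c_n)$ the function $g_{2n}$ should test whether the assignment $v_j \mapsto c_j$ satisfies every clause $\alpha_k(i)$ for which $b_i = \True$; for indices $m$ not of the form $2n$ I would simply set $g_m$ constantly $\False$, since only the values $g_{n+h(n)} = g_{2n}$ are consulted. The point is that $g_{2n}$ is exactly the Boolean function induced by the $\CNF$-formula
\begin{ldispl}
\LAND{i \in [1,\ndisj(k)]}
 \bigl(\Lnot v_i \Lor \widehat{\alpha_k(i)}\bigr)\;,
\end{ldispl}
where $\widehat{\alpha_k(i)}$ denotes the disjunction of the literals of the clause $\alpha_k(i)$ with each variable index $j$ shifted to $n + j$ (so that the assignment is read from the input registers $\inbr{n+1},\ldots,\inbr{n+k}$ holding $c$). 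Each conjunct is a clause of at most four literals and there are $\ndisj(k) = O(n)$ of them, so this formula has size $O(n)$.

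By Proposition~\ref{prop-comput-cnfform}, $g_{2n}$ is therefore computed by an $X \in \ISbrna$ with $\psize(X)$ linear in the size of the formula, hence $\psize(X) = O(n) = O(m)$; this bound witnesses $\indfam{g_m}{m \in \Nat} \in \PLIS$. It then remains to check the defining equivalence $\iiiSATC_n(w) = \True \Liff \Exists{c \in \Bool^n}{g_{2n}(w\,c) = \True}$: from a satisfying assignment of the encoded formula one reads off a certificate $c$ making $g_{2n}(w\,c) = \True$, and conversely any $c$ with $g_{2n}(w\,c) = \True$ supplies a satisfying assignment, so the formula is satisfiable. Together with the monotonic $h(n) = n$ this yields $\iiiSATC \in \NPLIS$.

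I expect the only real work to be the length bound of the preceding paragraph, namely verifying that the verification condition is a $\CNF$-formula of size $O(n)$ so that Proposition~\ref{prop-comput-cnfform} yields the required linear length bound for every $n$; note that no uniformity of the construction is needed, as membership in $\NPLIS$ constrains only instruction-sequence lengths, so hard-coding the clauses $\alpha_k(i)$ into each $g_{2n}$ is unproblematic. As a much shorter alternative, one could avoid the explicit construction entirely: by Theorem~\ref{theorem-3SATC-is-NP-compl} the family $\iiiSATC$ is $\NPT$-complete and hence lies in $\NPT \subseteq \NPTpoly$, and $\NPTpoly = \NPLIS$ by Theorem~\ref{theorem-NPLIS-is-NPTpoly}.
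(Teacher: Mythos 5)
Your proof is correct, and its main line is genuinely different from the paper's: the paper's proof is precisely your closing ``shorter alternative'' --- $\iiiSATC \in \NPT$ by Theorem~\ref{theorem-3SATC-is-NP-compl}, $\NPT \subseteq \NPTpoly$ by the general definition of advice classes, and $\NPTpoly = \NPLIS$ by Theorem~\ref{theorem-NPLIS-is-NPTpoly} --- and nothing more. That argument is three lines, but it routes through the heaviest correspondence results of the paper (Theorem~\ref{theorem-NPLIS-is-NPTpoly} in particular rests on the pairing-function characterization of Theorem~\ref{theorem-char-NPLIS}). Your main construction instead stays entirely inside the instruction-sequence setting: the certificate is a truth assignment on $v_1,\ldots,v_k$ (which fits in $\Bool^n$ since $k \leq \ndisj(k) \leq n$), and the verifier $g_{2n}$ is induced by a \CNF-formula with at most $n$ clauses of at most four literals each, so Proposition~\ref{prop-comput-cnfform} gives computing instruction sequences of length $O(n)$. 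This buys two things the paper's proof does not: self-containedness (no appeal to \NPT-completeness of \iiiSATC\ or to $\NPLIS = \NPTpoly$), and a quantitatively stronger conclusion, namely $\iiiSATC \in \ndnuc{\ISbr}{O(n)}$ --- linear certificate length and linear verifier length --- rather than bare membership in \NPLIS. Two small points to tidy: your formula does not literally contain all variables $v_1,\ldots,v_{2n}$ (those with indices in $(\ndisj(k),n]$ and $(n{+}k,2n]$ are absent), so to match the paper's definition of the induced Boolean function you should either read the induced function as ignoring absent variables or pad with tautological clauses of constant size; and your setting $g_m = \False$ for odd $m$ is unproblematic exactly because $n \mapsto n + h(n) = 2n$ is injective for your monotonic $h(n) = n$, which is the clash that the monotonicity requirement in the definition of $\ndnuc{\IS}{\FN}$ is there to prevent.
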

\begin{proof}
$\iiiSATC \in \NPT$ by Theorem~\ref{theorem-3SATC-is-NP-compl}, 
$\NPT \subseteq \NPTpoly$ by the general definition of non-uniform 
complexity classes $\mathcal{C}/\FN$ (see e.g.~\cite{BDG88a}), and 
$\NPTpoly = \NPLIS$ by Theorem~\ref{theorem-NPLIS-is-NPTpoly}.
Hence, $\iiiSATC \in \NPLIS$.
\qed
\end{proof}

\section{Completeness for the Complexity Classes $\ndnuc{\IS}{\FN}$}
\label{sect-NPLIS-compl}

In this section, we introduce the notion of 
$\ndnuc{\IS}{\FN}$-completeness, a general notion of completeness for 
complexity classes $\ndnuc{\IS}{\FN}$ where $\FN$ is closed under 
function composition. 
Like \NPT-completeness, $\ndnuc{\IS}{\FN}$-completeness will be 
defined in terms of a reducibility relation.

Let $\IS \subseteq \ISbr$, let $l,m,n \in \Nat$, and 
let $\funct{f}{\Bool^n}{\Bool}$ and $\funct{g}{\Bool^m}{\Bool}$.
Then $f$ is \emph{$l$-length $\IS$-reducible} to $g$, written 
$f \llred{\IS}{l} g$,
if there exist $\funct{h_1,\ldots,h_m}{\Bool^n}{\Bool}$ such that:
\begin{itemize}
\item
                                                  
there exist $X_1,\ldots,X_m \in \IS$ such that $X_1,\ldots,X_m$
compute $h_1,\ldots,h_m$ and $\psize(X_1),\ldots,\psize(X_m) \leq l$;
\item
for all $b_1,\ldots,b_n \in \Bool$,
$f(b_1,\ldots,b_n) = g(h_1(b_1,\ldots,b_n),\ldots,h_m(b_1,\ldots,b_n))$.
\end{itemize}
Let $\IS \subseteq \ISbr$, 
let $\FN \subseteq \set{h \where \funct{h}{\Nat}{\Nat}}$ be such that
$\FN$ is closed under func\-tion composition, and
let $\indfam{f_n}{n \in \Nat}$ and $\indfam{g_n}{n \in \Nat}$ be
Boolean function families. 
Then $\indfam{f_n}{n \in \Nat}$ is
\emph{non-uniform $\FN$-length $\IS$-reducible} to 
$\indfam{g_n}{n \in \Nat}$,
written \linebreak[2]
$\indfam{f_n}{n \in \Nat} \plred{\IS}{\FN} \indfam{g_n}{n \in \Nat}$,
if there exists an $h \in \FN$ such that:
\begin{itemize}
\item
for all $n \in \Nat$, there exist $l,m \in \Nat$ with
$l,m \leq h(n)$ such that $f_n \llred{\IS}{l} g_m$.
\end{itemize}

Let $\IS \subseteq \ISbr$, let $\FN$ be as above, and
let $\indfam{f_n}{n \in \Nat}$ be a Boolean function family.
Then $\indfam{f_n}{n \in \Nat}$ is
\emph{$\ndnuc{\IS}{\FN}$-com\-plete} if:
\begin{itemize}
\item
$\indfam{f_n}{n \in \Nat} \in\ndnuc{\IS}{\FN}$;
\item
for all $\indfam{g_n}{n \in \Nat} \in \ndnuc{\IS}{\FN}$,
$\indfam{g_n}{n \in \Nat} \plred{\IS}{\FN} \indfam{f_n}{n \in \Nat}$.
\end{itemize}

The most important properties of non-uniform $\FN$-length
$\IS$-reducibility and $\ndnuc{\IS}{\FN}$-completeness as defined above 
are stated in the following two propositions.
\begin{proposition}
\label{prop-plred-props}
Let $\IS \subseteq \ISbr$, and let $\FN$ be as above.
Then:
\begin{enumerate}
\item
if $\indfam{f_n}{n \in \Nat} \plred{\IS}{\FN} \indfam{g_n}{n \in \Nat}$ 
and $\indfam{g_n}{n \in \Nat} \in \nuc{\IS}{\FN}$,
then $\indfam{f_n}{n \in \Nat} \in \nuc{\IS}{\FN}$;
\item
$\plred{\IS}{\FN}$ is reflexive and transitive.
\end{enumerate}
\end{proposition}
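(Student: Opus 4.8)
The plan is to funnel both items through a single construction on instruction sequences: given $X_1,\ldots,X_m \in \IS$ computing inner functions $h_1,\ldots,h_m$ (each from $\Bool^n$ to $\Bool$) and a $Y \in \IS$ computing an $m$-ary outer function $k$, build one $Z \in \IS$ that computes the composite $k \circ (h_1,\ldots,h_m)$ and whose length is controlled by $\sum_i \psize(X_i) + \psize(Y)$ up to a linear overhead. Granting this construction, both items reduce to bookkeeping together with the closure of $\FN$ under composition.

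For item~1 I would unfold the hypotheses. From $\indfam{f_n}{n\in\Nat} \plred{\IS}{\FN} \indfam{g_n}{n\in\Nat}$ I obtain an $h \in \FN$ such that for every $n$ there are $l,m \leq h(n)$, functions $h_1,\ldots,h_m$ computed by $X_1,\ldots,X_m \in \IS$ of length $\leq l$, and the identity $f_n(b_1,\ldots,b_n) = g_m(h_1(b_1,\ldots,b_n),\ldots,h_m(b_1,\ldots,b_n))$; from $\indfam{g_n}{n\in\Nat} \in \nuc{\IS}{\FN}$ I obtain an $h' \in \FN$ and, for each $m$, a $Y_m \in \IS$ of length $\leq h'(m)$ computing $g_m$. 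Feeding $X_1,\ldots,X_m$ and $Y_m$ into the construction yields $Z_n \in \IS$ computing $f_n$ with $\psize(Z_n) \leq m\,l + h'(m) + O(m) \leq h(n)^2 + h'(h(n)) + O(h(n))$, and since $\FN$ absorbs this (composition, plus the polynomial arithmetic a class such as $\poly$ enjoys), $\indfam{f_n}{n\in\Nat} \in \nuc{\IS}{\FN}$.

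For item~2, reflexivity is the easy direction: for each $n$ take $m=n$ and let $h_i$ be the $i$th projection, computed by the fixed-length instruction sequence $\ptst{\inbr{i}.\getbr} \conc \outbr.\setbr{\True} \conc \halt \in \IS$; then $f_n(b_1,\ldots,b_n) = f_n(h_1(b_1,\ldots,b_n),\ldots,h_n(b_1,\ldots,b_n))$ gives $f_n \llred{\IS}{l} f_n$ with $l$ constant, so any monotonic $h \in \FN$ dominating $n \mapsto n$ witnesses reflexivity. For transitivity, starting from $\indfam{f_n}{n\in\Nat} \plred{\IS}{\FN} \indfam{g_n}{n\in\Nat}$ and $\indfam{g_n}{n\in\Nat} \plred{\IS}{\FN} \indfam{p_n}{n\in\Nat}$, I would prove a one-step composition lemma: if $f_n \llred{\IS}{l} g_m$ via $h_1,\ldots,h_m$ and $g_m \llred{\IS}{l'} p_{m'}$ via $k_1,\ldots,k_{m'}$, then putting $K_j = k_j \circ (h_1,\ldots,h_m)$ gives $f_n(b_1,\ldots,b_n) = p_{m'}(K_1(b_1,\ldots,b_n),\ldots,K_{m'}(b_1,\ldots,b_n))$, and the construction supplies instruction sequences in $\IS$ computing the $K_j$ of length $l'' \leq m\,l + l' + O(m)$. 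Taking the two witnesses $h_1,h_2 \in \FN$ monotonic (as we may for the classes of interest), the new arity index satisfies $m'' = m' \leq h_2(m) \leq h_2(h_1(n))$ --- this is exactly where closure of $\FN$ under composition is used --- while $l''$ is bounded in the same terms, so a single $h_3 \in \FN$ witnesses $\indfam{f_n}{n\in\Nat} \plred{\IS}{\FN} \indfam{p_n}{n\in\Nat}$. Item~1 is the degenerate case of this lemma in which the outer reduction is replaced by an outright computation of $g_m$.

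The main obstacle is the construction itself, and specifically keeping $Z$ inside $\IS$. I would allocate $m$ fresh auxiliary registers to hold the values $h_i(b_1,\ldots,b_n)$; relabel the scratch registers of each $X_i$ to avoid clashes and redirect its output focus $\outbr$ to the $i$th fresh register (replacing each $\outbr.\setbr{b}$ by $\auxbr{c_i}.\setbr{b}$); replace the terminating instructions of each $X_i$ by jumps so that control falls through to the next block, mirroring the block-chaining constructions used earlier in the paper; and finally append a relabelled copy of $Y$ in which every read $\inbr{i}.\getbr$ is redirected to $\auxbr{c_i}.\getbr$ and whose scratch registers are again disjoint. Checking, via the use and apply axioms and the fact that auxiliary registers start at $\False$, that this preserves the computed function and respects the membership constraints of $\IS$ is the delicate point; it goes through cleanly for instruction-sequence sets that can supply the required fresh auxiliary registers and jump room, notably $\ISbr$. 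The remaining length bookkeeping is routine once one notes that the only super-linear contribution is the $\sum_i \psize(X_i) \leq m\,l$ term.
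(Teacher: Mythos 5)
Your proposal is correct, and it is not so much a different route as \emph{the} route that the paper declines to spell out: the paper's entire proof of this proposition is the single sentence that both properties ``follow immediately from the definition of $\plred{\IS}{\FN}$'', with no construction given. Your block-chaining construction (scratch registers of each $X_i$ relabelled to be disjoint, the output focus of $X_i$ redirected to a fresh auxiliary register, each $\halt$ replaced by a jump to the next block, and the input reads of $Y$ redirected to those auxiliary registers), your length-$3$ projection sequences $\ptst{\inbr{i}.\getbr} \conc \outbr.\setbr{\True} \conc \halt$ for reflexivity, and your composition-of-reductions lemma for transitivity (with item~1 as its degenerate case) are exactly what that claimed immediacy has to cash out to, so in substance you and the paper agree. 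What your extra care buys is that it makes visible hypotheses the one-line proof hides, and your caveats are genuine rather than cosmetic. First, the composite must again lie in $\IS$: that is clear for $\IS = \ISbr$, but the proposition is asserted for arbitrary $\IS \subseteq \ISbr$, and the chaining construction has no reason to survive in, say, $\RISbr{0}{0}$ (no auxiliary registers, no jumps), while reflexivity itself needs $\IS$ to contain the projection sequences and $\FN$ to contain a function dominating $n \mapsto \max(n,3)$. Second, the composite has length roughly $m\,l + h'(m) \leq h(n)^{2} + h'(h(n))$, and closure of $\FN$ under composition alone does not make $\FN$ absorb the product term: it does for $\poly$, but for $\FN = \lin$ --- which is composition-closed and is used elsewhere in the paper --- the quadratic term escapes, so the proposition in its stated generality appears to need more than the stated closure condition, not merely a routine verification. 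In short, your proof is the correct proof for the instantiations the paper actually uses ($\IS = \ISbr$, $\FN = \poly$), and it is more explicit than the paper's about what is silently being assumed.
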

\begin{proof}
Both properties follow immediately from the definition of 
$\plred{\IS}{\FN}$.
\qed
\end{proof}
\begin{proposition}
\label{prop-NPLIS-compl-props}
Let $\IS \subseteq \ISbr$, and let $\FN$ be as above.
Then:
\begin{enumerate}
\item
if $\indfam{f_n}{n \in \Nat}$ is $\ndnuc{\IS}{\FN}$-complete and
$\indfam{f_n}{n \in \Nat} \in \nuc{\IS}{\FN}$, then 
$\ndnuc{\IS}{\FN} = \nuc{\IS}{\FN}$;
\item
if $\indfam{f_n}{n \in \Nat}$ is $\ndnuc{\IS}{\FN}$-complete,
$\indfam{g_n}{n \in \Nat} \in \ndnuc{\IS}{\FN}$ and
$\indfam{f_n}{n \in \Nat} \plred{\IS}{\FN} \indfam{g_n}{n \in \Nat}$,
then $\indfam{g_n}{n \in \Nat}$ is $\ndnuc{\IS}{\FN}$-complete.
\end{enumerate}
\end{proposition}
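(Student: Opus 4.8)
The plan is to derive both parts directly from Proposition~\ref{prop-plred-props}, which supplies the two structural facts about $\plred{\IS}{\FN}$ that the argument rests on: that a family reducible to a $\nuc{\IS}{\FN}$-family is itself in $\nuc{\IS}{\FN}$, and that $\plred{\IS}{\FN}$ is reflexive and transitive. Everything then reduces to chaining reductions through the completeness hypothesis, in exact analogy with the classical facts about \NPT-completeness.

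For part~1 I would prove the two inclusions separately. For $\ndnuc{\IS}{\FN} \subseteq \nuc{\IS}{\FN}$, take an arbitrary $\indfam{g_n}{n \in \Nat} \in \ndnuc{\IS}{\FN}$; the completeness of $\indfam{f_n}{n \in \Nat}$ gives $\indfam{g_n}{n \in \Nat} \plred{\IS}{\FN} \indfam{f_n}{n \in \Nat}$, and since $\indfam{f_n}{n \in \Nat} \in \nuc{\IS}{\FN}$ by hypothesis, Proposition~\ref{prop-plred-props}(1) yields $\indfam{g_n}{n \in \Nat} \in \nuc{\IS}{\FN}$. For the reverse inclusion $\nuc{\IS}{\FN} \subseteq \ndnuc{\IS}{\FN}$, I would reuse the argument of Theorem~\ref{theorem-PLIS-incl-NPLIS}: given $\indfam{g_n}{n \in \Nat} \in \nuc{\IS}{\FN}$, witness its membership in $\ndnuc{\IS}{\FN}$ by taking the monotonic constant function $h \equiv 0$ together with $\indfam{g_n}{n \in \Nat}$ itself, so that the empty sequence serves as a certificate for every $w$ and $g_n(w) = \True \Liff \Exists{c \in \Bool^{0}}{g_{n+0}(w\,c) = \True}$ holds trivially. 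Combining the two inclusions gives the claimed equality.

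For part~2 I would verify the two defining conditions of $\ndnuc{\IS}{\FN}$-completeness for $\indfam{g_n}{n \in \Nat}$. Membership, $\indfam{g_n}{n \in \Nat} \in \ndnuc{\IS}{\FN}$, is given outright. For the reducibility condition, let $\indfam{e_n}{n \in \Nat} \in \ndnuc{\IS}{\FN}$ be arbitrary; completeness of $\indfam{f_n}{n \in \Nat}$ gives $\indfam{e_n}{n \in \Nat} \plred{\IS}{\FN} \indfam{f_n}{n \in \Nat}$, the hypothesis gives $\indfam{f_n}{n \in \Nat} \plred{\IS}{\FN} \indfam{g_n}{n \in \Nat}$, and transitivity from Proposition~\ref{prop-plred-props}(2) chains these into $\indfam{e_n}{n \in \Nat} \plred{\IS}{\FN} \indfam{g_n}{n \in \Nat}$. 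This establishes completeness of $\indfam{g_n}{n \in \Nat}$.

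Neither part presents a genuine obstacle once Proposition~\ref{prop-plred-props} is in hand; the arguments are the non-uniform analogues of the standard facts that a complete problem lying in the lower class collapses the two classes and that completeness transfers upward along a reduction. The one point deserving a moment's care is the reverse inclusion in part~1: it is the only place where the inclusion $\nuc{\IS}{\FN} \subseteq \ndnuc{\IS}{\FN}$ is needed, and it relies on the constant function $0$ belonging to $\FN$ (so that $h \equiv 0$ is an admissible bound), which holds for every $\FN$ considered in this paper. The closure of $\FN$ under function composition, assumed throughout this section, is precisely what makes the transitivity used in part~2 available.
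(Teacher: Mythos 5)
Your proof is correct and takes essentially the same route as the paper's, which disposes of both parts in one sentence: part~1 from the definition of $\ndnuc{\IS}{\FN}$-completeness (implicitly together with Proposition~\ref{prop-plred-props}(1) and the inclusion $\nuc{\IS}{\FN} \subseteq \ndnuc{\IS}{\FN}$), and part~2 from the definition of completeness plus the transitivity of $\plred{\IS}{\FN}$. The only difference is one of explicitness: your spelling out of the reverse inclusion in part~1 via the constant-zero bound (as in Theorem~\ref{theorem-PLIS-incl-NPLIS}), and your remark that this needs the zero function to lie in $\FN$, makes visible a step the paper's one-line proof leaves entirely implicit.
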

\begin{proof}
The first property follows immediately from the definition of
$\ndnuc{\IS}{\FN}$-com\-pleteness, and the second property follows 
immediately from the definition of $\ndnuc{\IS}{\FN}$-completeness and 
the transitivity of $\plred{\IS}{\FN}$.
\qed
\end{proof}
The properties stated in Proposition~\ref{prop-NPLIS-compl-props} make
$\ndnuc{\IS}{\FN}$-completeness as defined above adequate for our 
purposes.
In the following proposition, non-uniform polynomial-length 
$\ISbr$-reducibility ($\plred{\ISbr}{\poly}$) is related to 
polynomial-time Karp reducibility ($\ptred$).%
\footnote
{For a definition of polynomial-time Karp reducibility, see e.g.\ 
 Chapter~2 of~\cite{AB09a}.}
\begin{proposition}
\label{prop-plred-psred}
Let $\indfam{f_n}{n \in \Nat}$ and $\indfam{g_n}{n \in \Nat}$ be
Boolean function families, and let $f$ and $g$ be the unique functions
$\funct{f,g}{\seqof{\Bool}}{\Bool}$ such that for each $n \in \Nat$, 
\linebreak[2]
for each $w \in \Bool^n$, $f(w) = f_n(w)$ and $g(w) = g_n(w)$.
Then $f \ptred g$ only if 
\mbox{$\indfam{f_n}{n \in \Nat} \plred{\ISbr}{\poly} 
       \indfam{g_n}{n \in \Nat}$.}
\end{proposition}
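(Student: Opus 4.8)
The plan is to convert the polynomial-time Karp reduction witnessing $f \ptred g$ into a non-uniform length reduction of the two families, using Proposition~\ref{prop-comput-boolcirc} to pass from circuits to instruction sequences. First I would unfold the hypothesis: $f \ptred g$ gives a polynomial-time computable $\funct{r}{\seqof{\Bool}}{\seqof{\Bool}}$ and a polynomial $p$ with $f(w) = g(r(w))$ and $|r(w)| \leq p(|w|)$ for all $w \in \seqof{\Bool}$. The idea is then to compute, for each input length, the individual output bits of $r$ by short instruction sequences and to feed these into the appropriate $g_m$, so that the composition reproduces $f_n$ through the identification $g(u) = g_{|u|}(u)$.

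The step I expect to be the main obstacle is that the definition of $\llred{\ISbr}{l}$ forces a single target arity $m$ to serve all inputs $w \in \Bool^n$, whereas an unrestricted reduction may send distinct length-$n$ inputs to outputs $r(w)$ of different lengths. Since each $g_m$ inspects inputs of length exactly $m$, one cannot in general pad the shorter outputs to a common length without disturbing the value of $g$, so the naive fix fails. I would therefore need the reduction to be length-regular, i.e.\ $|r(w)| = q(n)$ for a polynomial $q$ and all $w \in \Bool^n$, and this normalization is where the real care lies; I would either take $r$ in length-regular form or argue that the output length may be assumed to depend only on the input length.

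Granting a common output length $m = q(n)$, the remainder is routine. For each $j \in [1,m]$ let $\funct{h_j}{\Bool^n}{\Bool}$ return the $j$th bit of $r(w)$. Simulating the polynomial-time computation of $r$ on inputs of length $n$ by a circuit yields a Boolean circuit of size polynomial in $n$ whose $j$th output computes $h_j$; reading off that output node gives a single-output circuit for $h_j$ of size polynomial in $n$, and by Proposition~\ref{prop-comput-boolcirc} there is an $X_j \in \ISbr$ computing $h_j$ with $\psize(X_j)$ linear in that size, hence bounded by a polynomial $l$ in $n$ uniform over $j$. By the identification of the families with $f$ and $g$, for every $w \in \Bool^n$ we get $f_n(w) = f(w) = g(r(w)) = g_m(h_1(w),\ldots,h_m(w))$, which is precisely $f_n \llred{\ISbr}{l} g_m$. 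Finally, choosing a single polynomial $h$ dominating both $l$ and $m = q(n)$, we have $l,m \leq h(n)$ for every $n$, and hence $\indfam{f_n}{n \in \Nat} \plred{\ISbr}{\poly} \indfam{g_n}{n \in \Nat}$ by definition. Apart from the length-regularity arranged in the second step, the only ingredients are the slice-wise simulation of a polynomial-time reduction by polynomial-size circuits and the circuit-to-instruction-sequence translation of Proposition~\ref{prop-comput-boolcirc}.
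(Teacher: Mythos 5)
You have reconstructed, in expanded form, exactly the argument the paper itself uses: the paper disposes of this proposition in one sentence, appealing to the definitions of $\ptred$ and $\plred{\ISbr}{\poly}$, to the inclusion $\PT \subseteq \PTpoly$, and to Theorem~\ref{theorem-PLIS-is-PTpoly}; unfolding those references gives precisely your slice-wise simulation of the polynomial-time reduction by polynomial-size circuits followed by the circuit-to-instruction-sequence translation of Proposition~\ref{prop-comput-boolcirc}. Where you differ from the paper is that you notice an obstacle it passes over in silence: $f_n \llred{\ISbr}{l} g_m$ requires a \emph{single} target arity $m$ to serve all of $\Bool^n$, whereas a Karp reduction $r$ may send distinct length-$n$ inputs to outputs of different lengths.

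That obstacle is where your proof has a genuine gap, because the normalization you defer --- ``take $r$ in length-regular form or argue that the output length may be assumed to depend only on the input length'' --- cannot be carried out in general; indeed, the proposition fails without a further hypothesis. Let $g_2$ be the constant-$\True$ function and let $g_m$ be constant-$\False$ for every $m \neq 2$, so that $g$ is the language of all strings of length $2$, and let $f_n(b_1,\ldots,b_n) = b_1$ for $n \geq 1$ and $f_0() = \False$. The map $r$ sending $w$ to a fixed string of length $2$ when $w$ begins with $\True$ and to a fixed string of length $3$ otherwise is polynomial-time computable and witnesses $f \ptred g$: the reduction encodes its answer in the \emph{length} of its output. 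But every $g_m$ is constant, so $g_m(h_1(w),\ldots,h_m(w))$ is constant in $w$ and can never agree with the non-constant $f_1$; hence $f_1 \llred{\ISbr}{l} g_m$ holds for no $l$ and $m$, and the conclusion $\indfam{f_n}{n \in \Nat} \plred{\ISbr}{\poly} \indfam{g_n}{n \in \Nat}$ fails. So length-regularity of $r$, or alternatively a padding property of the target family such as projectivity in the sense of Section~\ref{sect-projective} (which makes padding $r(w)$ with $\False$'s up to a common polynomial length sound), is not a formality one can arrange; it is a missing hypothesis --- in your argument and equally in the paper's one-line proof. Your instinct to isolate this step was sound: note that in the paper's only application of the proposition, in the proof of Theorem~\ref{theorem-3SATC-is-NPLIS-compl}, the target is $\iiiSATC$, which is projective, so there the padding you wanted does go through.
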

\begin{proof}
This property follows immediately from the definitions of $\ptred$ 
and \smash{$\plred{\ISbr}{\poly}$}, the fact that 
$\PT \subseteq \PTpoly$ (which follows directly from the general 
definition of non-uniform complexity classes $\mathcal{C}/\FN$), and 
Theorem~\ref{theorem-PLIS-is-PTpoly}.
\qed
\end{proof}
The property stated in Proposition~\ref{prop-plred-psred} allows for
results concerning polynomial-time Karp reducibility to be reused when 
dealing with non-uniform polynomial-length $\ISbr$-reducibility.

We would like to call \NPLIS-completeness the counterpart of
\NPTpoly-completeness in the current setting, but the notion of
\NPTpoly-completeness looks to be absent in the literature on
complexity theory.
The closest to \NPTpoly-completeness that we could find is
$p$-completeness for pD, a notion introduced in~\cite{SV85a}.

Because \iiiSATC\ is closely related to \iiiSAT\ and
$\iiiSATC \in \NPLIS$, we expect \iiiSATC\ to be \NPLIS-complete.
\begin{theorem}
\label{theorem-3SATC-is-NPLIS-compl}
\iiiSATC\ is \NPLIS-complete.
\end{theorem}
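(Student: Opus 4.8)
The plan is to verify the two clauses of the definition of $\ndnuc{\ISbr}{\poly}$-completeness, i.e.\ \NPLIS-completeness, in turn. The membership clause $\iiiSATC \in \NPLIS$ is already available as Theorem~\ref{theorem-3SATC-in-NPLIS}, so the work lies entirely in the hardness clause: for every $\indfam{g_n}{n \in \Nat} \in \NPLIS$ I must produce a reduction $\indfam{g_n}{n \in \Nat} \plred{\ISbr}{\poly} \iiiSATC$. I would begin by unfolding the definition of $\NPLIS = \ndnuc{\ISbr}{\poly}$: there are an $h \in \poly$ and a family $\indfam{G_n}{n \in \Nat} \in \PLIS$ with $g_n(w) = \True \Liff \Exists{c \in \Bool^{h(n)}}{G_{n+h(n)}(w\,c) = \True}$ for all $n \in \Nat$ and $w \in \Bool^n$. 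Since $\PLIS = \PTpoly$ by Theorem~\ref{theorem-PLIS-is-PTpoly}, the definition of \PTpoly\ in terms of Boolean circuits supplies, for each $n$, a Boolean circuit $C_n$ computing $G_{n+h(n)}$ whose size is polynomial in $n+h(n)$, hence polynomial in $n$.

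Next I would fix $n$ and turn $C_n$ into an \iiiCNF-formula by the standard (Tseitin) transformation of a circuit into an equisatisfiable \iiiCNF-formula, which is exactly the classical ingredient behind the \NPT-completeness of \iiiSATC\ recorded in Theorem~\ref{theorem-3SATC-is-NP-compl}: take one propositional variable for each of the $n+h(n)$ inputs and each gate, add for every gate a fixed bundle of clauses of at most three literals forcing its variable to carry the gate's value, and add a unit clause forcing the output variable to be $\True$. The decisive point is to treat the two input blocks differently: the first $n$ input variables, standing for $w$, are pinned by unit clauses --- $(v_i)$ when $b_i = \True$ and $(\Lnot v_i)$ when $b_i = \False$ --- whereas the last $h(n)$ input variables, standing for the certificate $c$, are left unconstrained. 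The resulting formula $\phi_{n,w}$ is then satisfiable exactly when $\Exists{c}{C_n(w\,c) = \True}$, i.e.\ exactly when $g_n(w) = \True$, and its only dependence on $w$ is the choice, for each $i \in [1,n]$, between the two pinning clauses of $v_i$.

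Finally I would read off the reduction in the clause-presence encoding of \iiiSATC\ from Section~\ref{sect-conjecture}. Let $k$ be the number of variables of $\phi_{n,w}$; then $k$ is polynomial in $n$, so $m = \ndisj(k) = (4k^3 + 5k) / 3$ is too, and I take $\iiiSATC_m$ as the target. Each of the $m$ argument positions corresponds via the fixed bijection $\alpha_k$ to one possible clause, and its value as a function $h_j$ of $w$ records whether that clause occurs in $\phi_{n,w}$. By the locality noted above, every $h_j$ is either constant (for the gate clauses, the output clause and the absent slots) or equal to $b_i$ or to its complement for a single input index $i$ (for the two pinning clauses of $v_i$). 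Each such $h_j$ is computed by an instruction sequence from $\ISbrna$ of constant length --- for instance $\outbr.\setbr{\True} \conc \halt$ for the constant $\True$, $\halt$ for the constant $\False$, and $\ptst{\inbr{i}.\getbr} \conc \outbr.\setbr{\True} \conc \halt$ for $b_i$. Thus $g_n \llred{\ISbr}{l} \iiiSATC_m$ with $l = O(1)$ and $m$ polynomial in $n$, which is precisely $\indfam{g_n}{n \in \Nat} \plred{\ISbr}{\poly} \iiiSATC$, as required.

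I expect the main obstacle to be neither of the two classical ingredients --- the circuit-to-\iiiCNF\ step is routine and the instruction sequences realizing the $h_j$ are trivial --- but the bookkeeping that makes them mesh with the rigid combinatorial encoding of \iiiSATC: laying out $\phi_{n,w}$ so that its clauses are among those enumerated by $\alpha_k$, checking that only the pinning clauses depend on $w$ and each on a single bit, and confirming that the existential over the free certificate variables of $\phi_{n,w}$ lines up with the existential over $c$ in the definition of \NPLIS. A minor point worth recording is that the monotonicity of $h$ granted by the definition of \NPLIS\ is not needed anywhere in this argument.
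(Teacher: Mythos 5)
Your proof is correct, but it takes a genuinely different route from the paper's. The paper never passes through Boolean circuits in this argument: it takes the instruction sequence $X \in \ISbr$ that computes the verifier $g_{n+h(n)}$, normalizes it (a single occurrence of $\outbr.\setbr{\True}$, no out-of-range jumps, no trailing test instructions), and writes down a Cook--Levin-style formula $\phi_{b_1,\ldots,b_n}$ over variables $r_i$ (register contents) and $v_j$ (``instruction $u_j$ is executed'') whose satisfying assignments correspond to execution paths of $X$ reaching the unique instruction that sets the output register; the delicate part is the clause bundle for reads of auxiliary registers (the sets $B^{i,b}_{j,j'}$), which must reconstruct the last write to each auxiliary register along the path, and the handling of jumps. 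It then invokes a textbook formula-to-\iiiCNF\ transformation, the encoding from the proof of Theorem~\ref{theorem-3SATC-is-NP-compl}, and Proposition~\ref{prop-plred-psred} to turn the resulting polynomial-time map into the required non-uniform reduction. You instead outsource this work: Theorem~\ref{theorem-PLIS-is-PTpoly} converts the verifier into a polynomial-size circuit, the Tseitin transformation with pinned $w$-inputs and free certificate inputs yields the \iiiCNF-formula, and the clause-presence encoding of \iiiSATC\ is read off directly, each bit being a constant or a single (possibly negated) input bit. Your route buys brevity, modularity, and a sharper conclusion --- the reduction functions $h_j$ are computed by instruction sequences of constant length, a projection-style reduction, where the paper only obtains a polynomial length bound --- at the cost of leaning on the circuit characterization of \PTpoly\ (the paper's own proof of $\PLIS \subseteq \PTpoly$ goes through Turing machines with advice, so your argument implicitly also uses the classical equivalence of the two characterizations). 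The paper's route buys self-containedness within the instruction-sequence setting: it exhibits directly how execution of a single-pass instruction sequence is propositionally encodable, which is in the spirit of the paper. Your closing remark that monotonicity of $h$ is not needed is right and agrees with the paper, which states that monotonicity is used only for Theorems~\ref{theorem-char-NPLIS} and~\ref{theorem-NPLIS-is-NPTpoly}.
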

\begin{proof}
By Theorem~\ref{theorem-3SATC-in-NPLIS}, we have that
$\iiiSATC \in \NPLIS$.
It remains to prove that for all $\indfam{f_n}{n \in \Nat} \in \NPLIS$,
$\indfam{f_n}{n \in \Nat} \plred{\ISbr}{\poly} \iiiSATC$.

\pagebreak[2]
Suppose that $\indfam{f_n}{n \in \Nat} \in \NPLIS$. 
Let $\indfam{g_n}{n \in \Nat} \in \PLIS$ be such that 
there exists a monotonic $h \in \poly$ such that, 
for each $n \in \Nat$, for each $w \in \Bool^n$,
$f_n(w) = \True \Liff
 \Exists{c \in \Bool^{h(n)}}{g_{n+h(n)}(w\,c) = \True}$.
Such a $\indfam{g_n}{n \in \Nat}$ exists by the definition of \NPLIS.
Let $h \in \poly$ be such that, for each $n \in \Nat$, for each 
$w \in \Bool^n$,
$f_n(w) =\nolinebreak \True \Liff 
 \Exists{c \in \Bool^{h(n)}}{g_{n+h(n)}(w\,c) = \True}$.
Let $n \in \Nat$, and let $m = h(n)$.
Let $X \in \ISbr$ be such that $X$ computes $g_{n+m}$ and 
$\psize(X)$ is polynomial in $n + m$.

Assume that $\outbr.\setbr{\True}$ occurs only once in $X$, that 
$\fjmp{l}$ does not occur in $X$ at positions where there is no $l$th 
next primitive instruction, and that test instructions do not occur in 
$X$ at the last but one or last position.
These assumptions can be made without loss of generality: 
by Theorem~\ref{theorem-comput-output} we may assume without loss of 
generality that $\outbr.\setbr{\False}$ does not occur in $X$ and 
therefore multiple occurrences of $\outbr.\setbr{\True}$ can always be 
eliminated by replacing them with the exception of the last one by jump 
instructions, occurrences of $\fjmp{l}$ at positions where there is no 
$l$th next primitive instruction can always be eliminated by replacing 
them by $\fjmp{0}$, and occurrences of test instructions at the last but 
one or last position can be eliminated by adding once or twice 
$\fjmp{0}$ at the end.
Suppose that $X = u_1 \conc \ldots \conc u_k$, and let $l \in [1,k]$ be
such that $u_l$ is either $\outbr.\setbr{\True}$,
$\ptst{\outbr.\setbr{\True}}$ or $\ntst{\outbr.\setbr{\True}}$.

First of all, we look for a transformation that gives, for each
$b_1,\ldots,b_n \in \Bool$, a Boolean formula $\phi_{b_1,\ldots,b_n}$
such that $f_n(b_1,\ldots,b_n) = \True$ iff $\phi_{b_1,\ldots,b_n}$ is
satisfiable.
\begin{comment}
Notice that, for fixed initial states of the Boolean registers named
$\inbr{1},\ldots,\inbr{n}$, several execution paths through $X$ may be 
possible: the actual execution path depends on the initial states of the 
Boolean registers named $\inbr{n{+}1},\ldots,\inbr{n{+}m}$ too.
\end{comment}
We have that $f_n(b_1,\ldots,b_n) = \True$ iff there exist initial 
states of the Boolean registers named $\inbr{n{+}1},\ldots,\inbr{n{+}m}$ 
for which there exists an execution path through $X$ that reaches $u_l$ 
in case the initial states of the Boolean registers named 
$\inbr{1},\ldots,\inbr{n}$ are $b_1,\ldots,b_n$, respectively.
This brings up the formula $\phi_{b_1,\ldots,b_n}$ given below.
In this formula, propositional variables $r_1,\ldots,r_{n+m}$ and 
$v_1,\ldots,v_k$ are used.
The truth value assigned to $r_i$ ($i \in [1,n+m]$) is intended to 
indicate whether the content of the input register named $\inbr{i}$ is 
$\True$ and the truth value assigned to $v_j$ ($j \in [1,k]$) is 
intended to indicate whether the primitive instruction $u_i$ is 
executed.

For each $b_1,\ldots,b_n \in \Bool$, let $\phi_{b_1,\ldots,b_n}$ be 
$\LAND{i \in [1,n]}{} \chi_i \Land v_1 \Land v_l \Land
 \LAND{j \in [1,k]}{} \psi_j$, where: 
for each $i \in [1,n]$, 
$\chi_i$ is 
\begin{itemize}
\item
$r_i \phantom{\Lnot {}}$ if $b_i = \True$;
\item
$\Lnot r_i$ if $b_i = \False$; 
\end{itemize}
for each $j \in [1,k]$, $\psi_j$ is 
\begin{itemize}
\item
$v_j \Limpl v_{j+1}$ if $u_j \equiv a$;
\item
$v_j \Limpl v_{j+1} \phantom{{} \Land \Lnot v_{j+1}}$ 
if $u_j \equiv \ptst{f.\setbr{\True}}$ or 
   $u_j \equiv \ntst{f.\setbr{\False}}$;
\item
$v_j \Limpl \Lnot v_{j+1} \Land v_{j+2}$ 
if $u_j \equiv \ptst{f.\setbr{\False}}$ or 
   $u_j \equiv \ntst{f.\setbr{\True}}$;
\item
$(v_j \Land r_i \Limpl v_{j+1}) \Land
 (v_j \Land \Lnot r_i \Limpl \Lnot v_{j+1} \Land v_{j+2})$ 
if $u_j \equiv \ptst{\inbr{i}.\getbr}$;
\item
$(v_j \Land \Lnot r_i \Limpl v_{j+1}) \Land
 (v_j \Land r_i \Limpl \Lnot v_{j+1} \Land v_{j+2})$ 
if $u_j \equiv \ntst{\inbr{i}.\getbr}$;
\item
$(v_j \Land \LOR{j' \in B^{i,\True}_{1,j-1}}{\!} 
   (v_{j'} \Land \LAND{j'' \in B^{i,\False}_{j'+1,j-1}}{\!} \Lnot v_{j''})
     \Limpl v_{j+1}) \Land {}$ \\
$(v_j \Land \LOR{j' \in B^{i,\False}_{1,j-1}}{\!} 
   (v_{j'} \Land \LAND{j'' \in B^{i,\True}_{j'+1,j-1}}{\!} \Lnot v_{j''})
     \Limpl \Lnot v_{j+1} \Land v_{j+2})$ \\
if $u_j \equiv \ptst{\auxbr{i}.\getbr}$;
\item
$(v_j \Land \LOR{j' \in B^{i,\False}_{1,j-1}}{\!} 
   (v_{j'} \Land \LAND{j'' \in B^{i,\True}_{j'+1,j-1}}{\!} \Lnot v_{j''})
     \Limpl v_{j+1}) \Land {}$ \\
$(v_j \Land \LOR{j' \in B^{i,\True}_{1,j-1}}{\!} 
   (v_{j'} \Land \LAND{j'' \in B^{i,\False}_{j'+1,j-1}}{\!} \Lnot v_{j''})
     \Limpl \Lnot v_{j+1} \Land v_{j+2})$ \\
if $u_j \equiv \ntst{\auxbr{i}.\getbr}$;
\item
$\Lnot v_j 
 \phantom{v_{j+1} \Limpl \LAND{j' \in [j+l,j+l-1]}{} v_{j'} \Land {}}$ 
if $u_j \equiv \fjmp{0}$;
\item
$v_j \Limpl 
 \LAND{j' \in [j+1,j+l-1]}{\!} \Lnot v_{j'} \Land v_{j+l}$
if $u_j \equiv \fjmp{l}$ and $1 \leq l \leq k-j $;
\item
$v_j$ if $u_j \equiv \halt$; $\phantom{\LAND{j \in [1,l]}{}}$
\end{itemize}
where $B^{i,b}_{j,j'}$ is the set of all $j'' \in [j,j']$ for which 
$u_{j''}$ is either $\auxbr{i}.\setbr{b}$, $\ptst{\auxbr{i}.\setbr{b}}$ 
or $\ntst{\auxbr{i}.\setbr{b}}$.

If there exist initial states of the Boolean registers named 
$\inbr{n{+}1},\ldots,\inbr{n{+}m}$ for which there exists an execution 
path through $X$ that reaches $u_l$ in case the initial states of the 
Boolean registers named $\inbr{1},\ldots,\inbr{n}$ are $b_1,\ldots,b_n$, 
respectively, then $\phi_{b_1,\ldots,b_n}$ is satisfiable by assigning
truth values to the variables according to the intention mentioned 
above.
On the other hand, if $\phi_{b_1,\ldots,b_n}$ is satisfiable, then a
satisfying assignment indicates for which initial states of the Boolean 
registers named $\inbr{n{+}1},\ldots,\inbr{n{+}m}$ there exists an 
execution path through $X$ that reaches $u_l$ and which instructions are 
on this execution path.
Thus $f_n(b_1,\ldots,b_n) = \True$ iff $\phi_{b_1,\ldots,b_n}$ is
satisfiable.

For some $l \in \Nat$, $\phi_{b_1,\ldots,b_n}$ still has to be
transformed into a $w_{b_1,\ldots,b_n} \in \Bool^l$ such that
$\phi_{b_1,\ldots,b_n}$ is satisfiable iff
$\iiiSATC_l(w_{b_1,\ldots,b_n}) = \True$.
We look upon this transformation as a composition of two
transformations: first $\phi_{b_1,\ldots,b_n}$ is transformed into a
\iiiCNF-formula $\psi_{b_1,\ldots,b_n}$ such that
$\phi_{b_1,\ldots,b_n}$ is satisfiable iff $\psi_{b_1,\ldots,b_n}$ is
satisfiable, and next, for some $l \in \Nat$, $\psi_{b_1,\ldots,b_n}$ is
transformed into a $w_{b_1,\ldots,b_n} \in \Bool^l$ such that
$\psi_{b_1,\ldots,b_n}$ is satisfiable iff
$\iiiSATC_l(w_{b_1,\ldots,b_n}) = \True$.

It is easy to see that the size of $\phi_{b_1,\ldots,b_n}$ is polynomial
in $n$ and that \sloppy $\tup{b_1,\ldots,b_n}$ can be transformed into
$\phi_{b_1,\ldots,b_n}$ in time polynomial in $n$.
It is well-known that each Boolean formula $\psi$ can be transformed in
time polynomial in the size of $\psi$ into a \iiiCNF-formula $\psi'$,
with size and number of variables linear in the size of $\psi$, such
that $\psi$ is satisfiable iff $\psi'$ is satisfiable
(see e.g.\ Theorem~3.7 in~\cite{BDG88a}).
Moreover, it is known from the proof of
Theorem~\ref{theorem-3SATC-is-NP-compl} that every \iiiCNF-formula 
$\phi$ can be transformed in time polynomial in the size of $\phi$ into 
a $w \in \Bool^{\ndisj(k')}$, where $k'$ is the number of variables in
$\phi$, such that $\iiiSAT(\phi) = \iiiSATC(w)$.
From these facts, and Proposition~\ref{prop-plred-psred}, it follows
easily that $\indfam{f_n}{n \in \Nat}$ is non-uniform polynomial-length
$\ISbr$-reducible to $\iiiSATC$.
\qed
\end{proof}
The proof of Theorem~\ref{theorem-3SATC-is-NPLIS-compl} has been partly
inspired by the proof of the NP-completeness of SAT in~\cite{Coo71a}.

A known result about classical complexity classes turns out to be a
corollary of
Theorems~\ref{theorem-PLIS-is-PTpoly}, \ref{theorem-3SATC-is-NP-compl},
\ref{theorem-NPLIS-is-NPTpoly} and~\ref{theorem-3SATC-is-NPLIS-compl}.
\begin{corollary}
\label{corollary-equiv-conjectures}
$\NPT \not\subseteq \PTpoly$ iff\, $\NPTpoly \not\subseteq \PTpoly$.
\end{corollary}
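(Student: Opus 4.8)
The plan is to route both non-containment statements through the single pivotal condition $\iiiSATC \notin \PLIS$ and then chain the resulting biconditionals. The four theorems cited just before the corollary supply exactly what is needed: $\PLIS = \PTpoly$ and $\NPLIS = \NPTpoly$ permit free passage between the instruction-sequence classes and their classical counterparts, while the \NPT-completeness and \NPLIS-completeness of $\iiiSATC$ pin down membership of the hard instance in each lower class. The argument is essentially symmetric in the two sides, and the symmetry is the whole point.

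First I would treat the left-hand side. Since $\iiiSAT$ is \NPT-complete and $\PTpoly$ is closed under polynomial-time Karp reductions, $\NPT \not\subseteq \PTpoly$ is equivalent to $\iiiSAT \notin \PTpoly$: if $\iiiSAT \in \PTpoly$ then every \NPT\ problem, being Karp-reducible to $\iiiSAT$, also lies in $\PTpoly$, and the converse is immediate because $\iiiSAT \in \NPT$. This equivalence is already recorded in Section~\ref{sect-conjecture}. Theorem~\ref{theorem-equiv-conjectures} then carries it to the instruction-sequence side, giving that $\NPT \not\subseteq \PTpoly$ holds iff $\iiiSATC \notin \PLIS$.

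Next I would treat the right-hand side, where the completeness machinery does the work. Because $\iiiSATC \in \NPLIS$ (Theorem~\ref{theorem-3SATC-in-NPLIS}), the containment $\NPLIS \subseteq \PLIS$ forces $\iiiSATC \in \PLIS$; conversely, if $\iiiSATC \in \PLIS$ then, $\iiiSATC$ being \NPLIS-complete, the first property of Proposition~\ref{prop-NPLIS-compl-props} yields $\NPLIS = \PLIS$, whence $\NPLIS \subseteq \PLIS$. Thus $\iiiSATC \notin \PLIS$ iff $\NPLIS \not\subseteq \PLIS$. Finally $\PLIS = \PTpoly$ and $\NPLIS = \NPTpoly$ (Theorems~\ref{theorem-PLIS-is-PTpoly} and~\ref{theorem-NPLIS-is-NPTpoly}) rewrite $\NPLIS \not\subseteq \PLIS$ as $\NPTpoly \not\subseteq \PTpoly$. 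Concatenating the biconditionals from the two sides delivers the corollary.

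I do not expect a genuine obstacle: all the substantive content is already contained in the cited theorems, so the argument is a bookkeeping exercise in stringing equivalences together through $\iiiSATC \notin \PLIS$. The only step that is not a direct appeal to an earlier numbered result is the classical equivalence of $\NPT \not\subseteq \PTpoly$ with $\iiiSAT \notin \PTpoly$, which rests on closure of $\PTpoly$ under Karp reductions together with the \NPT-completeness of $\iiiSAT$; this is the one point to state carefully rather than pass over in silence.
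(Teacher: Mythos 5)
Your proof is correct and takes essentially the same route the paper intends: the paper gives no written-out argument, merely citing Theorems~\ref{theorem-PLIS-is-PTpoly}, \ref{theorem-3SATC-is-NP-compl}, \ref{theorem-NPLIS-is-NPTpoly} and~\ref{theorem-3SATC-is-NPLIS-compl}, and your chain of equivalences pivoting both non-containments through $\iiiSATC \notin \PLIS$ (completeness pinning down the hard instance, the coincidence theorems translating between the two settings) is exactly the intended bookkeeping. Your only cosmetic deviation is handling the left-hand side via $\iiiSAT$ and Theorem~\ref{theorem-equiv-conjectures} --- itself a repackaging of two of the four cited theorems --- rather than via the \NPT-completeness of $\iiiSATC$ directly.
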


\section{Projective Boolean Function Families}
\label{sect-projective}

In Section~\ref{sect-conjecture}, we have noticed that, for each 
$n \in \Nat$, $\iiiSATC_{n+1}$ can in essence handle all inputs that 
$\iiiSATC_n$ can handle because we have 
$\iiiSATC_n(b_1,\ldots,b_n) = \iiiSATC_{n+1}(b_1,\ldots,b_n,\False)$.
In this section, we come back to this phenomenon.

For each $m,n \in \Nat$ such that $m \geq n$, we define a 
\emph{projection} function 
$\funct{\pi^m_n}{(\Bool^m \to \Bool)}{(\Bool^n \to \Bool)}$ as follows:
% !!
\begin{ldispl}
\pi^m_n(f)(b_1,\ldots,b_n) = 
\smash
{f(b_1,\ldots,b_n,
   \overbrace{\False,\ldots,\False}^{m - n\; \x})}
\end{ldispl}
for all $\funct{f}{\Bool^m}{\Bool}$ and $b_1,\ldots,b_n \in \Bool$.

A \emph{projective Boolean function family} is a Boolean function family 
$\indfam{f_n}{n \in \Nat}$ such that $f_n = \pi^{n+1}_n(f_{n+1})$ for 
all $n \in \Nat$.

This means that in the case where a Boolean function family
$\indfam{f_n}{n \in \Nat}$ is projective, for each $m,n \in \Nat$ with 
$m > n$, $f_m$ can in essence handle all inputs that $f_n$ can handle.
For that reason, complexity classes that are restricted to projective 
Boolean function families are potentially interesting. 

Let $\IS \subseteq \ISbr$ and 
$\FN \subseteq \set{h \where \funct{h}{\Nat}{\Nat}}$.
Then $\pnuc{\IS}{\FN}$ is the class of all projective Boolean function 
families $\indfam{f_n}{n \in \Nat}$ that satisfy:
\begin{quote}
there exists an $h \in \FN$ such that
for all $n \in \Nat$ there exists an $X \in \IS$ such that
$X$ computes $f_n$ and $\psize(X) \leq h(n)$.
\end{quote}

Let $\IS \subseteq \ISbr$ and 
let $\FN \subseteq \set{h \where \funct{h}{\Nat}{\Nat}}$.
\begin{comment} 
be such that for all $h \in \FN$ there exists an $h' \in \FN$ such that 
for all $n \in \Nat$\, $h'(n) = h(n) + n$
\end{comment}
Then $\ndpnuc{\IS}{\FN}$ is the class of all projective Boolean function 
families $\indfam{f_n}{n \in \Nat}$ that satisfy:
\begin{quote}
there exist a monotonic $h \in \FN$ and a Boolean function family
$\indfam{g_n}{n \in \Nat} \in \pnuc{\IS}{\FN}$ such that, 
for all $n \in \Nat$, for all $w \in \Bool^n$:
\begin{ldispl}
f_n(w) = \True \Liff 
\Exists{c \in \Bool^{h(n)}}{g_{n+h(n)}(w\,c) = \True}\;.
\end{ldispl}%
\end{quote}

\begin{comment}
The complexity classes $\pnuc{\IS}{\FN}$ and $\ndpnuc{\IS}{\FN}$ are 
potentially interesting.
\end{comment}

It follows immediately from the definitions concerned that 
$\pnuc{\IS}{\FN}$ and $\ndpnuc{\IS}{\FN}$ are subsets of 
$\nuc{\IS}{\FN}$ and $\ndnuc{\IS}{\FN}$, respectively. 
$\pnuc{\ISbr\hsp{-.1}}{\poly}$ is a proper subset of \PLIS.
This is easy to see.
In Section~\ref{sect-complclass-PLIS}, we gave an example of a Boolean 
function family corresponding to an uncomputable function from 
$\seqof{\Bool}$ to $\Bool$ that belongs to \PLIS.
The Boolean function family concerned is not a projective Boolean 
function family, and consequently does not belong to 
$\pnuc{\ISbr\hsp{-.1}}{\poly}$.

The question arises whether the restriction to projective Boolean 
function families is a severe restriction.
We have that every Boolean function family is non-uniform linear-length 
$\ISbr$-reducible to a projective Boolean function family.

Below, we will write $\lin$ for the set 
$\set{h \where \funct{h}{\Nat}{\Nat} \Land h \mathrm{\,is \,linear}}$.

\begin{theorem}
\label{theorem-projective}
Let $\indfam{f_n}{n \in \Nat}$ be a Boolean function family.
Then there exists a projective Boolean function family 
$\indfam{g_n}{n \in \Nat}$ such that 
$\indfam{f_n}{n \in \Nat} \plred{\ISbr}{\lin}
 \indfam{g_n}{n \in \Nat}$.
\end{theorem}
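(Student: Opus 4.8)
The plan is to exploit a reformulation of projectivity. Since $g_n = \pi^{n+1}_n(g_{n+1})$ unfolds to $g_n(b_1,\ldots,b_n) = g_{n+1}(b_1,\ldots,b_n,\False)$, a Boolean function family is projective exactly when its value is invariant under appending a $\False$ to the argument; equivalently, under the identification of a family with a function on $\seqof{\Bool}$, a projective family corresponds to a function whose value on $w$ depends only on $w$ with its trailing $\False$'s deleted. I would therefore build the target family $\indfam{g_n}{n \in \Nat}$ from a single function $G$ defined on the sequences that are empty or end in $\True$ (the representatives of sequences modulo trailing $\False$'s), by setting $g_m(w) = G(\mathrm{strip}(w))$, where $\mathrm{strip}(w)$ denotes $w$ with all trailing $\False$'s removed. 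This makes $\indfam{g_n}{n \in \Nat}$ projective by construction, since deleting trailing $\False$'s is unaffected by first appending one more $\False$.

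It remains to choose $G$ so that each $f_n$ reduces to some $g_m$. First I would fix the arity-recovering encoding $e(b_1,\ldots,b_n) = (b_1,\ldots,b_n,\True)$ of length $n+1$, and define $G$ on a nonempty sequence $s$ ending in $\True$ by $G(s) = f_{|s|-1}(s_1,\ldots,s_{|s|-1})$ (and on the empty sequence as $\False$, say). Because $e(b_1,\ldots,b_n)$ ends in $\True$, $\mathrm{strip}$ leaves it unchanged, so $g_{n+1}(e(b_1,\ldots,b_n)) = G(e(b_1,\ldots,b_n)) = f_n(b_1,\ldots,b_n)$.

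For the reduction $f_n \llred{\ISbr}{l} g_{n+1}$ I would take the $n+1$ functions $\funct{h_1,\ldots,h_{n+1}}{\Bool^n}{\Bool}$ given by the projections $h_i(b_1,\ldots,b_n) = b_i$ for $i \in [1,n]$ and the constant $h_{n+1}(b_1,\ldots,b_n) = \True$, so that $(h_1(b),\ldots,h_{n+1}(b)) = e(b)$ and hence $f_n(b) = g_{n+1}(h_1(b),\ldots,h_{n+1}(b))$ as required. Each $h_i$ is computed by an instruction sequence from $\ISbr$ of constant length: the projection by $\ptst{\inbr{i}.\getbr} \conc \outbr.\setbr{\True} \conc \halt$ and the constant by $\outbr.\setbr{\True} \conc \halt$. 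Thus $l$ may be taken to be the constant $3$, while the target arity is $m = n+1$; taking a linear bound $h$ (for instance $h(n) = n+3$) dominates both $l$ and $m$, witnessing $\indfam{f_n}{n \in \Nat} \plred{\ISbr}{\lin} \indfam{g_n}{n \in \Nat}$.

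I expect the only real subtlety, and the point I would check most carefully, to be the compatibility of two competing pulls: projectivity forces the value to ignore trailing $\False$'s, yet the encoded input must still reveal the original arity $n$ in order to select the right $f_n$. The trailing marker $\True$ in $e$ is precisely what reconciles these, since it guarantees that $\mathrm{strip}$ recovers $e(b)$ verbatim and therefore that $n = |\mathrm{strip}(w)| - 1$ is readable from any padded input. It is worth stressing that $g$ need not be computable — it inherits the full, possibly uncomputable, information of $f$ — which is harmless here because only the coordinate maps $h_i$, and not $g$ itself, are required to be realized by short instruction sequences.
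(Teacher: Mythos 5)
Your proof is correct, and while it shares its skeleton with the paper's proof --- an end-marker encoding that keeps the original arity recoverable under padding with $\False$'s, plus a reduction built from coordinate projections and marker constants, each computed by a constant-length instruction sequence such as $\ptst{\inbr{i}.\getbr} \conc \outbr.\setbr{\True} \conc \halt$ --- it differs in two substantive ways, both to your advantage. First, the encodings differ: the paper interleaves a marker after every data bit, sending $(b_1,\ldots,b_n)$ to $(b_1,b'_1,\ldots,b_n,b'_n)$ with $b'_i = \True$ for $i \neq n$ and $b'_n = \False$ (target arity $2n$), whereas you append a single $\True$ (target arity $n+1$); both bounds are in $\lin$, so this is mostly a matter of economy. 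Second, and more importantly, the paper defines its witness $g_{2n}$ by searching for the first $\False$ marker and returning $\False$ when no marker exists, and then must verify projectivity separately; you instead define $g_m(w) = G(\mathrm{strip}(w))$, which makes projectivity hold for \emph{every} input by construction. This is not merely stylistic: because the paper uses $\False$ both as the end marker and as the padding symbol, padding a marker-free input creates a marker that was not there, and the paper's witness as literally defined is in fact not projective --- for instance $g_1(b) = g_0() = \False$, while $\pi^2_1(g_2)(b) = g_2(b,\False) = f_1(b)$, which need not be $\False$. Your choice of $\True$ as terminator and $\False$ as padding resolves exactly the tension between projectivity and arity-recovery (the ``competing pulls'' you flag at the end), so your argument is the more watertight of the two.
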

\begin{proof}
The idea is to convert inputs $(b_1,\ldots,b_n)$ into
$(b_1,b'_1,\ldots,b_n,b'_n)$, where $b'_i = \True$ if $i \neq n$ and
$b'_n = \False$, because the converted inputs can be recovered after 
additions at the end.
This conversion has been used before to prove 
Theorem~\ref{theorem-char-NPLIS}.

For each $n \in \Nat$, for each $i \in [1,n]$, 
let $\funct{h^n_{2i-1}}{\Bool^n}{\Bool}$ be defined by 
$h^n_{2i-1}(b_1,\ldots,b_n) = b_i$ and 
let $\funct{h^n_{2i}}{\Bool^n}{\Bool}$   be defined by 
$h^n_{2i}(b_1,\ldots,b_n) = \True$  if $i \neq n$ and
$h^n_{2i}(b_1,\ldots,b_n) = \False$ if $i = n$.
Clearly, these functions can be computed by instruction sequences from
$\ISbr$ whose lengths are linear in $n$.
Therefore, we are done with the proof if we show that there exists a 
projective Boolean function family $\indfam{g_n}{n \in \Nat}$ such that 
for all $n \in \Nat$:
\begin{ldispl}
f_n(b_1,\ldots,b_n) = 
g_{2n}(h^n_1(b_1,\ldots,b_n),\ldots,h^n_{2n}(b_1,\ldots,b_n))\;.
\end{ldispl}
% !!
A witness is the projective Boolean function family 
$\indfam{g_n}{n \in \Nat}$ with, for each $n \in \Nat$, 
$\funct{g_{2n}}{\Bool^{2n}}{\Bool}$ defined by 
$g_{2n}(b_1,b'_1,\ldots,b_n,b'_n) = f_m(b_1,\ldots,b_m)$, where
$m$ is the unique $j \in [1,n]$ such that $b'_i = \True$ for all 
$i \in [1,j-1]$ and $b'_j = \False$ if such an $m$ exists and
$g_{2n}(b_1,b'_1,\ldots,b_n,b'_n) = \False$ otherwise; and
$\smash{\funct{g_{2n+1}}{\Bool^{2n+1}}{\Bool}}$ defined by 
$g_{2n+1}(b_1,b'_1,\ldots,b_n,b'_n,b) =
 g_{2n}(b_1,b'_1,\ldots,b_n,b'_n)$.
\qed
\end{proof}

\begin{comment}
Let $f$ and $g$ be the functions from $\seqof{\Bool}$ to $\Bool$ 
corresponding to the Boolean function families 
$\indfam{f_n}{n \in \Nat}$ and $\indfam{g_n}{n \in \Nat}$, respectively,
from the proof of Theorem~\ref{theorem-projective} given above.
Then it is straightforward to show that $f \ptred g$.
This indicates that we could have sharpened 
Theorem~\ref{theorem-projective}. 
However, this would have brought us outside the setting of single-pass
instruction sequences.
\end{comment}

The following result is a corollary of Theorem~\ref{theorem-projective}
and the definitions of \PLIS\ and $\pnuc{\ISbr\hsp{-.1}}{\poly}$.
\begin{corollary}
\label{corollary-projective}
Let $\indfam{f_n}{n \in \Nat} \in \PLIS$.
Then there exists a
$\indfam{g_n}{n \in \Nat} \in \pnuc{\ISbr\hsp{-.1}}{\poly}$ such that
$\indfam{f_n}{n \in \Nat} \plred{\ISbr}{\lin}
 \indfam{g_n}{n \in \Nat}$.
\end{corollary}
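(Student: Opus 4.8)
The plan is to take the very projective Boolean function family produced by Theorem~\ref{theorem-projective} and merely verify that it lies in $\pnuc{\ISbr\hsp{-.1}}{\poly}$. Theorem~\ref{theorem-projective} already supplies a projective $\indfam{g_n}{n \in \Nat}$ together with the reduction $\indfam{f_n}{n \in \Nat} \plred{\ISbr}{\lin} \indfam{g_n}{n \in \Nat}$, so the reducibility half of the claim comes for free. Since, by definition, $\pnuc{\ISbr\hsp{-.1}}{\poly}$ is exactly the class of projective families lying in $\nuc{\ISbr}{\poly} = \PLIS$, and projectivity of this $g$ is guaranteed by Theorem~\ref{theorem-projective}, everything reduces to showing that this particular $g$ can be computed by polynomial-length instruction sequences from $\ISbr$.

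Recall that the witness built in the proof of Theorem~\ref{theorem-projective} satisfies $g_{2n}(b_1,b'_1,\ldots,b_n,b'_n) = f_m(b_1,\ldots,b_m)$, where $m$ is the least $j$ with $b'_j = \False$ (and $b'_i = \True$ for $i < j$), the value being $\False$ when no such $j$ exists, while $g_{2n+1}$ simply drops its last argument. Because $\indfam{f_n}{n \in \Nat} \in \PLIS$, I would fix for each $m$ an instruction sequence $Y_m \in \ISbr$ that computes $f_m$ with $\psize(Y_m) \leq p(m)$ for a fixed polynomial $p$. For input length $N = 2n$ the instruction sequence $X_N$ computing $g_{2n}$ then has two parts. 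A prologue scans the marker registers with a chain of pairs $\ntst{\inbr{2j}.\getbr} \conc \fjmp{d_j}$: reading $b'_j = \False$ lets control fall onto the forward jump $\fjmp{d_j}$ into the $j$th body block, whereas reading $b'_j = \True$ skips that jump and proceeds to the next test; if every marker reads $\True$, control reaches a closing $\halt$, leaving the output register at its initial value $\False$, which is precisely $g_{2n}$ in that case. The body is the concatenation, for $m = 1,\ldots,n$, of a copy of $Y_m$ in which each input focus $\inbr{i}$ is relabelled to $\inbr{2i-1}$, so that this copy reads the data bits $b_1,\ldots,b_m$ sitting in the odd-numbered registers. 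Since each $Y_m$ terminates on every input, exactly one block is ever executed and execution never falls through into a later block; no auxiliary or output register is shared between runs. For odd $N = 2n+1$ the sequence ignores register $\inbr{2n+1}$ and runs the sequence for $g_{2n}$.

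The length bound is then routine: the prologue contributes $O(n)$ instructions, and relabelling foci changes neither the length of a $Y_m$ nor its internal $\fjmp{l}$ offsets, so the body contributes $\sum_{m=1}^{n} \psize(Y_m) \leq \sum_{m=1}^{n} p(m) \leq n\,p(n)$; hence $\psize(X_N)$ is polynomial in $N$. I expect the only point demanding genuine care to be the bookkeeping of the forward-jump distances $d_j$ that wire the prologue to the correct body block while skipping the preceding blocks, since each $d_j$ depends on the cumulative lengths $\psize(Y_1),\ldots,\psize(Y_{j-1})$; but this is a finite, mechanical computation that leaves the polynomial bound intact. With $g$ thus shown to lie in $\PLIS$ and known to be projective, we obtain $g \in \pnuc{\ISbr\hsp{-.1}}{\poly}$, and the linear-length reduction inherited from Theorem~\ref{theorem-projective} completes the proof.
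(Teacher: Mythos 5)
Your proposal is correct and takes essentially the same route as the paper: the paper states this corollary without an explicit proof, as following from Theorem~\ref{theorem-projective} together with the definitions of \PLIS\ and $\pnuc{\ISbr\hsp{-.1}}{\poly}$, which implicitly means reusing the witness family from the theorem's proof and observing that, when $\indfam{f_n}{n \in \Nat} \in \PLIS$, that witness is itself computable by polynomial-length instruction sequences. Your marker-scanning prologue with forward jumps into relabelled copies of the sequences $Y_1,\ldots,Y_n$ is precisely the kind of ``case-switch'' construction the paper leaves to the reader, and your length accounting and termination argument (only one block ever executes, since each $Y_m$ must reach $\halt$ on every input in order to compute $f_m$) are sound.
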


\section{Concluding Remarks}
\label{sect-concl}

We have presented an approach to non-uniform complexity which is based 
on the simple idea that each Boolean function can be computed by a 
single-pass instruction sequence that contains only instructions to read 
and write the contents of Boolean registers, forward jump instructions, 
and a termination instruction.

We have answered various questions that arise from this approach, but 
many open questions remain.
We mention:
\begin{itemize}
\item
We do not know whether Theorem~\ref{theorem-hierarchy-NPLIS} can be 
sharpened.
In particular, it is an open question whether, for each $k \in \Nat$, 
$\ndnuc{\ISbr}{O(n^k)} \subset \ndnuc{\ISbr}{O(n^{k+1})}$.
\item
We know little about complexity classes $\nuc{\IS}{\FN}$ where 
$\IS \subset \ISbr$.
In particular, it is an open question whether:
\begin{itemize}
\item 
$\nuc{\ISbrna\hsp{-.1}}{\poly} \subset \PLIS$;
\item
for each $l \in \Nat$, 
$\nuc{\RISbr{0}{l}}{\poly} \subset \nuc{\ISbrna\hsp{-.1}}{\poly}$;
\item
for each $k \in \Nat$, 
$\nuc{\ISbrna}{O(n^k)} \subset \nuc{\ISbr}{O(n^k)}$;
\item
for each $k,l \in \Nat$, 
$\nuc{\RISbr{0}{l}}{O(n^k)} \subset \nuc{\ISbrna\hsp{-.1}}{O(n^k)}$.
\end{itemize}
\item
Likewise, we know little about complexity classes $\ndnuc{\IS}{\FN}$ 
where $\IS \subset \ISbr$.
It is also an open question whether: 
\begin{itemize}
\item 
$\ndnuc{\ISbrna\hsp{-.1}}{\poly} \subset \NPLIS$;
\item
for each $l \in \Nat$, 
$\ndnuc{\RISbr{0}{l}}{\poly} \subset \ndnuc{\ISbrna\hsp{-.1}}{\poly}$;
\item
for each $k \in \Nat$, 
$\ndnuc{\ISbrna}{O(n^k)} \subset \ndnuc{\ISbr}{O(n^k)}$;
\item
for each $k,l \in \Nat$, 
$\ndnuc{\RISbr{0}{l}}{O(n^k)} \subset \ndnuc{\ISbrna\hsp{-.1}}{O(n^k)}$.
\end{itemize}
\item
We also know little about the connections between complexity classes 
$\nuc{\IS}{\FN}$ and $\ndnuc{\IS}{\FN}$ with $\IS \subset \ISbr$ and 
classical complexity classes.
In particular, it is an open question whether there are classical 
complexity classes that coincide with the complexity classes
$\nuc{\RISbr{0}{l}}{\poly}$, $\ndnuc{\ISbrna\hsp{-.1}}{\poly}$, and
$\ndnuc{\RISbr{0}{l}}{\poly}$.
\end{itemize}
There are not yet indications that the above-mentioned open questions 
concerning proper inclusions of complexity classes $\nuc{\IS}{\FN}$ and 
$\ndnuc{\IS}{\FN}$ with $\IS \subset \ISbr$ are interdependent.

\pagebreak[2]
It is easy to see that $\nuc{\ISbrna}{\poly}$ coincides with the 
classical complexity class $\mathrm{L/poly}$.
It is well-known that, for all $\funct{f}{\seqof{\Bool}}{\Bool}$,
$f \in \mathrm{L/poly}$ iff $f$ has polyno\-mial-size branching programs
(see e.g.\ Theorem~4.53 in~\cite{Thi00a}).%
\footnote
{$\mathrm{L}$ is the class of all $\funct{f}{\seqof{\Bool}}{\Bool}$ that
are logarithmic-space computable, see e.g.\ Chapter~4 of~\cite{AB09a}.}
Let $\indfam{f_n}{n \in \Nat} \in \nuc{\ISbrna\hsp{-.1}}{\poly}$.
Then, for all $n \in \Nat$, the thread produced by the instruction 
sequence that computes $f_n$ is in essence a branching program and its 
size is polynomially bounded in $n$.
As a consequence of this, $\nuc{\ISbrna}{\poly}$ coincides with 
$\mathrm{L/poly}$.

The approaches to computational complexity based on loop
programs~\cite{MR67a}, straight-line programs~\cite{GLF77a}, and
branching programs~\cite{BDFP86a} appear to be the closest related to
the approach followed in this paper.

The notion of loop program is far from abstract or general: a loop
program consists of assignment statements and possibly nested loop
statements of a special kind.
Loop programs are nevertheless closer to instruction sequences than
Turing machines or Boolean circuits.
After a long period of little interest, there is currently a revival of
interest in the approach to issues relating to non-uniform
computational complexity based on loop programs (see
e.g.~\cite{BJK08a,KN04a,NW06a}).
The notion of loop program used in recent work on computational 
complexity is usually more general than the one originally used.

The notion of straight-line program is relatively close to the notion
of single-pass instruction sequence: a straight-line program is a
sequence of steps, where in each step a language is generated by
selecting an element from an alphabet or by taking the union,
intersection or concatenation of languages generated in previous steps.
In other words, straight-line programs can be looked upon as single-pass 
instruction sequences with special basic instructions, and without test 
and jump instructions.
To our knowledge, the notion of straight-line program is only used in 
the work presented in~\cite{BDG85a,GLF77a}.

The notion of branching program is actually a generalization of the
notion of decision tree from trees to graphs, so the term branching
program seems rather far-fetched.
However, branching programs are in essence threads, i.e.\ the objects 
that we use to represent the behaviours produced by instruction 
sequences under execution.
Branching programs are related to non-uniform space complexity like
Boolean circuits are related to non-uniform time complexity.
Like the notion of Boolean circuit, the notion of branching program
looks to be lasting in complexity theory 
(see e.g.~\cite{Thi00a,Weg00a}).

The complexity class \NPLIS\ can alternatively be defined in the same 
style as \PLIS\ in a setting that allows instruction sequence splitting.
In~\cite{BM08g}, we introduce an extension of \PGA\ that allows 
single-pass instruction sequence splitting and an extension of \BTA\ 
with a behavioural counterpart of instruction sequence splitting that is 
reminiscent of thread forking, and define \NPLIS\ in this alternative 
way.

\begin{comment}
\subsubsection*{Acknowledgements}
This research was partly carried out in the framework of the
Jacquard-project Symbiosis, which was funded by the Netherlands
Organisation for Scientific Research (NWO).
\end{comment}

\bibliographystyle{splncs03}
\bibliography{IS}

\end{document}